\algnewcommand\algorithmicinput{\textbf{Input:}}
\algnewcommand\algorithmicoutput{\textbf{Output:}}
\algnewcommand\Input{\item[\algorithmicinput]}
\algnewcommand\Output{\item[\algorithmicoutput]}
\algnewcommand\algorithmicgoto{\textbf{GoTo }}
\algnewcommand\GoTo{\item[\algorithmicgoto]}
\tikzstyle{vert}=[circle,draw=black,minimum size=8pt,inner sep=1pt]
\tikzstyle{vertex2}=[circle,draw=black,minimum size=15pt,inner sep=2pt]
\tikzstyle{edge}=[]
\tikzstyle{ypath}=[ultra thick]
\tikzstyle{dottedEdge}=[dotted,thick]
\tikzstyle{small-vertex}=[circle,draw=black,minimum size=6pt,inner sep=0pt,fill=white]
\tikzstyle{thinedges}=[draw=gray!30]
 \tikzstyle{boxes}=[draw,thick, rounded corners=3mm,text width=2.7cm,align=center,text opacity=1,fill opacity=1,fill=white]
\tikzstyle{unk}=[fill=gray!25!white]
\tikzstyle{nodest}=[circle,draw,minimum size=0.55cm,inner sep=.7pt]
\newtheorem{corollary}{Corollary}
\newtheorem{observation}{Observation}
\crefname{table}{Table}{Tables}
\crefname{figure}{Figure}{Figures}
\crefname{cor}{Corollary}{Corollaries}
\crefname{step}{Step}{Steps}
\crefname{rrulev}{Rule}{Rules}
\crefname{thm}{Theorem}{Theorems}
\crefname{obs}{Observation}{Observations}
\crefname{lem}{Lemma}{Lemmas}
\crefname{claim}{Claim}{Claims}
\crefname{section}{Section}{Sections}
\crefname{subsection}{Section}{Sections}
\crefname{figure}{Figure}{Figures}
\crefname{algorithm}{Algorithm}{Algorithms}
\crefname{proposition}{Proposition}{Propositions}
\crefname{theorem}{Theorem}{Theorem}
\crefname{lemma}{Lemma}{Lemmas}
\crefname{construction}{Construction}{Constructions}
\crefname{observation}{Observation}{Observations}
\crefname{rrule}{Rule}{Rules}
\newcommand{\decprob}[3]{%
  \begin{center}%
    \begin{minipage}{0.95\linewidth}%
      \textsc{#1}\\
      \textbf{Input:} #2\\
      \textbf{Question:} #3
    \end{minipage}%
  \end{center}%
}
\newcommand{\gettikzxy}[3]{%
  \tikz@scan@one@point\pgfutil@firstofone#1\relax
  \edef#2{\the\pgf@x}%
  \edef#3{\the\pgf@y}%
}
\newcommand{\select}{+}
\newcommand{\unselect}{-}
\newcommand{\force}{*}
\newcommand{\tbf}[1]{\textbf{#1}}
\def\NAT@spacechar{~}
\newcommand{\EFDGN}{\textsc{EF-DG~(\#)}\xspace}
\newcommand{\EFODGN}{\textsc{EF1-DG~(\#)}\xspace}
\newcommand{\EFDGW}{\textsc{EF-DG~(\euro)}\xspace}
\newcommand{\EFODGW}{\textsc{EF1-DG~(\euro)}\xspace}
\newcommand{\EFDGNW}{\textsc{EF-DG~(\#/\euro)}\xspace}
\newcommand{\EFODGNW}{\textsc{EF1-DG~(\#/\euro)}\xspace}
\newcommand{\EFDG}{\textsc{EF-DG}\xspace}
\newcommand{\EFODG}{\textsc{EF1-DG}\xspace}
\newcommand{\subsetSum}{\textsc{Subset Sum}\xspace}
\newcommand{\partition}{\textsc{Partition}\xspace}
\newcommand{\setCover}{\textsc{Set Cover}\xspace}
\newcommand{\kSum}{\textsc{$k$-Sum}\xspace}
\newcommand{\kpSum}{\textsc{$\ge k$-Sum}\xspace}
\newcommand{\kmSum}{\textsc{$\le k$-Sum}\xspace}
\newcommand{\minsubsetSum}{\textsc{Minimum Size Subset Sum}\xspace}
\newcommand{\maxsubsetSum}{\textsc{Maximum Size Subset Sum}\xspace}
\newcommand{\mcknapsack}{\textsc{Multiple-Choice Knapsack}\xspace}
\newcommand{\km}{k^-}
\newcommand{\kp}{k^+}
\newcommand{\lm}{\ell^-}
\newcommand{\lp}{\ell^+}
\newcommand{\wa}{w_a}
\newcommand{\wrr}{w_r}
\title{Multivariate Algorithmics for Eliminating Envy  by Donating Goods}
\author[1]{Niclas Boehmer}
\author[2]{Robert Bredereck}
\author[1]{Klaus Heeger}
\author[3]{Du\v{s}an Knop}
\author[4]{Junjie Luo}
\affil[1]{ \small Technische Universit{\"a}t Berlin,  Algorithmics and Computational Complexity, Berlin, Germany\protect\\
\{niclas.boehmer,heeger\}@tu-berlin.de}
\affil[2]{ \small Humboldt-Universit{\"a}t zu Berlin, Institut f{\"u}r Informatik, Algorithm Engineering, Berlin, Germany\protect\\ robert.bredereck@hu-berlin.de}
\affil[3]{ \small Czech Technical University in Prague, Prague, Czech Republic\protect\\ dusan.knop@fit.cvut.cz}
\affil[4]{ \small Nanyang Technological University, Singapore\protect\\ junjie.luo@ntu.edu.sg}
\date{\today}
\begin{document}

\maketitle

\begin{abstract}
	Fairly dividing a set of indivisible resources to a set of agents is of utmost importance in some applications.
	However, after an allocation has been implemented the preferences of agents might change and envy might arise.
	We study the following problem to cope with such situations:
	Given an allocation of indivisible resources to agents with additive utility-based preferences, is it possible to socially donate
	some of the resources (which means removing these resources from the allocation instance)
	such that the resulting modified allocation is envy-free (up to one good).
	We require that the number of deleted resources and/or the caused utilitarian welfare loss of
	the allocation are bounded. We conduct a thorough study of the
	(parameterized) computational complexity of this problem considering various
	natural and problem-specific parameters (e.g., the number of agents, the number
	of deleted resources, or the maximum number of resources assigned to an agent
	in the initial allocation) and different preference models, including
	unary and 0/1-valuations.
	In our studies, we obtain a rich set of (parameterized) tractability and intractability results
	and discover several surprising contrasts, for instance, between the two
	closely related fairness
	concepts envy-freeness and envy-freeness up to one good and between the
	influence of the
	parameters maximum number and welfare of the deleted resources.
\end{abstract} 

\section{Introduction}

Zarah is in big troubles due to numerous complains about an unfair allocation of resources.
Alice thinks that Bob is much better off because of his new screen and laptop.
Bob and Carol explain that Dan's room is way bigger than that of everyone else
and that only his key can unlock the backdoor.
Dan complains about Alice owning a new tablet and keyboard,
while everyone admires Carol's mouse and that she has a large fridge in her room just for herself.
While we leave it to the reader whether the protesters are Zarah's children, PhD students
or employees, we remark that this situation needs to be cleared quickly,
because envy blocks Zarah's prot\'{e}g\'{e}s from doing anything other than complaining.
Since an envy-free reallocation of the resources turned out to be impossible in
her case,
Zarah implements another effective solution:
She decides that Carol's fridge is now usable by everyone and puts it into Dan's room to take away his extra space
while donating Alice's keyboard and Bob's screen to the orphanage.
Doing so, Zarah completely eliminates all envy between her prot\'{e}g\'{e}s.

Real-world allocations, as in our toy example, are often not envy-free for
various reasons (even in envy-free allocations, envy can emerge if
preferences change over time).
While reallocating resources might generally be an option, this can
be very expensive or even impossible (e.g., Alice's room might be too small for Carol's fridge). Moreover, envy-free (re)allocations may simply not exist.
Nevertheless, the need for envy-freeness is undoubtful in some applications
(such as heritage or divorce disputes), so that every possible way
out should be considered.
This work focuses on one of the most natural such possibilities:
Given an allocation of resources, we ask to make the allocation envy-free (EF)
or envy-free up to one good (EF1) by donating, that is, taking away some of the resources.
Since empty allocations are envy-free but obviously unwanted, we additionally
aim for an
upper bound on the number of resources being donated and/or on the social
welfare decrease.

There is also a more subtle interpretation of donating a resource as sharing it, i.e., making it accessible to everyone. If reallocating resources is impossible or not sufficient, then making a resource accessible to everyone (like Carol's fridge in our example) can be a very attractive way out, increasing the overall social welfare.

\subsection{Related Work}
We are not aware of previous work capturing exactly our setting but remark that
the idea of improving the fairness of an allocation is trending and considered
in many different ways and models.
We give an overview of the most related work.

Segal-Halevi~\cite{DBLP:conf/ijcai/Segal-Halevi18} studied a model similar to
ours for divisible resources, where an initial \emph{unfair} allocation of a
cake is given, and the goal is to redivide the cake to balance fairness and
ownership rights.
The two most prominent differences to our work are that he
considered divisible resources (in contrast to indivisible resources) and that
he assumed that resources can only be reallocated and not donated (while we do
not allow for reallocating resources but only for donating them).

For indivisible resources, assuming that agents have linear preferences over
individual resources, Aziz et al.~\cite{DBLP:conf/ijcai/AzizSW16} studied
the problem of adding/deleting a minimum number of resources from a given
set of resources such that an 
envy-free allocation of the remaining resources exists.
Our model differs from this in that we assume
additive utility-based preferences, we modify a given initial allocation, and
we aim for bounding the number of deleted resources and/or
the decrease in social welfare.
In a follow-up work to the work of Aziz et al., Dorn et
al.~\cite{DBLP:journals/algorithmica/DornHS21} considered the goal of modifying
a given set of resources such that there exists a
proportional allocation, also in the
setting with ordinal preferences.
Notably, besides the case without an initial allocation, they also considered
the variant where an initial allocation is given as in our model.

To settle up the existence of envy-free up to any good (EFX) allocations for
indivisible resources, a series of works considered finding partial allocations
(where some resources are allowed to remain unallocated; the unallocated
resources can be interpreted as being donated) that satisfy EFX and
have good
qualitative and/or quantitative guarantees on the
unallocated resources.
Caragiannis et al.~\cite{DBLP:conf/ec/CaragiannisGH19} showed that there always exists a partial EFX allocation whose Nash welfare is at least half of the maximum possible Nash welfare for the full set of resources.
Chaudhury et al.~\cite{DBLP:conf/soda/ChaudhuryKMS20} showed that there always exists a partial EFX allocation such that the number of unallocated resources is bounded by the number of agents minus one.
Our goal is also to find fair partial allocations with good guarantees on the
unallocated resources, but we assume that an initial allocation is already given and we study different fairness notions (EF and EF1).

To overcome that envy-free allocations often do not exist when dividing a set
of indivisible resources and to generally improve
existing allocations,
many more approaches have been considered, 
including making a few resources divisible or sharable~\cite{brams1996procedure,DBLP:journals/corr/abs-1908-01669,VK13,BKK14,BKLNS22},
subsidies and money transfers~\cite{DBLP:conf/aaai/000121,DBLP:journals/corr/abs-2002-02789,DBLP:conf/sagt/HalpernS19,DBLP:journals/corr/abs-2106-00841},
or reallocation of some
resources~\cite{DBLP:conf/atal/AzizBLLM16,DBLP:conf/atal/Endriss13,DBLP:conf/ijcai/GourvesLW17}.

\begin{table}[t]
\caption{Overview of our results for agents with identical binary
valuations. Note that all hardness results hold for only two agents.
Additionally, we prove in \Cref{EFON-n=2} that for non-identical binary
valuations \EFDGN is W[1]-hard with respect to $\km$ and W[1]-hard with
respect to $\kp$
even for only two agents.}
\begin{center}
\resizebox{0.8\textwidth}{!}{\begin{tabular}{l|l|l|l}
    & \textsc{DG~(\#)}       				& \textsc{DG~(\euro)} &
\textsc{DG} \\
\midrule
EF1   &  $\mathcal{O}(|\mathcal{I}|+m\log m)$ (Th. \ref{ib:EF1-N})	&
NP-h. (Pr. \ref{ib:ONP}) & $\mathcal{O}(({\lp})^6+|\mathcal{I}|)$
(Th. \ref{ib:EF1-lp})   \\
   &  	& 
& $\mathcal{O}(({\lm})^6+|\mathcal{I}|)$
(Th. \ref{ib:EF1-lm})
   \\
	& 	& 
& W[1]-h. wrt. $\km$ (Pr. \ref{ib:EF1-kp-km})
	  \\
	& 	&  & NP-h. for $\kp=0$ (Pr. \ref{ib:ONP})
	  \\
\midrule
EF   &  NP-h. for $\kp=1$ (Th. \ref{ib:kmkp})	&
NP-h. for $\lp=1$ (Th. \ref{ib:kmkp}) & $\mathcal{O}(({\lm})^5+|\mathcal{I}|)$
(Th. \ref{ib:EF-general})   \\
   &  W[1]-h. wrt. $\km$ (Th. \ref{ib:kmkp})	&
	  \\
\end{tabular}}
\label{tab:results_bin}
\end{center}
\end{table}

\subsection{Our Contributions}
We study the complexity of \textsc{EF/EF1 By Donating
Goods} (\textsc{EF-/EF1-DG}), where given an initial allocation and two
integers $\km$ and $\lp$ the question is whether we can donate at most~$\km$
resources such that the resulting allocation satisfies EF/EF1 and has
utilitarian welfare at least~$\lp$. Apart from this we also consider two
 special cases of this problem: \textsc{EF-/EF1-DG~(\#)} where we set
$\lp$ to zero and \textsc{EF-/EF1-DG (\euro)} where we set $\km$ to the
number $m$ of~resources.

We split the paper into two parts. First, in \Cref{sec:bp}, we start with the case of binary-encoded valuations. Second, in \Cref{sub:uv} we consider the computationally easier case with unary-encoded valuations (which is quite realistic to assume, as in most applications valuations should be ``small'' numbers). All hardness results for unary valuations directly imply hardness for binary valuations and all algorithms for binary valuations are also applicable to unary valuations. Moreover, notably, some of our algorithmic results from \Cref{sub:uv} initially designed for unary valuations also work for binary valuations.

In \Cref{sec:bp}, where we assume binary valuations, we mostly focus on the natural special case of identical valuations, as otherwise our problems remain computationally intractable even under quite severe restrictions.
We conduct a
complete parameterized analysis taking into account the
following five
parameters: \begin{enumerate*}[label=(\roman*)]
\item the number $n$ of agents,
\item the number $\km$ of donated resources,
\item the number $\kp$ of remaining resources,
\item the summed welfare $\lm$ of donated resources, and
\item the summed welfare $\lp$ of the resulting allocation.
\end{enumerate*}
An
overview of our results from this part
can be found in \Cref{tab:results_bin}:
Considering EF and identical binary valuations, we prove that
our problems are NP-hard even for constant values of $n$, $\lp$, and~$\kp$.
For the dual parameters $\km$ and $\lm$, while \EFDGN is W[1]-hard with
respect to $\km$, the general problem \EFDG is solvable in time polynomial in
$\lm$.
Switching from EF to EF1 leads to further tractability results: \EFODGN
becomes solvable in polynomial-time, while \EFODG is now solvable in time
polynomial in~$\lm$ or~$\lp$.
Our results  reveal several interesting contrasts. First,
they suggest that for identical valuations
computational problems for EF1 are easier to solve than for EF, which
on a high-level is due to the fact that there always exists an EF1 sub-allocation
where the least happy agents in the initial allocation get their full bundle (i.e., they keep their original bundle).
Second,
at least for EF1, \textsc{DG~(\#)} is easier to solve than \textsc{DG~(\euro)}.
Third,
the parameters $\lm$ and $\lp$ seem to be more powerful than the
related parameters $\km$ and $\kp$.

In \Cref{sub:uv}, where  we assume that the agents have unary-encoded arbitrary (non-identical)
valuations, in addition to the parameters considered in the first part,
we also consider \begin{enumerate*}[label=(\roman*)]
\item[(vi)] the maximum number $d$ of resources allocated to an agent in the
initial allocation,
\item[(vii)] the maximum number $\wa$ of resources an agent values as non-zero,
\item[(viii)] the maximum number $\wrr$ of agents that value a resource as
non-zero, and
\item[(ix)] the maximum utility~$u^*$ that an agent assigns to a
resource.\footnote{We do not consider these parameters in the
first part
because the parameter $\wrr$ is upper-bounded by the number of agents.
Moreover, for the parameter $u^*$ it makes no difference whether valuations
are encoded in binary or unary. For the parameters $d$ and $\wa$, our hardness
results and some of our algorithmic result from the second part translate
to the first part.}
\end{enumerate*} An overview of these results grouped by the involved parameters can be found in \Cref{tab:results_1,tab:results_2,tab:results_3}
in \Cref{sub:uv}. We discuss the results in
detail in \Cref{sub:uv}, but provide four highlights already here:
\begin{itemize}
 \item We identify two relevant cases where one of our problems is fixed-parameter
 tractable in one of our parameters alone: \textsc{EF-/EF1-DG (\euro)}
 is fixed-parameter tractable with respect to $d$ or $\wa$
 (\Cref{thm:welfaredwa}). In contrast to this,
 \textsc{EF-/EF1-DG (\#)} is NP-hard for constant $d+\wa+\wrr$ for
 0/1-valuations (\Cref{nkmw}). Notably, in all other cases, the
 complexity of \textsc{EF-DG (\euro)} and \textsc{EF-DG (\#)} and of \textsc{EF1-DG (\euro)} and \textsc{EF1-DG~(\#)} are the same.
 \item Even for unary-encoded valuations, our problems are mostly
computationally intractable. With the above mentioned exception, none of our
parameters alone is sufficient to
establish fixed-parameter tractability. Nevertheless, we still identify
some tractable and realistic cases, for instance, when the number of agents and
the maximum utility value is bounded (\Cref{nu}) or when the number/welfare of the deleted
resources and the maximum number of resources assigned to an agent in the initial
allocation is bounded~(\Cref{vkm}).
 \item In contrast to the binary case, the number $\km$ of
resources to delete is a more powerful parameter than the welfare $\lm$ of the
deleted resources, as for some parameter combinations involving $\km$ our
general problem is fixed-parameter tractable but not for the respective
combination involving~$\lm$.
\item 
The
number~$\kp$/welfare~$\lp$ of the remaining resources is a less powerful parameter than the
number~$\km$/welfare~$\lm$ of the deleted resources, because $\kp$/$\lp$ constitutes a lower bound (on the remaining resources/welfare), while $\km$/$\lm$ is an upper bound (on the deleted resources/welfare).
\end{itemize}

\section{Preliminaries}

\paragraph{Resource Allocation.}
We consider a set $A=\{a_1,\dots, a_n\}$ of $n$ \emph{agents} and a set
$R=\{r_1,\dots,r_m\}$ of $m$ \emph{resources}.  We assume that the agents have
additive
and cardinal preferences:
For each agent~$a\in A$ and resource $r\in R$, let $u_a(r)\in \mathbb{N}_{0}$
denote the \emph{utility}
agent $a$ assigns to resource~$r$. In this context, we also say that $a$ values
$r$ as $u_a(r)$. We denote as $u^*$ the maximum utility value
an agent assigns to a resource, i.e., $u^*:=\max_{a\in A, r\in R} u_a(r)$.
Further, for a subset~$R'\subseteq R$ of
resources, we set
$u_a(R')=\sum_{r\in R'} u_a(r)$. We say that agents have \emph{binary/unary
valuations} if for all $a\in A$ and $r\in R$ the utility values~$u_a(r)$
in the input are encoded in binary/unary. Further, we say that agents have
\emph{0/1-valuations} if $u_a(r)\in \{0,1\}$ for all $a\in A$ and $r\in R$.

In
our parameterized analysis, we denote as $\wa\in [m]$ the maximum number of
resources
an agent values as non-zero, i.e., $\wa:=\max_{a\in A} |\{r\in R\mid
u_a(r)>0\}|$
and as $\wrr\in [n]$ the maximum number of agents that value a resource as
non-zero,
i.e., $\wrr:=\max_{r\in R} |\{a\in A\mid u_a(r)>0\}|$.

An \emph{allocation $\pi$ of resources to agents} is a function $\pi\colon
A\mapsto
2^R$ such that $\pi(a)$ and $\pi(a')$ are disjoint for $a\neq a'\in A$.
For an
agent~$a\in A$ and an allocation $\pi$, $\pi(a)\subseteq R$ is
the set of resources assigned to agent $a$ in $\pi$. We also refer
to $\pi(a)$ as $a$'s \emph{bundle} in $\pi$. For
readability, we always assume that initial allocations are complete,
i.e., for every $r\in R$ there exists an agent $a\in A$ such that $r\in \pi(a)$.  We
refer to $\sum_{a\in
A}u_a(\pi(a))$ as
the \emph{(utilitarian) welfare} of~$\pi$.

For two agents $a\neq a'\in A$ and an allocation $\pi$, we say that agent~$a$
\emph{envies}
agent~$a'$ in $\pi$ if $a$ prefers $a'$'s bundle to its own bundle, i.e.,
$u_a(\pi(a'))>u_{a}(\pi(a))$. Further, we say that agent~$a$ \emph{envies agent~$a'$ up to one good} in $\pi$ if, for all resources $r\in \pi(a')$, agent~$a$
prefers $\pi(a')\setminus \{r\}$ to~$\pi(a)$, i.e., $\min_{r\in
\pi(a')}u_{a}(\pi'(a)\setminus
\{r\})>u_a(\pi(a))$.
An allocation is \emph{envy-free} (EF) [\emph{envy-free up to one
good}
(EF1)] if there is
no agent that envies another agent [up to one good].

\paragraph{Envy-Freeness by Donating Goods.}
We now define our central problem. In the following, for clarity, as donating a
resource corresponds to deleting the resource from the instance,
we say that a resource is deleted if it is donated.

 \decprob{EF/[EF1] by Donating Goods (\EFDG/[\EFODG{}])}
{Given a set $A$ of agents, a set $R$ of resources,
	an initial
	allocation $\pi$ of resources to agents, and integers $\km$ and $\lp$.}
{Is it possible to delete at most $\km$ resources from~$\pi$ such that the
	resulting
	allocation $\pi'$ is envy-free [up to one good] and has utilitarian welfare
	at least $\lp$,
	i.e., $\pi'(a)\subseteq \pi(a)$ for all $a\in A$, $\sum_{a\in A}
	|\pi(a)\setminus \pi'(a)|\leq \km$, and $\sum_{a\in A} u_a(\pi'(a))\geq
	\lp$?
}
We also consider two special cases of this general problem. That
is, the so-called number variant (\EFDGN/\EFODGN) where we only have a
bound on the number of deleted resources, i.e., $\lp=0$, and the so-called
welfare variant (\EFDGW/\EFODGW) where we only have a bound on the
utilitarian welfare of the resulting allocation, i.e., $\km=m$. For notional
convenience, we write \EFDGNW and \EFODGNW to refer to both the
number and welfare~variant.

Apart from $\km$ and $\lp$, we also consider the respective dual
parameters.
That is, the minimum number~$\kp:=m-\km$ of remaining resources and the maximum welfare
${\lm:=\sum_{a\in A} u_a(\pi(a))-\lp}$ of the deleted resources.
In addition, we consider the maximum number $d:=\max_{a\in A} |\pi(a)|$ of
resources that an agent holds in the initial allocation $\pi$ as a
parameter.
Throughout the paper, we assume that basic arithmetic operations (i.e.,\
addition
and subtraction) of natural numbers can be
performed in constant time.

\paragraph{Auxiliary Problems.} \label{sub:prob}

We introduce problems from which or to which we
reduce.
Given positive integers $x_1, x_2, \ldots, x_\nu$,
and $S$, \subsetSum is the problem to decide whether there is a set $I
\subseteq [\nu]$ such that $\sum_{i \in I} x_i = S$.
\partition is the \subsetSum problem with $S = \frac{1}{2}\sum_{i
\in [\nu]} x_i$.
Both \subsetSum and \partition are NP-hard assuming that numbers are encoded in
binary~\cite{DBLP:conf/coco/Karp72}.
The \minsubsetSum problem (resp. the \maxsubsetSum problem) is to find the size
of the minimum (resp. maximum) subset whose sum is equal to $S$.
\subsetSum and \textsc{Minimum/Maximum Subset Sum} can be solved in
$\mathcal{O}(\nu S)$ time by
dynamic programming~\cite{https://doi.org/10.1002/nav.3800030107}.
\subsetSum with the goal of finding a subset of size
exactly $k$ (i.e., $|I|=k$) is called \kSum. It is W[1]-hard with respect to $k$~\cite{DBLP:journals/tcs/DowneyF95}.
The \kmSum problem (resp. the \kpSum problem) is the \subsetSum problem with a
upper bound (resp. lower bound) for the size of the solution, i.e., $|I|\le k$
(resp.~$|I| \ge k$) and is W[1]-hard with respect to $k$. To see this, there is
an easy reduction from \kSum to \kmSum (and \kpSum) by replacing $x_i$
with $x_i+T$ and replacing $S$ with $S+kT$, where $T$ is a large number (e.g.,
$T=\sum_{i \in [\nu]} x_i$).

We reduce some of our problems to:
\decprob{\mcknapsack}
{A capacity $c$, an integer $k$, and $\ell$ sets $S_i$ $(i \in [\ell])$, each
resource $j \in S_i$ has a profit~$p_{i,j}$ and a weight $w_{i,j}$.}
{Is it possible to choose for each $i\in [\ell]$ one resource~$j_i$ from $S_i$
such that
$\sum_{i \in [\ell]} w_{i,j_i} \le c$ and $\sum_{i \in [\ell]} p_{i,j_i} \ge
k$?}

The \mcknapsack problem can be solved in $\mathcal{O}(c \cdot \sum_{i \in [\ell]}|S_i|)$ time by dynamic programming \cite{dudzinski1987exact}.

Finally, we define the \setCover problem.
\decprob{\setCover}
{A universe $S$, a family of sets $\mathcal{C} \subseteq 2^S$, and a positive
integer~$z$.}
{Is there a collection $\mathcal{C'} \subseteq \mathcal{C}$ of size at most $z$
such that $S = \bigcup_{C \in \mathcal{C'}} C$?}
\textsc{Restricted Exact Cover by 3-Sets} is a NP-hard
\cite[Problem SP2]{DBLP:books/fm/GareyJ79} special case of
\textsc{Set Cover} where each set consists of exactly three elements and each
element appears in exactly three sets.

\section{Binary Valuations} \label{sec:bp}
In this section, we assume that the valuations of agents are encoded in binary
and mostly (except from \Cref{sub:EF1UD}) focus on situations where agents have
identical valuations (as we obtain NP-hardness even in this case).
We start by considering EF1 (\Cref{sub:EF1ID,sub:EF1UD}) and afterwards turn
to EF (\Cref{sub:EFID}).

For identical valuations, we assume that there are no resources
that
are valued as $0$ by all agents, as
otherwise we can remove them and update $\kp$ and $m$ to get an
equivalent instance.
For identical valuations, we will
call the utility function of all agents~$u$.

\subsection{EF1 and Identical Valuations}\label{sub:EF1ID}
In this section, we analyze \EFODG and its two special cases \EFODGNW assuming
that agents have identical binary valuations and
identify several tractable cases for
these problems. We start with two
general results concerning EF1 allocations. First, we observe that as
valuations are identical, an agent $a$ is envied by another agent $a$ up to one good
if it is envied by a least happy agent up to one good:
\begin{observation}
	\label{ib:EF1-chara}
	For identical valuations, an allocation is EF1 if and only if the agents
	with the minimum utility in the allocation do not envy other agents up to
	one good.
\end{observation}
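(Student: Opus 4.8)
The plan is to prove both directions of the stated equivalence. The forward direction is immediate from the definition of EF1: if $\pi$ is EF1 then no agent envies any other agent up to one good, so in particular the agents attaining the minimum utility do not. All the real work lies in the backward direction.

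For the backward direction, I would assume that every agent $a_{\min}$ with $u(\pi(a_{\min})) = \min_{b \in A} u(\pi(b))$ does not envy any other agent up to one good, fix an arbitrary pair $a \ne a'$, and show that $a$ does not envy $a'$ up to one good. The key point, which is exactly where identical valuations enter, is that $a$ fails to envy $a'$ up to one good precisely when there exists a resource $r \in \pi(a')$ with $u(\pi(a') \setminus \{r\}) \le u(\pi(a))$, so the condition only becomes easier to satisfy as $u(\pi(a))$ grows. Since $u(\pi(a_{\min})) \le u(\pi(a))$ for every agent, a single witness $r$ that certifies ``no envy up to one good'' for a minimum-utility agent against $a'$ will certify the same for $a$ against $a'$.

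I would then split into two cases according to whether $a'$ itself attains the minimum utility. If $a'$ does not, then any minimum-utility agent $a_{\min}$ is distinct from $a'$, so the hypothesis applies and produces a resource $r \in \pi(a')$ with $u(\pi(a') \setminus \{r\}) \le u(\pi(a_{\min})) \le u(\pi(a))$, which is the desired witness. If instead $a'$ attains the minimum utility, then $u(\pi(a')) \le u(\pi(a))$, so $a$ does not even envy $a'$ in the ordinary sense, and hence certainly does not envy $a'$ up to one good.

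The main obstacle is exactly this second case: when $a'$ is itself a least-happy agent, the hypothesis is silent about envy \emph{directed at} $a'$ (a minimum-utility witness $a_{\min}$ could coincide with $a'$), so the transfer argument of the previous paragraph breaks down. It must be replaced by the direct observation that a minimum-utility agent can never be envied, since deleting any good from $\pi(a')$ can only push its value below the already-minimal $u(\pi(a'))$. A small bookkeeping point to settle in passing is the degenerate subcase $\pi(a') = \emptyset$, which is subsumed by this same argument because the empty bundle has value $0 \le u(\pi(a))$.
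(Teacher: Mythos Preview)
Your proof is correct. The paper itself does not give a formal proof of this observation; it merely prefaces it with the one-line remark that, for identical valuations, an agent is envied up to one good by some agent if and only if it is envied up to one good by a least-happy agent, and leaves the rest implicit. Your argument makes exactly this monotonicity idea precise---that the condition ``$a$ does not envy $a'$ up to one good'' depends on $u(\pi(a))$ only through the inequality $u(\pi(a')\setminus\{r\})\le u(\pi(a))$---and your case split on whether $a'$ itself attains the minimum (together with the $\pi(a')=\emptyset$ remark) cleanly handles the boundary case the paper glosses over.
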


Moreover, for identical valuations, if we are
given an EF1 allocation $\pi'$ with $\pi'(a)\subseteq \pi(a)$ for each $a\in A$, then from this we can
construct an EF1
allocation where all agents $a\in A$ get at least~$\pi'(a)$ and all
agents $a\in A$ that have the minimum utility in $\pi$ get their full bundle~$\pi(a)$ in $\pi'$. This can be done by successively adding for an agent $a\in A$ who has the minimum utility under~$\pi'$ and fulfills~$\pi(a) \neq \pi'(a)$ an arbitrary resource from~$\pi(a) \setminus \pi'(a)$.
\begin{restatable}{lemma}{EFOL}\label{ib:EF1-least}
	Let $A_0$ be the set of agents that have the minimum utility in the
	initial allocation.
	For identical valuations, if there exists a solution for an instance
	\EFODG (or \EFODGN or \EFODGW), then there is a solution such that all
	agents from $A_0$ get their full initial bundle
	and all other agents have higher utility than them.
\end{restatable}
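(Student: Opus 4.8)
The plan is to start from an arbitrary solution and transform it into one with the desired structure by an iterative ``filling-up'' procedure, exactly as suggested by the paragraph preceding the lemma. Suppose we have a solution $\pi'$ to the given instance (for whichever variant), so $\pi'$ is EF1, $\pi'(a)\subseteq\pi(a)$ for all $a\in A$, and $\pi'$ satisfies the relevant bounds on the number/welfare of deleted resources. The key observation is that moving \emph{towards} the full initial bundles only deletes fewer resources and only increases welfare, so any bound of the form $\sum_a|\pi(a)\setminus\pi'(a)|\le\km$ or $\sum_a u_a(\pi'(a))\ge\lp$ is preserved (indeed improved) under such moves. Hence the only thing I really need to control is that EF1 is maintained.

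First I would describe the single augmentation step: while there exists an agent $a$ that currently has the minimum utility under the current sub-allocation and still misses some resource (i.e.\ $\pi(a)\neq\pi'(a)$), pick an arbitrary $r\in\pi(a)\setminus\pi'(a)$ and add it back to $a$. I would argue this step preserves EF1 using \Cref{ib:EF1-chara}: since valuations are identical, EF1 is equivalent to the minimum-utility agents not envying anyone up to one good. Adding a resource to a current minimum-utility agent $a$ weakly increases $u(\pi'(a))$, so $a$ is no worse off, and no other agent's bundle changes, so no new envy up to one good can be created. The agents whose envy-up-to-one-good condition must be rechecked are precisely those that had minimum utility; after the addition, $a$'s utility can only have gone up, which makes the EF1 condition easier, not harder, to satisfy for the (possibly new) set of minimum-utility agents.

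Next I would argue termination and the final structural guarantee. Each augmentation step strictly increases $\sum_a|\pi'(a)|$ by one, and this quantity is bounded by $m$, so the process terminates. When it halts, every agent that currently has minimum utility already has its full initial bundle. I would then need to connect ``minimum utility at termination'' with ``in $A_0$''. The clean way is to observe that the process is monotone: utilities only increase, and the set of agents holding their full initial bundle only grows. At termination, let $A'$ be the set of minimum-utility agents; each $a\in A'$ has $\pi'(a)=\pi(a)$, so $u(\pi'(a))=u(\pi(a))$. Since the final minimum utility equals $\min_{a\in A'}u(\pi(a))$ and every other agent has utility at least this value, one checks that the agents realizing the overall initial minimum (i.e.\ those in $A_0$) must lie in $A'$ and thus receive their full bundle, while all agents outside $A'$ have strictly higher utility. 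This yields the two claimed properties: agents in $A_0$ get $\pi(a)$, and all other agents have strictly higher utility.

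The main obstacle I anticipate is the bookkeeping in the final paragraph: making precise the claim that the agents who are minimal \emph{initially} (the set $A_0$) end up exactly among the minimal agents \emph{at termination} and get their full bundles, while everyone else ends up strictly above them. The subtlety is that the minimum-utility set can shift during the process, so I must verify that an agent in $A_0$ cannot be ``overtaken'' and left with a proper subset of its bundle. I would handle this by noting that an agent with the globally minimum initial utility always remains a minimum-utility agent throughout the procedure (its utility under $\pi'$ starts at $0$ or below the final minimum and is never higher than any agent holding its full bundle at value $u(\pi(a))$), so it is repeatedly selected until it holds $\pi(a)$ in full; the strict-inequality claim for the remaining agents then follows from the fact that the procedure only stops once no minimum-utility agent is missing a resource.
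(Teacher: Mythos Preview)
Your approach is essentially identical to the paper's: start from an arbitrary solution and repeatedly add back a missing resource to a current minimum-utility agent, arguing that EF1 is preserved (since nobody envied a minimum-utility agent, removing the just-added resource shows no one envies that agent up to one good afterwards) and that the budget constraints only loosen. Your third paragraph already cleanly proves that at termination the minimum-utility agents hold their full initial bundles and hence coincide with $A_0$ while everyone else is strictly above; the worry in your fourth paragraph is therefore unnecessary (and the phrase ``starts at $0$'' there is inaccurate---just drop that paragraph).
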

\begin{proof}
	Given a solution $\pi'$ for an instance of \EFODG (or \EFODGNW), we
construct a solution
	satisfying the stated constraints.
	As long as there is an agent $a\in A$ who has the minimum utility under
	$\pi'$
	and fulfills $\pi(a) \neq \pi'(a)$, we can add an arbitrary resource from~$\pi(a) \setminus \pi'(a)$ to~$\pi'(a)$ and update $\pi'$ accordingly.
	During this process, we always maintain that $\pi'$ satisfies EF1 and
that $\pi'$ is a solution to the given \EFODG instance.
	When the process stops, as the agent with minimum utility needs to have
its full bundle, all agents in $A_0$ must have the minimum utility
	in $\pi'$ and they must get their full initial bundle (i.e.,
	$\pi(a)=\pi'(a)$)
	while all agents in $A \setminus A_0$ need to have a higher utility than
	agents in
$A_0$.
\end{proof}

Using \cref{ib:EF1-chara,ib:EF1-least}, one can reduce \EFODGN to
finding for each agent~$a\in A\setminus A_0$ a minimum size set
$P_a\subseteq \pi(a)$ such that deleting it makes agents from $A_0$ no longer envy
agent $a$ up to one good. This
problem can be solved using a simple greedy algorithm by moving the most
valuable resources from $\pi(a)$ to~$P_a$ until all envy is resolved. This
establishes that
\EFODGN is
polynomial-time solvable for identical binary valuations:

\begin{restatable}{theorem}{EFON}\label{ib:EF1-N}
	For identical valuations, \EFODGN can be solved in
$\mathcal{O}(|\mathcal{I}|+m\log m)$ time.
\end{restatable}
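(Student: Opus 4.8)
The plan is to combine Observations~\ref{ib:EF1-chara} and Lemma~\ref{ib:EF1-least} to decompose the problem into $n-|A_0|$ independent subproblems, one per agent outside $A_0$, and then solve each subproblem greedily. First I would invoke Lemma~\ref{ib:EF1-least} to restrict attention to solutions in which every agent in $A_0$ (the set of minimum-utility agents in $\pi$) keeps its \emph{full} initial bundle $\pi(a)$. Let $\mu := \min_{a\in A} u(\pi(a))$ denote this minimum utility value. By Observation~\ref{ib:EF1-chara}, the resulting allocation $\pi'$ is EF1 if and only if no agent in $A_0$ envies any other agent up to one good; since the agents in $A_0$ retain utility exactly $\mu$, the EF1 condition is: for every agent $a\in A\setminus A_0$ and the set $\pi'(a)$ it ends up with, we need $\min_{r\in\pi'(a)} u(\pi'(a)\setminus\{r\}) \le \mu$, i.e.\ $u(\pi'(a)) - \max_{r\in\pi'(a)} u(r) \le \mu$. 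Crucially, because valuations are identical and the $A_0$-agents are fixed, the constraint on agent $a$ depends only on which resources $a$ retains, so the agents decouple.

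Next I would observe that, since $\lp=0$ in the number variant (\EFODGN), there is no welfare lower bound to worry about, so the problem reduces purely to: for each $a\in A\setminus A_0$, find a minimum-size deletion set $P_a\subseteq\pi(a)$ such that after removing $P_a$ the remaining bundle $\pi(a)\setminus P_a$ satisfies $u(\pi(a)\setminus P_a) - \max_{r\in\pi(a)\setminus P_a} u(r) \le \mu$ (and the trivial empty-bundle case also satisfies EF1). I would then argue that the natural greedy rule---repeatedly delete the currently most valuable resource from $a$'s bundle until the EF1 condition holds---is optimal. The intuition is that among all subsets of $\pi(a)$ of a given size, the one minimizing the ``utility minus top element'' quantity is obtained by keeping the smallest-valued resources, so removing the largest resources first both maximizes progress toward the constraint per deletion and is consistent with an exchange argument: any optimal solution can be transformed, without increasing its size, into one that deletes a prefix of the sorted-by-value resources.

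For the running time, I would sort each agent's bundle once (contributing the $m\log m$ term overall, summed across agents), precompute $\mu$ in linear time, and then run the greedy deletion for each agent, which touches each resource a constant number of times and hence costs $O(|\pi(a)|)$ after sorting. The total number of deletions is summed over all agents and compared against $\km$ to decide the instance. Reading the input and these linear passes account for the $|\mathcal{I}|$ term. The main obstacle I anticipate is the correctness of the greedy exchange argument, specifically verifying that deleting the largest remaining resource is never suboptimal: one must handle the interaction between the two terms in $u(\pi(a)\setminus P_a)-\max_{r}u(r)$, since deleting the maximum element both lowers the sum and potentially lowers the ``$\max$'' being subtracted. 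I would make this precise by showing that for a fixed deletion budget the optimal retained set is always a set of smallest-valued resources (so that the subtracted maximum is as small as possible while the sum is minimized), which reduces the per-agent problem to choosing a threshold index in the sorted order and confirms that the greedy prefix-deletion finds the minimum feasible $|P_a|$.
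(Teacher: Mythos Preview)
Your proposal is correct and follows essentially the same approach as the paper: invoke Lemma~\ref{ib:EF1-least} and Observation~\ref{ib:EF1-chara} to decouple the problem across agents outside $A_0$, then for each such agent greedily delete highest-valued resources until the EF1 condition with respect to $\mu$ is satisfied, and sum the deletion counts against~$\km$. The paper's optimality argument is phrased slightly differently (a contradiction via the definition of EF1 rather than your exchange argument showing the retained set of any fixed size minimizing $u(S)-\max(S)$ is a prefix of the sorted order), and it explicitly checks that after greedy each such agent still has utility strictly above~$\mu$ so that $A_0$ indeed remains the set of minimum-utility agents---a point you should make explicit when writing out the details, though it falls out immediately from the fact that the greedy stopping condition failed one step earlier.
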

\begin{proof}
	Let $A_0$ be the set of agents who have the minimum utility under the
initial allocation
	and $A^* \subseteq A \setminus A_0$ be the set of agents that are envied by
agents from $A_0$ up to one good.
	By \cref{ib:EF1-chara,ib:EF1-least}, to solve the problem, for each $a \in
	A^*$, it suffices to
find a subset of resources $P_a \subseteq \pi(a)$ of minimum size such
that
after deleting all resources from $P_a$, agent $a$ will not be envied by agents
from
$A_0$ up to one good.
	We can solve this for each agent $a \in A^*$ by greedily deleting resources
	from~$\pi(a)$ with the highest
value till agent $a$ is not envied by agents in $A_0$ up to one good.
	Let $P_a$ be the set of the deleted resources in the bundle of $a\in A^*$.
	Delete resources in $P_a$ for each agent $a \in A^*$
then results in an EF1 allocation:
No agent from~$A_0$ envies another agent up to one good because otherwise we
would have deleted more resources from this agent. Note that each agent $a\in
A^*$ has at least the utility of agents in $A_0$, as otherwise agents from
$A_0$ would not envy $a$ up to one good before the deletion of the last
resource from $\pi(a)$.
Thus, as agents in~$A_0$ do not envy any
agent up to one good,
no agent~$a \in A^*$ envies another agent up to one good.
	Let $s_a=|P_a|$. We claim that for any EF1 allocation, we have to delete
at least $s_a$ resources from agent $a$'s bundle.

	Indeed, if deleting $s'_a<s_a$ resources from agent $a$'s bundle is enough
	to
	get an EF1 allocation, then by the definition of EF1, we can delete $s'_a+1
\le s_a$
	 resources from $\pi(a)$ such that agent~$a$ is not envied by agents in
	 $A_0$,
	 which is a contradiction to the choice of $P_a$ (as this implies that
agents from $A_0$ already would not have envied $a$ up to one good before the
deletion of the last resource from $\pi(a)$).
	 Finally we check whether $\sum_{a \in A^*} s_a \le \km$.
	 The running time is $\mathcal{O}(|\mathcal{I}|+m\log m)$ as we need to
sort the
resources
	 for each $a \in A^*$ by their values.
\end{proof}

In contrast to this, for \EFODGW, solving said subproblem of
finding for each agent who is not part of $A_0$ a subset of its
initial bundle with minimum summed welfare such that deleting it makes agents
from $A_0$ no longer envy the
agent up to one good basically requires solving an instance of
\textsc{Subset Sum}. Indeed, by reducing from
\textsc{Partition}, we can show
that in contrast to \EFODGN, \EFODGW is NP-hard for two agents with identical
binary valuations:

\begin{restatable}{proposition}{ibONP}\label{ib:ONP}
	For identical binary valuations, \EFODGW with $n=2$ agents is NP-hard.
\end{restatable}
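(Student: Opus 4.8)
The plan is to reduce from \partition, which is \np-hard for binary-encoded integers; recall that under binary valuations the utilities may be arbitrary integers encoded in binary, so the \partition item values can be used \emph{directly} as utilities (this is exactly why identical valuations become hard here, whereas for genuinely $0/1$ values the problem would be trivial). Given positive integers $x_1,\dots,x_\nu$ with $\sum_{i\in[\nu]}x_i=2S$, I would build an instance with two agents $a_1,a_2$ sharing the identical utility function $u$ as follows. Agent $a_1$ receives a \emph{guard} resource $g$ with $u(g)=C$, where $C:=1+\sum_{i\in[\nu]}x_i=2S+1$, together with one resource $r_i$ per item with $u(r_i)=x_i$; agent $a_2$ receives a single resource $c$ with $u(c)=S$. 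Since this is the welfare variant I set $\km=m$, and I set the welfare target to $\lp:=C+2S$. As the initial welfare is $C+3S$, this means that any feasible solution may delete resources of total value at most $S$.

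\textbf{Correctness.} To analyze the construction I would use the two-agent specialization of \cref{ib:EF1-chara}: writing $v_1,v_2$ for the two bundle values in the resulting allocation $\pi'$ and $\mu$ for the largest resource value in the richer agent's bundle, EF1 with two agents is equivalent to $|v_1-v_2|\le\mu$. The crucial point is that $C>x_i$ for every $i$, so as long as $g$ is kept the largest resource value of $a_1$ is the fixed value $C$. The welfare bound forces the total deleted value to be at most $S<C$, hence neither $g$ nor $c$ can be deleted in a feasible EF1 solution: deleting $g$ exceeds the budget, and deleting $c$ exhausts it while leaving $v_1=C+2S$, $v_2=0$, which violates EF1 since $C+2S-C=2S>0$. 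With $g$ and $c$ kept we have $v_2=S$ and $v_1=C+2S-\sigma$, where $\sigma$ is the total value of the deleted item-resources $r_i$; then $\sigma\le S$ and $v_1>v_2$, so $a_1$ is the richer agent and $\mu=C$. The EF1 condition $v_1-v_2\le\mu$ becomes $C+S-\sigma\le C$, i.e.\ $\sigma\ge S$. Combining the two bounds yields $\sigma=S$, so the deleted items form a subset of $\{x_i\}$ summing to exactly $S$, a \partition solution. Conversely, any \partition solution $I$ with $\sum_{i\in I}x_i=S$ gives a feasible EF1 allocation by deleting exactly the $r_i$ with $i\in I$: then $v_1-v_2=C=\mu$ and the welfare is exactly $C+2S=\lp$. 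Hence the constructed instance is a yes-instance iff the \partition instance is, and the reduction is clearly polynomial with all values encoded in binary.

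\textbf{Main obstacle.} The central difficulty, and the reason the reduction needs the guard resource $g$, is the slack granted by the ``up to one good'' relaxation: without a resource whose value dominates all item-resources, the EF1 inequality for $a_1$ would involve the largest \emph{remaining} item, whose value depends on which items are kept, and the deletion requirement would fail to be a clean threshold. Making $g$ strictly larger than every $x_i$ pins the richer agent's maximal resource value to the constant $C$, converting EF1 into the single linear constraint $\sigma\ge S$; choosing $\lp$ so that the deletion budget equals exactly $S$ then produces the two-sided pincer $S\le\sigma\le S$ that encodes \partition. The only remaining work is the routine case-checking (carried out above via the budget argument) that every welfare-feasible EF1 solution keeps both $g$ and $c$, together with the observation that the initial allocation is not already EF1 since $v_1-v_2=C+2S>C=\mu$ before any deletion.
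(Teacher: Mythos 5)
Your proof is correct and follows essentially the same reduction as the paper: both reduce from \partition, give agent one the item resources plus one large extra resource while agent two holds a single resource of value $S$, and bound the deleted welfare by $S$ so that the EF1 condition pincers the deleted item values to sum to exactly $S$; your guard value $C=2S+1$ (the paper uses $S$) merely makes the case analysis that $g$ and $c$ survive slightly cleaner. One harmless slip: in your final remark the initial difference is $v_1-v_2=C+S$, not $C+2S$, but the conclusion $v_1-v_2>\mu$ still holds.
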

\begin{proof}
	We present a reduction from \partition.
	Let $x_1, x_2, \ldots, x_\nu$, and $S$ be an instance of \partition with
	$2S=\sum_{i \in [1,\nu]} x_i$.
	We create an instance with two agents and $\lm \coloneqq S$.
	For each integer $x_i$, agent one holds one resource of value $x_i$.
	Additionally, agent one holds an resource of value $S$.
	Agent two holds one resource of value $S$.
	Then if the instance of \partition is a Yes-instance, we can delete the
	corresponding resources from agent one's bundle with utility $S$ to create
	an EF1 allocation where agent
	one has value $2S$ and agent two has value $S$.
	If the instance \partition is a No-instance, then, since in any EF1 allocation
	the utility of agent one should be reduced by at least $S$, this implies that
	we have to delete resources from agents one's
	bundle with utility more than $S$ to get an EF1 allocation.
\end{proof}

Our hardness reduction from \Cref{ib:ONP} does not have any
implications on the parameterized complexity of \EFODGW with respect to
$\lp$ respectively $\lm$. In fact, parameterized by~$\lp$ respectively $\lm$, \EFODGW (and even the
general problem \EFODG) become tractable. This stands in contrast with
the previously proven NP-hardness for \EFODGW, which implies that \EFODG is NP-hard for $\kp=0$.
While this contrast between $\kp$ and $\lp$ might look
surprising at first glance, recall that $\lp$ bounds the otherwise binary
encoded welfare of the deleted resources. Thus, it is quite intuitive that
$\lp$ is more powerful than~$\kp$.

\begin{restatable}{theorem}{EFOlm}\label{ib:EF1-lp}\label{ib:EF1-lm}
	For identical valuations, \EFODG can be solved in
$\mathcal{O}((\lp)^6+|\mathcal{I}|)$ or $\mathcal{O}((\lm)^6+|\mathcal{I}|)$
time.
\end{restatable}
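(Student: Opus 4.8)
The plan is to reduce \EFODG for identical valuations to a knapsack-type selection problem whose running-time-relevant size is bounded by $\lp$ (respectively $\lm$), after pinning down the structure of a canonical solution. First I would use \Cref{ib:EF1-chara,ib:EF1-least} to normalize. Let $A_0$ be the minimum-utility agents of $\pi$ and let $\mu_0$ be their utility. By \Cref{ib:EF1-least} it suffices to search for solutions in which every agent of $A_0$ keeps its full bundle and every other agent retains utility at least $\mu_0$, so the minimum utility of the resulting allocation is exactly $\mu_0$. By \Cref{ib:EF1-chara} such an allocation $\pi'$ is EF1 iff for every agent $a$ we have $u(\pi'(a)) - \max_{r\in\pi'(a)} u(r) \le \mu_0$. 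This decouples the agents: an agent whose initial bundle already satisfies this inequality can keep everything (deleting from it only wastes both budgets), while every other agent must lose resources. For the latter I would argue that it is without loss of generality to keep the globally most valuable resource $g_a$ of $\pi(a)$: given any EF1-feasible kept set $S$ with maximum element $m$, replacing $m$ by $g_a$ keeps the cardinality, does not decrease the welfare, and preserves feasibility, since the value of $S$ minus its maximum is unchanged. Hence a feasible bundle for $a$ is $g_a$ together with a subset $T\subseteq \pi(a)\setminus\{g_a\}$ with $u(T)\le\mu_0$, and the only trade-off is between the kept welfare $u(g_a)+u(T)$ and the kept cardinality $1+|T|$.

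Second, I would bound the instance by the parameter. If $\mu_0=0$, every feasible bundle keeps at most one resource (all values are positive), and keeping the single most valuable resource of each non-$A_0$ agent maximizes both kept welfare and kept cardinality simultaneously, so a linear scan suffices. Otherwise $\mu_0\ge 1$. For the $\lp$-bound, any feasible allocation has welfare at least $n\mu_0$; thus if $\mu_0\ge\lp$ or $n\ge\lp$ the welfare constraint is automatically satisfied and the instance reduces to \EFODGN, solvable by \Cref{ib:EF1-N}. In the remaining case both $n<\lp$ and the per-agent welfare budget $\mu_0<\lp$ are bounded by $\lp$. For the $\lm$-bound I would instead note that each agent that must lose resources contributes at least one unit of deleted welfare, so at most $\lm$ agents are active and each deletes total welfare at most $\lm$; discarding resources of value exceeding $\lm$ and keeping only $\mathcal{O}(\lm)$ relevant resources per active agent leaves an instance of size depending only on $\lm$.

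Third, I would solve the bounded instance. For each active agent I compute its Pareto frontier by a subset-sum dynamic program: for the $\lp$-bound, for each target value $t\le\mu_0$ the maximum cardinality of a subset of $\pi(a)\setminus\{g_a\}$ of value exactly $t$; for the $\lm$-bound, for each target deleted welfare $\delta\le\lm$ the minimum number of deleted resources realizing it, which is exactly a \minsubsetSum computation. I then combine the agents by a single global dynamic program (equivalently, an instance of \mcknapsack) whose state is the total kept welfare capped at $\lp$ (respectively the total deleted welfare capped at $\lm$) and whose stored value is the best achievable total kept cardinality (respectively the minimum total deleted count). The instance is a Yes-instance iff a state with kept welfare at least $\lp$ is reached with kept cardinality at least $\kp$ (respectively a state with deleted welfare at most $\lm$ is reached with deleted count at most $\km$). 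Since the number of relevant agents, the number of per-agent options, and the size of the state space are all $\mathcal{O}(\lp)$ (respectively $\mathcal{O}(\lm)$), careful bookkeeping yields a running time polynomial in $\lp$ (respectively $\lm$) plus an $\mathcal{O}(|\mathcal{I}|)$ term for the scan and the reductions to \EFODGN, in particular the claimed bound.

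The hard part will be the second step: proving that, outside the trivially reducible cases, the instance can be compressed so that its running-time-relevant size is a function of $\lp$ (respectively $\lm$) alone. The subtlety is that the two bounds $\km/\kp$ and $\lp$ both pull in the same direction (keep more), so I must justify that tracking only one of them as the dynamic-programming dimension while optimizing the other is sound; this relies on both being lower bounds together with the per-agent observation that keeping the most valuable resource dominates, which is what makes a single-objective dynamic program correct.
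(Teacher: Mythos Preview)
Your plan is correct and follows the paper's two-phase template (per-agent subset-sum computation, then a global \mcknapsack combination), but you add one genuine simplification the paper does not exploit: the dominance argument that every active agent may without loss of generality keep its single most valuable resource~$g_a$. The paper instead \emph{guesses}, for each agent, the value of the most valuable kept resource (the pair $(t,t_1)$ in the $\lp$ case, respectively $v_i$ in the $\lm$ case), spending an extra $\mathcal{O}(\lp)$ or $\mathcal{O}(\lm)$ factor; your swap argument eliminates this guess entirely and would in fact shave the exponent. Apart from that, the preprocessing (the $\mu_0=0$ case, $n\ge\lp$ or $\mu_0\ge\lp$ reducing to \Cref{ib:EF1-N}; $|A^*|\le\lm$), the per-agent reduction to \maxsubsetSum/\minsubsetSum on a pruned item set of size $\mathcal{O}((\lp)^2)$ respectively $\mathcal{O}((\lm)^2)$ (your ``$\mathcal{O}(\lm)$ relevant resources'' is slightly optimistic but harmless), and the final \mcknapsack step all coincide with the paper.

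One point deserves an explicit sentence. Your feasible-bundle description only imposes $u(T)\le\mu_0$, so your option list may contain bundles with $u(g_a)+u(T)<\mu_0$; for such a bundle the minimum utility of $\pi'$ is no longer $\mu_0$ and your EF1 characterization would not apply. You should either restrict to $u(T)\ge\max(0,\mu_0-u(g_a))$, or note that every such ``bad'' option is dominated: while $u(T)<\mu_0-u(g_a)$ you can add any leftover item $r\in\pi(a)\setminus(\{g_a\}\cup T)$, since $u(r)\le u(g_a)$ forces $u(T)+u(r)<\mu_0$, and this simultaneously increases kept cardinality and kept welfare and decreases deleted cardinality and deleted welfare. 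Hence the \mcknapsack never prefers a bad option, and your decision procedure is sound as stated.
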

\begin{proof}
\textbf{Parameter $\ell^+$.}
	We begin with some pre-processing.
	Let $u_0=\min_{a \in A} u(\pi(a))$.
	If $u_0=0$, then in an EF1 allocation $\pi'$, each agent can hold at most
	one resource with non-zero value. Thus, in an optimal solution, each agent gets assigned its
	most
	valuable resource. In the following, we assume $u_0\geq 1$.
	Using \Cref{ib:EF1-N}, we can get an EF1 allocation $\pi'$ with the
	maximum number of resources left but without any guarantee on the welfare of
	$\pi'$.
	If $u_0 \ge \lp$ or $n \ge \lp$, then $\sum_{a\in A} u(\pi'(a)) \ge n u_0
	\ge \lp$ is guaranteed.
	In the following we assume $0<u_0 < \lp$ and~$n < \lp$.
	Let $r^*$ be the resource with maximum value and $a^*$ be the agent such that $r^* \in \pi(a^*)$.
	If $u(r^*) \ge \lp$, then either $r^* \in \pi'(a^*)$ and hence $\sum_{a\in
	A} u(\pi'(a)) \ge \lp$, or $r^* \not \in \pi'(a^*)$, in which case we can
	replace the most valuable resource from $\pi'(a^*)$ with $r^*$. In both cases, we have
	constructed a solution.
	In the following, we assume that
	$u(r^*) < \lp$.

	Now, we turn to the main part of the algorithm. Let $A_0$ be the set of agents who have the minimum utility in the initial
	allocation
	and $A^* \subseteq A \setminus A_0$ be the set of agents that are envied by
agents in $A_0$ up to one good.
	According to \cref{ib:EF1-chara,ib:EF1-least}, for each $a_i \in A^*$, we
	need to find a set $R_i \subseteq \pi(a_i)$ such that by
	keeping
	all resources from $R_i$ and deleting all resources from $\pi(a_i) \setminus R_i$,
	agent $a_i$ will not
	be envied
	by agents in $A_0$ up to one good, and $\sum_{a_i \in A^*} |R_i| \ge \tilde{k}^+$ and
	$\sum_{a_i \in A^*} u(R_i) \ge \tilde{\ell}^+$, where $\tilde{k}^+ :=\kp - \sum_{a_i \in A \setminus A^*} |\pi(a_i)|$ and $\tilde{\ell}^+ := \lp - \sum_{a_i \in A \setminus A^*} u(\pi(a_i))$.
	We will solve this problem in two steps.
	In Step 1, for each agent $a_i \in A^*$ we compute the set of all possible
	$R_i$ such that after deleting $\pi(a_i)\setminus R_i$ no agent envies
	$a_i$ up to one good.
	Then, in Step 2 we check whether it is possible to select one candidate
	$R_i$ for each $a_i \in A^*$ such that $\sum_{a_i \in A^*} |R_i| \ge \tilde{k}^+$ and
	$\sum_{a_i \in A^*} u(R_i) \ge \tilde{\ell}^+$.

	\textbf{Step 1:}
	Fix some agent $a_i \in A^*$. We want to guess the utility $t$ of~$u (R_i)$
	and the utility
	$t_1$ of the most valuable resource in $R_i$.
	Let $t_2:=t-t_1$ be the utility of the remaining resources in $R_i$.
	By EF1, we have $t_2 \le u_0< \lp$.
	Since $t_1 \le u(r^*)<\lp$, we have $t <2\lp$.
	Thus, we can iterate over at most $2(\lp)^2$ different pairs $(t,t_1)$.
	For each $(t,t_1)$ such that $t_2=t-t_1 \le u_0<\lp$, we find an arbitrary
	resource $r_0 \in \pi(a_i)$ with $u(r_0)=t_1$ (if there is no such
	resource, then
	we can skip this pair), and then, we want to compute the maximum size of a
	subset $R_i=R_i' \cup \{r_0\}$ such that $R'_i \subseteq \pi(a_i) \setminus
	\{r_0\}$ and $u(R'_i) =t_2$. This is an instance of the \maxsubsetSum
	problem (see \Cref{sub:prob}).
	Since $u(R'_i) =t_2 < \lp$, for each value~$v \in [1,\lp]$, set~$R'_i$ can
	contain at most $\lp$ resources with value $v$ (recall that we do not have resources valued as $0$).
	Thus, we can pick a subset $S_i \subseteq \pi(a_i) \setminus \{r_0\}$ of
	resources with $|S_i| \le (\lp)^2$ such that we only need to include
	resources from $S_i$ in $R'_i$.
	We then construct an instance of \maxsubsetSum with target value
	$t_2<\lp$ and set $\{u(r) \mid r \in S_i\}$
	with $|S_i| \le (\lp)^2$. Using dynamic programming, the instance can be
	solved in $\mathcal{O}((\lp)^3)$ time.
	We need to do this for each agent $a_i \in A^*$ and each pair~$(t,t_1)$,
	separately. As we have $n<\lp$ agents and, as observed above, $2(\lp)^2$
	valid pairs, this can be done in $\mathcal{O}((\lp)^6)$ time.
	For each agent $a_i\in A^*$, we get a family $L_i$ of candidate sets (each such set consists of the resources selected by the dynamic program and $r_0$).
	For all candidate sets in $L_i$ with the same utility, we just keep one
	candidate set
	of maximum cardinality.
	Thus, for each $t \in [1,2\lp]$, there is at most one candidate in $L_i$ and
	hence~$|L_i| \le 2\lp$.

	\textbf{Step 2:}
	For each agent $a_i \in A^*$ and each $R_i^j \in L_i$, we compute a pair
$(s_i^j,t_i^j)$,	where $s_i^j=|R_i^j|\le 2\lp$ and $t_i^j=u(R_i^j)\le 2\lp$ resulting in a
set  $Q_i=\{(s_i^j,t_i^j) \mid j\in \{1, \dots, |L_i|\}$ of such pairs.
	Without loss of generality, assume $t_i^1 \ge t_i^j$ for $j \in \{1, \dots, |L_i|\}$.
	Now the problem is to find a pair $(s_i^{j_i},t_i^{j_i})$ from each
$Q_i$ such that $\sum_{a_i \in A^*} s_i^{j_i} \ge \tilde{k}^+$ and $\sum_{a_i \in A^*}
t_i^{j_i} \ge \tilde{\ell}^+$.
	This can be reduced to the \textsc{Multiple-Choice Knapsack Problem} (MCKP)
	(see \Cref{sub:prob})
as follows.
	Let $S^*=\sum_{a_i \in A^*} s_i^1$ and $T^*=\sum_{a_i \in A^*} t_i^1$.
	If $T^* < \tilde{\ell}^+$, then our instance is clearly a No-instance.
	If $T^* \ge \tilde{\ell}^+$ and $S^* \ge \tilde{k}^+$, then we get a solution by selecting the
corresponding set $R_i^1$ for each agent~$a_i \in A^*$.
	Finally, if $T^* \ge \tilde{\ell}^+$ but $S^* < \tilde{k}^+$, then we need to select a pair
$(s_i^{j_i},t_i^{j_i}) \in Q_i$ for each agent $a_i \in A^*$ such that
	$\sum_{a_i \in A^*} (t_i^1-t_i^{j_i}) \le T^*-\tilde{\ell}^+ $ and $\sum_{a_i \in A^*} (s_i^{j_i}-s_i^1) \ge \tilde{k}^+ -S^*$.
	This is an instance of MCKP with sets $L_i$ $(\forall a_i \in A^*)$,
	capacity $T^*-\tilde{\ell}^+$, and lower bound of the target value
	$\tilde{k}^+
	-S^*$. For each $ a_i \in A^*$, each set~$R_i^j \in L_i$ has weight
	$t_i^1-t_i^j$ and value
	$s_i^j-s_i^1$.
	Since $\sum_{a_i \in A^*} |L_i| \le 2(\lp)^2$ and
	$T^*-\tilde{\ell}^+  \le T^* \le |A^*|\cdot \max_{a_i \in A^*} t_i^1 \le 2(\lp)^2$
  the	instance can be solved in $\mathcal{O}((\lp)^4)$ time \cite{dudzinski1987exact}.
	Summing up, our problem can be solved in
	$\mathcal{O}((\lp)^6+|\mathcal{I}|)$ time.

    \bigskip
    \noindent\textbf{Parameter $\ell^-$.}
	Let $A_0$ be the set of agents whose initial bundles have the minimum
	utility~$u_0=\min_{a \in A} u(\pi(a))$
	and $A^* \subseteq A \setminus A_0$ be the set of agents that are envied by
	agents in $A_0$ up to one good.
	According to \cref{ib:EF1-chara,ib:EF1-least}, for each $a_i \in A^*$, we
	need to
	find a subset of resources $P_i \subseteq \pi(a_i)$ such that after deleting
	all
	resources from $P_i$, agent $a_i$ will not be envied by agents in $A_0$ up to
	one
	good, and $\sum_{a_i \in A^*} |P_i| \le \km$ and $\sum_{a_i \in A^*} u(P_i) \le
	\lm$, which implies that $|P_i| \le \km$ and $u(P_i) \le \lm$ for each $a_i
	\in
	A^*$.
	If $|A^*| > \lm$, then we have to delete more than $\lm$ resources and the
	welfare
	will be decreased by more than $\lm$, so the instance is a No-instance.
	In the following, we assume $|A^*| \le \lm$.
	We then solve the problem in two steps, similarly to the case for parameter~$\lp$.
	\medskip

	\textbf{Step 1:}
	For each agent $a_i \in A^*$ and each $t \in [1,\lm]$, we compute a subset
	$P_i \subseteq \pi(a_i)$ with the minimum $|P_i|$ such that $u(P_i)=t$ and
	EF1 is
	guaranteed. Each of these tasks can be reduced to $\mathcal{O}(\lm)$
	instances of the \minsubsetSum problem as follows.
	To guarantee EF1, we first guess the value~$v_i$ of the most valuable resource in
	$\pi(a_i) \setminus P_i$.
	Let $r_i^*$ be the most valuable resource in agent $i$'s initial bundle.
	If $u(r_i^*) \le \lm$, then we guess the value $v_i$ of the most valuable
	resource in
	$\pi(a_i) \setminus P_i$.
	There are at most $\lm$ different choices for $v_i$ and for each choice let
	$r_i$ be an arbitrary resource with $u(r_i)=v_i$.
	If there is no such resource or $u(\pi(a_i))-t-v_i > u_0$, which implies that
	agents from $A_0$ still envy agent $a_i$ if resources $P_i$
	with $u(P_i))=t$ get deleted, then we can skip this choice of $v_i$.
	If $u(r_i^*) > \lm$, since we cannot afford to delete $r_i^*$, we can
	directly conclude that $r_i^*\in \pi(a_i)\setminus P_i$ and thus can just set
	$v_i=u(r_i^*)$ and $r_i=r_i^*$.
	Next we compute the minimum size of a subset
	$P_i \subseteq \pi(a_i) \setminus \{r_i\}$ such that $u(P_i) =t$, which is an
instance of the
	\minsubsetSum problem (as defined in \Cref{sub:prob}).
	Since $u(P_i) =t$, set~$P_i$ can only contain resources from $\pi(a_i)$ with
	value at
	most $t$ and for all resources with the same value, at most $t$ of them
	will be
	contained in $P_i$.
	Thus we can construct a subset $S_i \subseteq \pi(a_i)$ of resources with
	$|S_i| \le t^2
	\le (\lm)^2$ such that we only need to include resources from $S_i$ in $P_i$.
  We then construct an instance of \minsubsetSum with target value
  $t \le \lm$ and set $\{u(r) \mid r \in S_i\}$
  with $|S_i| \le (\lm)^2$. Using dynamic programming, the instance can be
  solved in $\mathcal{O}((\lm)^3)$ time.
  We need to do this for each agent $a_i \in A^*$, each value $t \in [1,\lm]$ for
  the utility of deleted resources,
  and each value $v_i \in [1,\lm] \cup \{u(r_i^*)\}$ for the maximum utility
  among remaining resources separately. As we have $|A^*| \le \lm$, this can be done in $\mathcal{O}((\lm)^6)$ time.
	Then, for each agent $a_i \in A^*$ we get a family $L_i$ of (possibly
	$({\lm})^2$) candidates.
	For all candidates in $L_i$ with the same utility, we just keep one
	candidate that has the minimum size and drop all other candidates.
	Thus for each $t \in [1,\lm]$ there is at most one candidate in $L_i$ and
	hence
	$|L_i| \le \lm$.\medskip

	\textbf{Step 2:}
	For each agent $a_i \in A^*$ and each $P_i^j \in L_i$, we compute a pair
	$(s_i^j,t_i^j)$,	where $s_i^j=|P_i^j|\le \lm$ and $t_i^j=u(P_i^j)\le \lm$ resulting in a
  set $Q_i=\{(s_i^j,t_i^j) \mid j\in \{1, \dots, |P_i^j|\}$ of such pairs.
  Without loss of generality, assume $t_i^1 \le t_i^j$ for $j \in \{1, \dots, |L_i|\}$.
	Now the problem is to find a pair $(s_i^{j_i},t_i^{j_i})$ from each
	$Q_i$ such that $\sum_{a_i \in A^*} s_i^{j_i} \le \km$ and $\sum_{a_i \in A^*}
	t_i^{j_i} \le \lm$.
	This can be reduced to the \mcknapsack problem (MCKP) (as defined in \Cref{sub:prob})
	as follows.
	Let $S^*=\sum_{a_i \in A^*} s_i^1$ and $T^*=\sum_{a_i \in A^*} t_i^1$.
	If $T^* > \lm$, then our instance is clearly a No-instance.
	If $T^* \le \lm$ and $S^* \le \km$, then we get a solution by selecting the
	corresponding set $P_i^1$ for each agent $a_i \in A^*$.
	Finally, if $T^* \le \lm$ but $S^* > \km$, then we need to select a pair~$(s_i^{j_i},t_i^{j_i}) \in Q_i$ for each agent $a_i \in A^*$ such that
	\[ \sum_{a_i \in A^*} (t_i^{j_i}-t_i^1) \le \lm-T^* \quad \text{and} \quad
	\sum_{a_i \in A^*} (s_i^1-s_i^{j_i}) \ge S^*- \km.\]
	This is an instance of MCKP with sets $L_i (\forall a_i \in A^*)$, capacity $\lm-T^*$, lower bound of the target value $ S^*- \km$. For each $ a_i \in A^*$, each set $P_i^j \in L_i$ weight $t_i^j-t_i^1$ and value $s_i^1-s_i^j$.
	Since $\sum_{a_i \in A^*} |L_i| \le (\lm)^2$ and $\lm-T^* \le \lm$, the
	instance can be solved in $\mathcal{O}((\lm)^3)$ time \cite{dudzinski1987exact}.

	Finally, we consider the whole running time.
	Computing the set $A_0$ and $A^*$ can be done in~$\mathcal{O}(|\mathcal{I}|)$~time.
	Computing $L_i$ for all agents in $ A^*$ can be done in
	$\mathcal{O}((\lm)^6)$ time.
	Together with the $\mathcal{O}((\lm)^3)$ time for MCKP, the whole problem
	can be solved in $\mathcal{O}((\lm)^6+|\mathcal{I}|)$ time.
\end{proof}

While we have seen in \Cref{ib:EF1-N} that \EFODGN is polynomial-time solvable,
we now show that \EFODG parameterized by $\km$ is
W[1]-hard even for only two agents, which stands in contrast to the
preceding tractability results for $\lm$:
\begin{restatable}{proposition}{EFOkpkm}\label{ib:EF1-kp-km}
	For identical binary valuations, \EFODG with $n=2$ agents is W[1]-hard
	parameterized by $\km$.
\end{restatable}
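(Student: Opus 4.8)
The plan is to give a parameterized reduction from \kmSum, which by \Cref{sub:prob} is W[1]-hard with respect to its size bound $k$, to \EFODG with two agents and identical valuations, mapping $k$ to the deletion budget $\km$. The conceptual point is that \EFODGN is already polynomial-time solvable by \Cref{ib:EF1-N}, so the welfare bound $\lp$ must carry the reduction: the EF1 constraints by themselves only force the \emph{deleted} welfare to be at least a threshold, and I will use $\lp$ to also force it to be at most that threshold, thereby encoding the exact-sum requirement of \kmSum.

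For the construction, take a \kmSum instance $x_1,\dots,x_\nu,S$ with bound $k$, put $T:=\sum_i x_i$, and assume without loss of generality that $0<S<T$ and $\nu>k$ (instances with $\nu\le k$ are solvable by brute force in FPT time and can be mapped to a canonical yes/no instance). I create two agents with a common utility $u$. Agent $a_1$ receives $\nu$ item resources $g_1,\dots,g_\nu$ with $u(g_i)=x_i$ together with one heavy resource $h$ with $u(h)=T+1$; agent $a_2$ receives a single resource $c$ with $u(c)=T-S$. I set $\km:=k$ and let $\lp$ be the total welfare minus $S$, that is $\lp:=3T+1-2S$. All values are arbitrary positive integers encoded in binary, so this is an instance with identical binary valuations.

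For the analysis, note first that $a_1$ always stays strictly richer than $a_2$ (whenever $h$ survives, $u(\pi'(a_1))\ge u(h)=T+1>T-S=u(c)$), so by \Cref{ib:EF1-chara} only whether $a_2$ envies $a_1$ up to one good matters. Since $h$ is the unique most valuable resource and, as argued below, is never deleted, $a_2$ does not envy $a_1$ up to one good exactly when $u(\pi'(a_1))-u(h)\le u(c)$, i.e.\ when the surviving item welfare is at most $T-S$, i.e.\ when the deleted item welfare $D$ satisfies $D\ge S$. On the other hand, keeping $h$ and $c$, the total surviving welfare is $3T+1-D-S$, so $\lp$ forces $D\le S$; hence $D=S$, and the budget forces that at most $k$ items are deleted. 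Deleting $h$ is impossible because it loses welfare $T+1>S$, violating $\lp$; deleting $c$ is impossible because it zeroes $a_2$, after which EF1 would force $a_1$ to retain at most one resource and thus require more than $k$ deletions as $\nu>k$. Therefore every solution deletes precisely a set of items of total value $S$ and size at most $k$, which is exactly a \kmSum solution; conversely any \kmSum solution $I$ yields the EF1 allocation obtained by deleting $\{g_i:i\in I\}$, meeting both budgets. Since $\km=k$, this is a parameterized reduction and establishes the claimed W[1]-hardness with respect to $\km$.

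The delicate step, and the reason the hardness holds for \EFODG but not for \EFODGN, is that EF1 is an inequality and supplies only the lower bound $D\ge S$; the matching upper bound $D\le S$, which is what simulates the exact target $S$ of \kmSum, has to be obtained from the welfare bound $\lp$. The remaining care lies in calibrating $u(h)$ so that $h$ is simultaneously the unique excused good and too costly to delete, and in using the budget $\km$ together with $\nu>k$ to make $c$ undeletable.
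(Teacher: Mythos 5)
Your proof is correct and follows essentially the same route as the paper's: a parameterized reduction from \kmSum with $\km:=k$ and deleted-welfare slack $\lm=S$ (you phrase it via $\lp=3T+1-2S$, which is equivalent), where one agent holds the item resources plus a heavy resource and the other holds a single resource of value $T-S$, so that EF1 forces deleted item welfare at least $S$ while the welfare bound forces it at most $S$. Your only deviations are cosmetic improvements in rigor: setting $u(h)=T+1$ instead of $T$ to make the excused good strictly maximal, and explicitly ruling out deletion of $c$ via the assumption $\nu>k$ — a case the paper's proof glosses over.
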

\begin{proof}
	We present a reduction from \kmSum, where given positive integers $x_1, x_2, \ldots, x_\nu$, $k$, and $S$, the task is to decide whether there is a set $I \subseteq [\nu]$ with $|I| \le k$ such that $\sum_{i \in I} x_i =S$.
  \kmSum is W[1]-hard with respect to $k$. To see this, there is
an easy reduction from \kSum to \kmSum by replacing $x_i$
with $x_i+T$ and replacing $S$ with $S+kT$, where $T$ is a large number (e.g.,
$T=\sum_{i \in [\nu]} x_i$).
	Let $x_1, x_2, \ldots, x_\nu$, $k$, and $S$ be an instance of \kmSum.
	We create an instance of EF1-BDG with two agents~$ a$ and $b$, $\km:=k$, and $\lm:=S$.
	For each integer $x_i$, agent~$a$ holds one resource of value $x_i$.
	Additionally, agent~$a$ holds an resource $r_1^*$ of value $T$, where
	$T=\sum_{i \in [\nu]} x_i$.
	Agent~$b$ holds one resource $r_2^*$ of value $T-S$.
	Then initially agent one has value $2T$ and in an EF1 allocation, agent~$a$
	can have at most $2T-S$ utility.
	Thus we have to delete a set $P$ of	at most $\km$ resources from~$a$'s bundle such that $u(P) \ge S$.
	Since $\lm=S$, we have $u(P) = S$ and $P \subseteq \pi(1) \setminus
	\{r_1^*\}$.
	Therefore, the instance of \EFODG is a Yes-instance if and only if the
	instance of \kmSum is a Yes-instance.
\end{proof}

\subsection{EF1 and Non-Identical Preferences} \label{sub:EF1UD}
Our positive result for \EFODGN from \Cref{ib:EF1-N} for identical binary
valuations raises
the question whether there is hope for tractability results for \EFODGN
parameterized by $n$, $\km$, or~$\kp$ with
non-identical binary valuations. We answer this question negatively with a
strong
(parameterized) hardness result in the following proposition:

\begin{restatable}{proposition}{EFONn}\label{EFON-n=2}
	For binary valuations, \EFODGN with $n=2$ agents is W[1]-hard parameterized
by $\kp$ or $\km$, even if the two agents only disagree on the valuation of
two resources.
\end{restatable}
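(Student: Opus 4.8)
The plan is to reduce from \kmSum for the parameter $\km$ and, symmetrically, from \kpSum for the parameter $\kp$; both are W[1]-hard with respect to their cardinality bound $k$ (see \Cref{sub:prob}). In either case I build two agents $a,b$ that value every resource identically except on two designated \emph{attractor} resources $\alpha,\beta$, so the promised restriction to two disagreeing resources holds by construction. For two agents, being EF1 means exactly that the two pairwise ``up to one good'' inequalities hold, so the whole reduction lives inside a pair of inequalities that I can tune by hand.

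For the $\km$-version I would start from a \kmSum instance $x_1,\dots,x_\nu,S,k$ and place $\nu$ common \emph{item resources} $g_1,\dots,g_\nu$ into $a$'s bundle, where $u_a(g_i)=u_b(g_i)$ equals $x_i$ after a large common additive offset (as in the \kSum-to-\kmSum padding in \Cref{sub:prob}); this offset is what couples welfare to cardinality, so that a set of $g$'s reaches a prescribed welfare only when it has the intended size. Agent $a$ additionally holds the attractor $\alpha$ with $u_a(\alpha)=0$ but $u_b(\alpha)$ huge, while agent $b$ holds one common calibration resource $c$ together with the attractor $\beta$ with $u_a(\beta)$ huge and $u_b(\beta)=0$. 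I then pick $u(c)$ so that the two EF1 inequalities sandwich the deleted welfare $\sigma$ from both sides: agent~$b$ not envying~$a$ up to one good (its favourite in $a$'s bundle being $\alpha$) forces $\sigma$ to be at least the target, while agent~$a$ not envying~$b$ up to one good (its favourite in $b$'s bundle being $\beta$) forces $\sigma$ to be at most the target. Thus any EF1 sub-allocation must delete a set of $g$'s of exactly the target welfare, which by the offset has exactly $k$ elements, so a budget of $\km=k$ deletions is feasible iff the \kmSum instance is a yes-instance; the kept $x_i$'s form the \kmSum solution.

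The \emph{forward} direction is routine: given a \kmSum solution $I$, deleting $\{g_i:i\in I\}$ and a direct substitution into the two up-to-one-good inequalities shows the result is EF1 and uses exactly $k$ deletions. The hard part — and the step the whole argument hinges on — is \emph{soundness}, i.e.\ a locking argument that no solution profitably deletes an attractor or the calibration resource. Deleting $\beta$ or $c$ only destroys the upper-bound inequality and leaves the lower bound ``$\sigma\ge$ target'', which the offset already makes unreachable within the reduced cardinality budget, so these cases are killed by the padding. I expect the genuinely delicate case to be deleting $\alpha$: since $\alpha$ is worthless to $a$, removing it is ``free'' for $a$ and it lets $b$'s up-to-one-good slack fall instead on a surviving item, relaxing the lower bound by one padded item's worth. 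The reduction must therefore be engineered so that this freed slack buys back exactly one deletion and the relaxed inequality, together with the cardinality budget, can still only be met by a deletion pattern that projects to a genuine exact-$S$ subset of size $k$ rather than a spurious near-miss. Getting this calibration right — choosing the offset and the attractor magnitudes so that the ``trade one $g$-deletion for deleting $\alpha$'' move never certifies a non-solution — is where I anticipate the real work, and it is the obstacle that separates this non-identical case from the polynomial-time identical case of \Cref{ib:EF1-N}.

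Finally, for the $\kp$-version I would mirror the roles, making the heavily-envied agent the one whose surviving bundle is counted and reducing from \kpSum so that the parameter equals the number of resources that must remain; the same attractor sandwich pins the surviving welfare and the same locking argument controls the two attractors. In both reductions the parameter equals the subset-sum cardinality $k$, and since $a$ and $b$ disagree only on $\alpha$ and $\beta$, the statement's restriction to two disagreeing resources is met, yielding W[1]-hardness of \EFODGN with $n=2$ with respect to $\km$ and to $\kp$.
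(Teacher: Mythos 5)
Your proposal has a genuine gap exactly where you flag it, and the calibration problem you defer is not merely delicate but unfixable within your design. Because $u_a(\alpha)=0$, deleting $\alpha$ costs agent $a$ nothing in its own utility; the only price is one unit of the budget $\km$. In exchange, $b$'s up-to-one-good slack migrates from $\alpha$ to the most valuable surviving item $g_j$, relaxing the lower-bound inequality by $u_b(g_j)\approx T+x_j$, i.e., by exactly one padded item's worth. Concretely, with target $kT+S$ and $u_b(c)$ calibrated against it, the pattern ``delete $\alpha$ plus any $k-1$ items $I$'' satisfies the relaxed sandwich whenever $\sum_{i\in I}x_i+x_j\ge S$ for the largest surviving $g_j$ --- a monotone ``$\ge$'' condition (just take the $k$ largest $x_i$'s) that is decidable in polynomial time and does not encode \kSum. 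The upper bound from $a$'s side gives no help here, since with $k-1$ item deletions it is essentially vacuous. So the reduction as sketched accepts spurious instances, and no choice of offset or attractor magnitudes repairs it: the trade ``one budget unit for one item's worth of slack'' is built into the fact that $\alpha$ is worthless to its holder. Your handling of the $\beta$ and $c$ deletions via the padding is fine; it is only the $\alpha$ lock that fails, and it is the step the whole soundness argument hinges on.

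The paper escapes this with a structurally different locking mechanism that a zero-valued attractor cannot imitate. There, the two disagreement resources are two \emph{duplicated} resources $r_2^*,r_3^*$, both placed in $b$'s bundle, huge to $a$ and moderate but nonzero to $b$; and the resource that $b$ ignores in the up-to-one-good comparison, $r_1^*\in\pi(a)$, has a large \emph{common} value $\nu T$. Deleting $r_1^*$ is then self-punishing for the solution: $a$'s own utility collapses by $\nu T$ while $b$ still retains one of $r_2^*,r_3^*$ even after the up-to-one-good discount, so $a$ becomes envious of $b$ up to one good and the configuration is infeasible --- no slack calibration is needed. This is the ingredient your construction is missing: the ignored resource on $a$'s side must be expensive \emph{for $a$ itself} to give up, which forces the duplicated attractors onto $b$'s side and accounts for why the paper's two disagreeing resources both sit in $b$'s bundle. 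Two secondary divergences: the paper reduces from \kSum directly, coupling cardinality to welfare via the $+T$ padding inside the resource values rather than reducing from \kmSum, and it obtains the $\kp$ case by re-weighting the same gadget with $\kp=k+3$ rather than mirroring with \kpSum; neither difference is essential, but your $\kp$ sketch inherits the same unresolved $\alpha$-lock.
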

\begin{proof}
	We first show the result for $\km$.
	We present a reduction from \kSum.
	The reduction is similar to the one for \cref{ib:EF1-kp-km}.
	Let $x_1, x_2, \ldots, x_\nu$, $k$, and $S$ be an instance of \kSum.
	Let~$T:=\sum_{i \in [\nu]} x_i$.
	We create an instance of \EFODGN with two agents~$a$ and~$b$
	and $\km=k$.
	For each integer~$x_i$, agent~$a$ holds one resource $r_i$ with
	$u_a(r_i)=u_b(r_i)=T+x_i$.
	Additionally, agent~$a$ holds a resource $r_1^*$ with
	$u_a(r_1^*)=u_b(r_1^*)=\nu T$.
	Agent~$b$ holds two resources $r_2^*$ and $r_3^*$ with
	\[u_a(r_2^*)=u_a(r_3^*)=\nu T+(\nu-\km)T+(T-S)\] and
	\[u_b(r_2^*)=u_b(r_3^*)=\frac{(\nu-\km)T+(T-S)}{2}.\]
	Notice that both agents have the same utility for agent~$a$'s bundle, that
	is,
	\[u_a(\pi(a))=u_b(\pi(a))=\nu T+\nu T+T,\]
	while they have different values for agent~$b$'s bundle and
	\begin{align*}
	 & u_a(\pi(b))=2 \big( \nu T+(\nu-\km)T+(T-S) \big) \quad \text{and} \\
	 & u_b(\pi(b))=(\nu-\km)T+(T-S).
	\end{align*}
	($\Rightarrow$)
	If the instance of \kSum is a Yes-instance, then we can delete the
	resources from $\pi(a)$ corresponding to a solution to the \kSum instance and get a new allocation $\pi'$,
	where we
	have that
	$\pi'(b)=\pi(b)$ and that
	\[u_a(\pi'(a))=u_b(\pi'(a))=\nu T+(\nu-\km)T+(T-S).\]
	Since $r_1^*$ is still in $\pi'(a)$, for agent~$b$ we have
	\[u_b(\pi'(a))-u_b(r_1^*)=(\nu-\km)T+(T-S)=u_b(\pi'(b)).\]
	For agent~$a$ we have
	\[u_a(\pi'(b))-u_a(r_2^*)=\nu T+(\nu-\km)T+(T-S)=u_a(\pi'(a)).\]
	Thus $\pi'$ is an EF1 allocation.

	($\Leftarrow$)
	If the instance of \EFODGN is a Yes-instance, then we show that the instance of
	\kSum is a Yes-instance.
	Let $P$ be the set of up to~$\km $ resources we delete.
	Because $u_b (\pi (a)) - u_b (\pi (b)) = \km T + \nu T +S $ and every set~$R' \subseteq \pi (a)\setminus \{r_1^*\}$ of at most~$\km -1$ resources from~$\pi (a)\setminus \{r_1^*\}$ has $u_b (R') < \km T$, set~$P$ has to contain at least
	$\km$ resources from $\pi(a)$ in order to guarantee EF1.
	Thus, since our budget is $\km$, set~$P$ does not contain resources
	from
	$\pi(b)$ and $P$ contains exactly $\km$ resources from $\pi(a)$.
	Moreover, $P$ does not contain $r_1^*$, as otherwise $u_a (\pi (b) \setminus \{r_2^*\}) = u_a (\pi (b) \setminus \{r_3^*\}) = \nu T + (\nu-\km ) T + (T-S) > (\nu + 1) T = u_a (\pi (a) \setminus \{r_1^*\}) \ge u_a (\pi (a) \setminus P)$ and therefore $a$ envies~$b$ up to one good.
	Since~$a$ does not envy~$b$ up to one good, we have
	\[u_a(\pi(a))-u_a(P) \ge u_a(r_2^*) \Rightarrow u_a(P) \le \km T+S,\]
	and since~$b$ does not envy~$a$ up to one good, we have
	\[u_b(\pi(b)) \ge u_a(\pi(a))-u_a(P)-u_a(r_1^*) \Rightarrow u_a(P) \ge \km
	T+S.\]
	Therefore, $u_a(P)=\km T+S$ and hence the instance of \kSum is a
	Yes-instance.

	For $\kp$, we can use the same reduction with setting
	$\kp=k+3$
	and applying the following modification
	\[u_a(r_2^*)=u_a(r_3^*)=\nu T+k T+S \quad \text{and} \quad
	u_b(r_2^*)=u_b(r_3^*)=\frac{k T+S}{2}.\]
	It is easy to verify that the instance of \EFODGN is a Yes-instance if and
	only if we can keep exactly $k$ resources from $\pi(a) \setminus \{r_1^*\}$
	with
	value $kT+S$ and $r_1^*$ for agent $a$, which means that the instance of
	\kSum
	is a Yes-instance.
\end{proof}

\subsection{EF and Identical Valuations} \label{sub:EFID}
We now turn to EF and start by proving a strong hardness result for
\EFDGNW:
Reducing from \textsc{Subset Sum}, we show that even if we only have two
agents with identical valuations,
\EFDGNW
is NP-hard. Even stronger, \EFDGNW is NP-hard even if we are allowed to
delete all but one resource.

\begin{restatable}{theorem}{ibkmkp}\label{ib:kmkp}
	For identical binary valuations, \EFDGN with $\kp=1$ and \EFDGW with $\lp=1$
	are NP-hard for $n=2$ agents, and \EFDGN with $n=2$ agents is W[1]-hard
	parameterized by $\km$.
\end{restatable}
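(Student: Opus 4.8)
The plan is to exploit the simplicity of envy-freeness for two agents with identical valuations. Writing $u$ for the common utility function, a sub-allocation $\pi'$ (with $\pi'(a)\subseteq\pi(a)$) is envy-free if and only if the two retained bundles have \emph{exactly equal} utility: $a$ not envying $b$ gives $u(\pi'(a))\ge u(\pi'(b))$ and $b$ not envying $a$ gives the reverse inequality. Since for identical valuations we may assume no resource is valued $0$, any envy-free sub-allocation that keeps at least one resource has both bundles of equal \emph{positive} utility. This equal-bundle characterisation replaces the delicate ``up to one good'' bookkeeping of the EF1 results by a clean balancing condition, and it drives all three claims.

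For the two NP-hardness claims I reduce from \subsetSum. Given $x_1,\dots,x_\nu$ and a target $S\ge 1$, let agent $a$ hold $\nu$ resources of values $x_1,\dots,x_\nu$ and let agent $b$ hold a single resource of value $S$. Because $b$ owns only this one resource, the balancing condition forces $u(\pi'(b))\in\{0,S\}$, so an envy-free sub-allocation of positive welfare must keep $b$'s resource and keep a subset of $a$'s resources summing to exactly $S$ — that is, solve the \subsetSum instance. For \EFDGN I set $\kp=1$ (equivalently $\km=\nu$); this is precisely the ``delete all but one resource'' regime and it forbids the empty allocation, thereby forcing positive welfare. For \EFDGW (where $\km=m$) I instead set $\lp=1$, which likewise rules out the empty allocation and forces both retained bundles to be nonempty. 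In both variants the constructed instance is a Yes-instance iff the \subsetSum instance is: in the forward direction the $\nu-|I|$ deletions from $a$ stay within budget and at least two resources remain, and in the backward direction the only way to reach positive welfare is via a subset of $\{x_i\}$ summing to $S$.

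For the W[1]-hardness in $\km$ I reduce from \kmSum, which the preliminaries note is W[1]-hard with respect to $k$. Given $x_1,\dots,x_\nu$, $k$, and $S$ with $T:=\sum_i x_i$, and assuming $0<S<T$ and $k<\nu$ (both without loss of generality for \kmSum), let agent $a$ again hold resources of values $x_1,\dots,x_\nu$ and let agent $b$ hold a single resource of value $T-S$; set $\km=k$ and $\lp=0$. Deleting $b$'s resource would force emptying $a$ as well, costing $\nu+1>k$ deletions, so every feasible envy-free solution keeps $b$'s resource and deletes a set $D\subseteq\pi(a)$ with $u(\pi(a)\setminus D)=T-S$, i.e.\ $u(D)=S$ with $|D|\le\km=k$. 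This is exactly a size-$\le k$ subset of $\{x_i\}$ summing to $S$, so the instance is a Yes-instance iff the \kmSum instance is, and the parameter $\km$ equals $k$.

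The only real obstacle is the completeness (``$\Leftarrow$'') direction in each reduction: one must certify that no \emph{other} envy-free sub-allocation slips through within the stated constraints. This is controlled entirely by $b$'s side — since $b$ holds a single resource, the balancing equation leaves just two possible values for $u(\pi'(b))$ — together with the observation that the relevant bound ($\kp=1$, $\lp=1$, or the deletion budget $\km$) excludes the degenerate empty allocation. Once these case distinctions are pinned down, both directions follow immediately from the equal-bundle characterisation.
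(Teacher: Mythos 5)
Your proposal is correct and follows essentially the same route as the paper: both NP-hardness claims use the identical \subsetSum construction (agent $a$ holds resources of values $x_1,\dots,x_\nu$, agent $b$ a single resource of value $S$, with $\kp=1$ resp.\ $\lp=1$ ruling out the empty allocation), the equal-bundle characterization for two agents with identical valuations does all the work, and the W[1]-hardness is a parameterized reduction from \kmSum with $\km=k$. The one divergence is your W[1]-hardness gadget, where you give $b$ a resource of value $T-S$ so that deleting a size-$\le k$ subset of $a$'s resources summing to exactly $S$ balances the two bundles; this is actually cleaner than the paper's version, which reuses the instance with $b$'s resource worth $S$ and asserts that deleting the solution set $I$ yields envy-freeness (as literally written that leaves bundles of values $T-S$ and $S$, so the paper's argument only goes through after complementing the target), and your variant matches the paper's own analogous EF1 construction in Proposition~\ref{ib:EF1-kp-km}.
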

\begin{proof}
	For the NP-hardness of \EFDGN, we present a reduction from \subsetSum.
	Let $x_1, x_2, \ldots, x_\nu$, and $S$ be an instance of \subsetSum.
	We create an instance of \EFDGN with two agents $a$ and $b$ with identical valuations.
	In the initial allocation~$\pi$ the agent $a$ has $\nu$ resources---one for
	each
	$x_i$ which $a$ and $b$ value as $x_i$.
	Agent~$b$ receives in~$\pi$ a single resource with utility value $S$.
	We set $\kp=1$.
	Suppose the instance of \subsetSum is a Yes-instance and let $I$ be the
	solution.
	Then, the agent~$a$ can delete the resources in $I$ and we get an envy-free
	allocation with at least $\kp=1$ resources.
	If the instance of \subsetSum is a No-instance, then it is impossible for
	agent $a$ to hold a bundle of value $S$.
	Since agent $b$ only has one resource of value $S$, the only envy-free
	allocation	is the empty allocation.
  Thus the instance of \EFDGN is a No-instance.
	The same reduction with the modification that $\lp=1$ shows the NP-hardness
	of \EFDGW.

	For the W[1]-hardness of \EFDGN, we present a reduction from \kmSum (as defined in \Cref{sub:prob}).
  	Let $x_1, x_2, \ldots, x_\nu$, $k$, and $S$ be an instance of \kmSum.
  	We create an instance the same as above and set $\km =
  	k$.
  	Suppose the instance of \kmSum is a Yes-instance and let $I$ be the
solution.
  	Then, the agent~$a$ can delete the resources in $I$ and we get an envy-free
  	allocation.
  	Notice that $|I| \le k =\km$.
  	If the instance of \kmSum is a No-instance, then in any envy-free
  	allocation the utility for two agents is either~0, which corresponds to the
  	empty allocation, or $S$, which needs more than $\km$ resources to be
  	deleted from agent $b$'s bundle.
  	Thus, the instance of \EFDGN is also a No-instance.
\end{proof}

This strong hardness result puts EF1 and EF in a sharp contrast, as \EFODGN is
polynomial-time solvable and \EFODG is solvable in time polynomial in $\lp$.
Concerning the later contrast, on a high level, the
reason why $\lp$ is a more powerful
parameter for \EFODG than for \EFDG is that we know by
\Cref{ib:EF1-least} that there always is an EF1 allocation where all agents
hold a bundle of utility at least $\min_{a \in A} u(\pi(a))$. Thus, for
``small'' values of $\lp$ we can return yes, while ``larger'' values of
$\lp$ allow for a tractable algorithm in $\lp$.

However, again similar to the EF1 case, using a simplified version of Step 1
from \Cref{ib:EF1-lp}, in the
following theorem, we show that in contrast to the W[1]-hardness with respect
to $\km$ and NP-hardness for $\lp=1$ from the previous theorem,
\EFDGW (and, in fact, even \EFDG) is solvable in
time polynomial in $\lm$.

\begin{restatable}{theorem}{EFgeneral}\label{ib:EF-general}
	For identical valuations, \EFDG can be solved in
$\mathcal{O}((\lm)^5+|\mathcal{I}|)$
time.
\end{restatable}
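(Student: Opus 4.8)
The plan is to use that, for identical valuations, envy-freeness is extremely rigid: an allocation $\pi'$ is EF if and only if all agents receive the same utility, since $u(\pi'(a))\ge u(\pi'(a'))$ must hold for every ordered pair $(a,a')$. Hence any solution is described by a single common target value $v$ with $\pi'(a)\subseteq\pi(a)$ and $u(\pi'(a))=v$ for all $a\in A$. As we may only delete resources, $v\le u_0:=\min_{a\in A}u(\pi(a))$, and the resulting welfare is exactly $nv$, so the welfare requirement becomes $nv\ge\lp$, equivalently $W-nv\le\lm$ with $W:=\sum_{a\in A}u(\pi(a))$. First I would observe that these two bounds pin $v$ down to a short interval: since $W\ge nu_0$ we have $\lp=W-\lm\ge nu_0-\lm$, hence $\lp/n\ge u_0-\lm/n$, so the admissible integer values of $v$ lie in $[\lceil\lp/n\rceil,u_0]$, an interval containing at most $\lm/n+1=\mathcal{O}(\lm)$ integers. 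Thus there are only $\mathcal{O}(\lm)$ candidate target values to try.

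Next I would exploit that once $v$ is fixed the problem decouples completely across agents: agent $a$ must delete a subset $P_a\subseteq\pi(a)$ of welfare exactly $\delta_a:=u(\pi(a))-v$, and these choices are independent. This is the promised simplification over the EF1 case: there is no trade-off between welfare loss and cardinality left to resolve, so no \mcknapsack step is needed; we only need, for each agent, the minimum \emph{number} of resources whose deletion realizes the forced welfare loss $\delta_a$. Writing $f_a(\delta)$ for the minimum size of a subset of $\pi(a)$ summing to $\delta$ (and $\infty$ if none exists), a candidate $v$ yields a solution if and only if $\sum_{a\in A}f_a(\delta_a)\le\km$, the welfare bound already holding by the choice of $v$. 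Computing $f_a$ is a \minsubsetSum instance with target $\delta_a$, and crucially $\delta_a\le\sum_{b\in A}\delta_b=W-nv\le\lm$, so every target is bounded by $\lm$; this is exactly a simplified version of Step~1 of \Cref{ib:EF1-lp}. Restricting to resources of value at most $\lm$ and capping multiplicities leaves $\mathcal{O}((\lm)^2)$ relevant resources per agent, and a single dynamic program computes $f_a(\delta)$ for all $\delta\in[0,\lm]$.

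The remaining ingredient, and the point that requires care, is to bound the number of agents for which a dynamic program is actually needed. The total welfare loss $W-nv$ is minimized at $v=u_0$, where it equals $W-nu_0$; hence if $W-nu_0>\lm$ the instance is a trivial No-instance, and otherwise the agents with $u(\pi(a))>u_0$, each losing at least one unit of welfare under every admissible $v$, number at most $W-nu_0\le\lm$. Moreover, if $v<u_0$ then \emph{every} agent loses at least $u_0-v\ge 1$, forcing $n\le\lm$; so whenever a candidate $v<u_0$ can be feasible the whole agent set already has size $\mathcal{O}(\lm)$. In either regime at most $\mathcal{O}(\lm)$ agents are affected, so all the tables $f_a$ can be precomputed, after which each of the $\mathcal{O}(\lm)$ candidate values of $v$ is evaluated by summing the precomputed $f_a(\delta_a)$ in $\mathcal{O}(\lm)$ time. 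Together with $\mathcal{O}(|\mathcal{I}|)$ preprocessing to read the input and compute $u_0$, $W$, and the affected agents, this stays comfortably within the claimed $\mathcal{O}((\lm)^5+|\mathcal{I}|)$ bound. I expect the main obstacle to be precisely this case split between $v=u_0$ and $v<u_0$ with its accompanying agent-count bounds, since it is what keeps every subset-sum target, and hence the entire computation, polynomial in $\lm$.
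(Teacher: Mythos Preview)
Your proposal is correct and follows essentially the same approach as the paper: guess the common target utility~$v$ (the paper calls it~$t$), observe that only $\mathcal{O}(\lm)$ values need to be tried, that at most $\lm$ agents require nontrivial deletion for any feasible~$v$, and that each agent's subproblem is a \minsubsetSum instance with target at most~$\lm$ over $\mathcal{O}((\lm)^2)$ relevant resources. The only cosmetic differences are that you make the case split $v=u_0$ versus $v<u_0$ explicit to bound the number of affected agents (the paper simply asserts $|A(t)|\le\lm$ directly), and you precompute the tables $f_a$ once rather than re-solving \minsubsetSum for every pair $(v,a)$; both stay within the stated bound.
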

\begin{proof}
	Let $t$ be the utility that each agent will have after the deletion of
	resources for its bundle. We first compute the range of all possible values
	of $t$.
	Let $t^*=\min_{a_i \in A} u(\pi(a_i))$ be the minimum utility among all agents
	in the initial allocation, then $t \le t^*$.
	Let $t_{\min}$ be the minimum integer number $t$ such that $\sum_{a_i \in A}
	(u(\pi(a_i))- t) \le \lm$.
	Since we can lose at most $\lm$ utility, we have $t \ge t_{\min}$, so $t
\in
[t_{\min},t^*]$.
	By the definition of $t_{\min}$ we have $t^*-t_{\min} \le \lm$, so we need
to
	guess at most $\lm$ different values of $t \in [t_{\min},t^*]$.

	Next for each guess of $t$, let $A(t) \subseteq A$ be the set of agents
with
utility larger than~$t$.
	Notice that $|A(t)| \le \lm$ as otherwise we need to delete more than $\lm$
	resources and the utility will be decreased by more than $\lm$.
	For every agent $a_i \in A(t)$, we need to find a subset $P_i \subseteq
	\pi(a_i)$ with the minimum size such that $u(P_i)=u(\pi(a_i))-t \le \lm$.
	Since $u(P_i) \le \lm$, subset~$P_i$ can only contain resources from $\pi(a_i)$ with
	utility at most $\lm$ and for each such value at most $\lm$ resources.
	Thus, we can find a subset $S_i \subseteq \pi(a_i)$ with $|S_i| \le
	(\lm)^2$ such that we only need to include resources from $S_i$ in $P_i$.
	This is just an instance of the \minsubsetSum problem (as introduced in
	\Cref{sub:prob}) with integers from $\{u(r) \mid r \in S_i\}$ with $|S_i| \le (\lm)^2$ and
	target sum $u(\pi(a_i))-t \le \lm$, which can be solved in
	$\mathcal{O}((\lm)^3)$ time using dynamic programming.
  We need to do this for each $t \in [t_{\min},t^*]$
  and each agent $a_i \in A(t)$ separately.
  As we have $t^*-t_{\min} \le \lm$ and $|A(t)| \le \lm$,
  this can be done in $\mathcal{O}((\lm)^5)$ time.
  Let $s_i(t)$ be the minimum size.
	For each~$t$, we check whether $\sum_{a_i \in A(t)} s_i(t) \le \km$.
	If this holds for at least one $t \in [t_{\min},t^*]$, then our instance of \EFDG is a Yes-instance.
	Otherwise, it is a No-instance.

	For the running time, we need $\mathcal{O}(|\mathcal{I}|)$ time to compute the range
$[t_{\min},t^*]$.
	So the overall running time is $\mathcal{O}((\lm)^5+|\mathcal{I}|)$.
\end{proof}

\section{Unary Valuations} \label{sub:uv}

In this section, we consider unary-encoded valuations and conduct a thorough
parameterized complexity analysis of our problems with respect to various
natural and problem-specific parameters. We start by proving that all our
problems are NP-hard even for 0/1-valuations by establishing a simple
reduction from
\textsc{Set Cover} to \EFDGNW/\EFODGNW:
\begin{restatable}{theorem}{unnp}\label{un:np}
	\EFDGNW and \EFODGNW are NP-hard for 0/1-valuations.
\end{restatable}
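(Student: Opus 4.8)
The goal is to establish NP-hardness for the decision problems \EFDGNW and \EFODGNW restricted to 0/1-valuations, via a reduction from \textsc{Restricted Exact Cover by 3-Sets} (or more generally \textsc{Set Cover}), which is NP-hard by the cited reference. Since both the number variant (bounding $\km$) and the welfare variant (bounding $\lp$) must be covered, and since for 0/1-valuations the utility of a bundle equals its cardinality, I expect a single construction to handle both EF and EF1 simultaneously — the key trick being to design gadgets where envy-freeness and envy-freeness up to one good coincide, so that one reduction settles all four problems at once.

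\medskip
\noindent\textbf{Plan of the reduction.} The plan is to take a \setCover instance with universe $S$, family $\mathcal{C}$, and bound $z$, and build an allocation instance in which one ``universe agent'' holds one resource per element of $S$, and we introduce gadget agents or gadget resources corresponding to the sets in $\mathcal{C}$. The idea is to encode the choice of which sets to include in the cover as the choice of which resources to \emph{keep} (equivalently, which to donate). First I would set up a distinguished agent whose bundle must be reduced to achieve envy-freeness; the amount by which it must be reduced forces us to delete resources that collectively ``cover'' the envy, and the cheapest way to do so (in terms of number of donated resources, i.e.\ $\km$) should correspond exactly to a small cover. Concretely, with 0/1-valuations each set $C\in\mathcal{C}$ is represented so that donating the resource(s) associated to $C$ simultaneously reduces the overgranted bundle on all three elements of $C$, mimicking that one set covers three universe elements.

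\medskip
\noindent\textbf{Key steps in order.} I would proceed as follows. (1) Fix the valuation profile so that all gadget interactions are $0/1$, and choose the number of agents and the bundle sizes so that the \emph{only} source of envy is the universe agent being too rich relative to a reference agent. (2) Argue the forward direction: given a set cover $\mathcal{C}'$ of size at most $z$, donate the resources corresponding to $\mathcal{C}'$, verify that the resulting allocation is envy-free (hence also EF1), and check that the number of donated resources is at most $\km$ and the remaining welfare is at least $\lp$ for the appropriate parameter settings. (3) Argue the reverse direction: any donation set that achieves EF (or merely EF1) while respecting the budget must, when restricted to the gadget resources, project onto a subfamily of $\mathcal{C}$ that covers every element of $S$, and whose size is bounded by $z$. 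The crux here is ensuring that partial or ``fractional'' deletions cannot help — that is, deleting part of a set's gadget without deleting all of it yields no benefit toward resolving envy, so that any feasible solution corresponds to whole sets.

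\medskip
\noindent\textbf{Main obstacle.} The hard part will be making EF and EF1 coincide in the 0/1 setting so that a single construction yields NP-hardness for all four problems, and simultaneously making the number variant ($\lp=0$) and the welfare variant ($\km=m$) both go through. With 0/1-valuations, EF1 is genuinely weaker than EF — an agent may tolerate one extra good — so I must add enough ``slack'' resources or duplicate structure that the one-good allowance cannot be exploited to shortcut the cover (e.g.\ by ensuring each covering gadget contributes utility well beyond one unit, so that EF1's single-good forgiveness never substitutes for deleting a whole set). I expect that restricting to exact cover by $3$-sets (each element in exactly three sets) is precisely what gives the rigidity needed: it fixes bundle cardinalities and the envy thresholds so that the cheapest EF1 solution and the cheapest EF solution both coincide with an exact cover, letting one reduction discharge the whole theorem.
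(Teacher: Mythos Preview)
Your high-level plan (reduce from \textsc{Set Cover} so that donating resources corresponds to choosing sets, and the budget forces a small cover) is the same as the paper's. But your concrete construction is muddled in a way that would not work as stated, and your two ``main obstacles'' are illusory.

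First, the roles are inverted. You have a ``universe agent'' holding one resource \emph{per element} of~$S$; then you speak of ``donating the resource(s) associated to~$C$'' as reducing the overgranted bundle on all elements of~$C$. With one resource per element, deleting a single resource only affects one element, so there is no set-structure to exploit and the correspondence with \textsc{Set Cover} collapses. The paper does it the other way around: the special agent~$a^*$ holds one resource~$r_C$ \emph{per set}~$C\in\mathcal{C}$, and for each element~$s$ there is an \emph{element agent}~$a_s$ who values~$r_C$ as~$1$ precisely when~$s\in C$. So~$a_s$ values~$a^*$'s bundle as~$f_s$ (the number of sets containing~$s$). Giving~$a_s$ exactly~$f_s-1$ dummy resources valued only by itself makes the envy exactly one unit, and removing that envy requires deleting at least one~$r_C$ with~$s\in C$ --- i.e., covering~$s$. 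Setting~$\km:=z$ finishes \EFDGN.

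Second, you do not need EF and EF1 to coincide, nor do you need the rigidity of exact cover to block ``partial deletions.'' Each set is a single resource, so there is nothing partial to delete. For EF1 the paper simply changes the padding: give each~$a_s$ only~$f_s-2$ dummies, so the envy is two units and the single-good allowance of EF1 still forces one~$r_C$ with~$s\in C$ to be deleted. For the welfare variants, let~$a^*$ value each~$r_C$ as~$1$ and set~$\lm:=z$. Four minor variations of one construction; no need to force EF and EF1 to agree.
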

\begin{proof}
	We first prove the theorem for \EFDGN and later describe how to adapt
	the construction the other three problems.

	We reduce from \textsc{Set Cover}.
	Given an instance~$(S,\mathcal{C},z)$ of \textsc{Set Cover}, for
	each element~${s\in S}$, let $f_s$
	denote the number of sets from $\mathcal{C}$ in which $s$ appears.
	We construct an
	instance of \EFDGN by adding an \emph{element agent} $a_s$ for each
	$s\in S$. Moreover, we create a \emph{special agent} $a^*$.
	For each $s\in S$, we introduce $f_s-1$ resources that are valued as one by
	agent
	$a_s$ and valued as zero by every other agent. We allocate these resources
	to
	$a_s$ in the initial allocation $\pi$.

	Moreover, for each set $C\in \mathcal{C}$, we introduce a resource $r_C$
	that is valued as one by all element agents corresponding to elements from
	$C$,
	i.e., $\{a_s \mid s\in C\}$ and valued as zero by every other agent. We
	allocate all these resources to the
	special agent~$a^*$ in the initial allocation $\pi$. We set~$\km:=z$. In the
	initial
	allocation, the value that each element agent has for its own bundle is one
	lower
	than
	its value for the special agent's bundle. Thus, the allocation resulting
from deleting a set~$R'$ of at most $\km$ resources from the special agent's bundle is
envy-free if and only if each element agent
	values at least one resource from $R'$ as one, which is equivalent to each
	element appearing in one
of the sets corresponding to resources from $R'$. (For \EFDGW, we need to
slightly adapt the
	construction and let the special
	agent value the resources assigned to it in the initial
	allocation as~one and set $\lm:=z$.)

	\smallskip

	To adapt the construction for \EFODGN, we first of all assume without loss
	of generality that in the given \textsc{Set Cover} instance each element
	appears in
	at least two sets (if an element appears in only one set, then
	the
	set has to appear in every solution and we can thus delete the set and all
	elements
	occurring in it from the instance). Concerning the construction, in
	the initial allocation,
	for each $s\in S$, we now only allocate $f_s-2$ resources that are
	valued as one by $a_s$ and valued as zero by everyone else to $a_s$.
	Thus,
	in the
	initial
	allocation, the value each element agent has for its own bundle is two
	lower than
	its value for the special agent's bundle and we again need to delete at
	most
	$\km$ resources from the special agent's bundle such that each element agent
	values at least one of them as one. For \EFODGW, a similar adaption works.
\end{proof}

This hardness result motivates us to explore the parameterized complexity of
our problems.\footnote{Note that in the reduction from \Cref{un:np}, we set the number/welfare of the deleted resources
	$\km/\lm$ to be the requested size $z$ of the set cover. As
\textsc{Set Cover} is W[2]-hard parameterized by
	$z$, the reduction already establishes that \EFDGNW and \EFODGNW
	are
	W[2]-hard for
	$\km/\lm$.}
	 We start by considering in \Cref{sub:sparsebalanced} our problems in case
	 that valuations and the initial allocation are ``sparse''. Subsequently, in
\Cref{sub:n}, we consider the influence of the
	 number of agents on the problem.  Afterwards, in \Cref{sub:kmlm}, we turn
	 to the problem specific parameter~$\km$ and $\lm$, i.e., the maximum
	 number/welfare of the deleted resources. Lastly,
	 in \Cref{sub:kplp}, we consider the dual parameters $\kp$ and $\lp$ that
	 quantify the minimum number/welfare of remaining resources. We also
	 consider combined parameters, where we
	 mostly include results for a combined parameter in the latest section
	 regarding one of the parameters from the combination. We refer to
\Cref{tab:results_1} for an overview of results
from
\Cref{sub:sparsebalanced,sub:n}, to \Cref{tab:results_2} 
for an overview of results from \Cref{sub:kmlm}, and to \Cref{tab:results_3} 
for an overview of results from \Cref{sub:kplp}.

\subsection{Sparse Valuations and Allocations} \label{sub:sparsebalanced}
\begin{table}[t]
\caption{Overview of our results for parameters $n$, $\wa$, $\wrr$, and~$d$ and
their combinations. }
\label{tab:results_1}
\begin{center}
\resizebox{0.9\textwidth}{!}{\begin{tabular}{l|l|l}
\EFDGN\ & \EFDGW\  & \EFDG\ \\
\EFODGN  &
\EFODGW  & \EFODG \\ \midrule \midrule
NP-h. for $d=3$/$d=7$,& NP-h. for $\wrr=5$ and 0/1
val. (Co. \ref{vw}) &\\
$\wa=4$/$\wa=7$, $\wrr=3$, and 0/1-val. (Th. \ref{nkmw}) & FPT wrt. $d$,
FPT wrt. $\wa$ (Th. \ref{thm:welfaredwa})
&\\
\midrule
W[1]-h. wrt. $n+\km$ (Th. \ref{thm:w-hard-n+k-OG}) &W[1]-h. wrt. $n+\lm$ (Th.
\ref{thm:w-hard-n+k-OG})  &\\
& & FPT wrt. $u^*+n$ (Pr. \ref{nu}) \\
& & FPT wrt. $d+n$ (Ob. \ref{vn}) \\
& & FPT wrt. $\wa+n$ (Ob. \ref{vn}) \\

    \end{tabular}}
\end{center}
\end{table}
In this section, we consider instances where the initial allocation is sparse,
i.e., the
maximum number $d$ of resources assigned to an agent in the initial allocation
is bounded, and valuations are sparse, i.e., the maximum number $\wa$ of
resources that an agent values as non-zero and the maximum number $\wrr$ of
agents that value a specific resource as non-zero are bounded (see \Cref{tab:results_1} for an overview of our results from this and the next section.
We prove that \EFDGN and \EFODGN are NP-hard even for
constant value of $d+\wa+\wrr$ and 0/1-valuations. In a sharp contrast to
this, we show that \EFDGW and \EFODGW are fixed-parameter tractable with
respect to $\wa$ or $d$ for arbitrary valuations (and NP-hard for a constant
value of~$\wrr$ and 0/1-valuations).

First of all note that one can reduce \textsc{Restricted
Exact Cover by 3-Sets}, which is an NP-hard special case of \textsc{Set Cover}
where each set consists of exactly three elements and each
element appears in exactly three sets \cite[Problem
SP2]{DBLP:books/fm/GareyJ79} to our problems (by applying the reduction from \Cref{un:np} but
reducing from \textsc{Restricted
Exact Cover by 3-Sets} instead of \textsc{Set Cover}).
This allows us to establish the following:
\begin{corollary} \label{vw}
	\EFDGN/\EFODGN is NP-hard for 0/1-valuations even if $\wa=5$/$\wa=4$ and
$\wrr=3$. \EFDGW/\EFODGW is NP-hard for 0/1-valuations even if $\wrr=4$.
\end{corollary}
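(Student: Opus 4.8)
The plan is to reuse the two reductions from the proof of \Cref{un:np} without any change in their logic, but to feed them the more rigid source problem \textsc{Restricted Exact Cover by 3-Sets}, where every set has exactly three elements and every element occurs in exactly three sets. Since this is an NP-hard special case of \setCover, the correctness argument of \Cref{un:np} transfers verbatim; all that remains is to read off how the fixed combinatorial structure of the source instance pins down the parameters $\wa$ and $\wrr$. Given a source instance with element set $S$ and family $\mathcal{C}$, I would set $z := |S|/3$, so that a sub-collection covering $S$ with at most $z$ sets is forced to be an exact cover, and then run the construction of \Cref{un:np} with $\km := z$ (number variants) or $\lm := z$ (welfare variants).

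For the number variants I would first note that $f_s = 3$ for every element $s$. In the \EFDGN construction the element agent $a_s$ thus holds $f_s - 1 = 2$ private resources and in addition values the $f_s = 3$ set-resources $r_C$ with $s \in C$, giving $2 + 3 = 5$ non-zero-valued resources; the special agent values nothing, so $\wa = 5$. For \EFODGN the element agent holds only $f_s - 2 = 1$ private resource, so the count drops to $1 + 3 = 4$ and $\wa = 4$. In both constructions a set-resource $r_C$ is valued (as one) exactly by the $|C| = 3$ element agents of its set and by nobody else, while each private resource is valued by a single agent; hence $\wrr = 3$. Combined with the equivalence already proved in \Cref{un:np}, NP-hardness for these fixed parameter values follows.

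For the welfare variants I would apply the adaptation already flagged in \Cref{un:np}: let the special agent value each of its set-resources as one and keep $\lm := z$. Deleting one set-resource then lowers the utilitarian welfare by exactly one (the loss of the special agent), so the least welfare loss that removes all envy --- in the envy-free sense for \EFDGW and up to one good for \EFODGW --- equals the minimum number of sets covering $S$, and bounding it by $\lm = z$ is again equivalent to the existence of the cover. The only effect on the valuations is that each $r_C$ is now valued by its three element agents together with the special agent, i.e.\ by four agents, giving $\wrr = 4$ as claimed (note that here $\wa$ is not bounded, consistent with the FPT result of \Cref{thm:welfaredwa}).

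I do not expect a genuine obstacle, as the heavy lifting was done in \Cref{un:np}; the only part that needs care is the parameter bookkeeping. Specifically, I must verify that the single-agent private resources never raise $\wrr$ above the value contributed by the set-resources, and that passing from \setCover to its exact-cover special case preserves the equivalence --- which is exactly why the choice $z = |S|/3$ is made, forcing any cover of size at most $z$ to be exact.
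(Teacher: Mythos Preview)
Your proposal is correct and follows essentially the same approach as the paper, which simply states that one applies the reduction from \Cref{un:np} starting from \textsc{Restricted Exact Cover by 3-Sets} instead of general \setCover. Your write-up is in fact more explicit than the paper's: you spell out the choice $z=|S|/3$ and the parameter bookkeeping ($\wa=(f_s-1)+f_s=5$ for EF, $(f_s-2)+f_s=4$ for EF1, $\wrr=|C|=3$ in the number variants and $\wrr=|C|+1=4$ once the special agent values the set-resources in the welfare variants), all of which the paper leaves implicit.
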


Indeed, using a more involved reduction, one can show that for
\EFDGN/\EFODGN this
para-NP hardness result can be extended to
also include the maximum number $d$ of resources an agent holds in the initial
allocation:
\begin{restatable}{theorem}{nkmw}\label{nkmw}
	\EFDGN is NP-hard for 0/1-valuations, even if $d=3$, $\wa=4$, and $\wrr=3$.
	\EFODGN is NP-hard for 0/1-valuations, even if $d=7$, $\wa=7$, and $\wrr=3$.
\end{restatable}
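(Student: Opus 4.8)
The plan is to reduce from \textsc{Restricted Exact Cover by 3-Sets}, reusing the skeleton of the reduction behind \Cref{un:np} and \Cref{vw} but redistributing the resources so that no agent holds more than a constant number of them. Recall that in that reduction the only agent with an unbounded bundle is the special agent~$a^*$, which holds one resource $r_C$ per set $C$; each element agent $a_s$ envies $a^*$ by exactly one (for EF), because its own bundle is worth one less than $u_{a_s}(\pi(a^*))=f_s=3$, and a single deletion of some $r_C$ with $s\in C$ simultaneously reduces the envy of the three element agents corresponding to the elements of $C$. The entire difficulty of obtaining $d=3$ therefore lies in breaking up $a^*$ while retaining this ``one deletion helps three element agents'' property, which is exactly what encodes the covering structure.

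First I would replace the monolithic special agent by one small gadget per set. For each $C=\{s_1,s_2,s_3\}$ I introduce a bounded number of new agents, whose bundles have size at most three, together with a single resource $r_C$ valued (as one) by precisely the three element agents $a_{s_1},a_{s_2},a_{s_3}$; this keeps $\wrr=3$ and lets each element agent value only a constant number of its own resources together with the set-resources of its three incident sets, keeping $\wa\le 4$. The utilities inside each set gadget are tuned so that in the initial allocation every element agent envies by exactly one, and so that the only deletions that can ever remove envy are deletions of the resources $r_C$ themselves (deleting a purely internal gadget resource must be provably useless). Setting $\km:=z$, a family of at most $z$ deletions making the instance envy-free then corresponds exactly to a subcollection $\mathcal{C}'$ with $|\mathcal{C}'|\le z$ such that every element has an incident set in $\mathcal{C}'$, i.e.\ a cover of size at most~$z$.

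The main obstacle is precisely this redistribution: the resource $r_C$ must lie in the envy-critical bundle of all three element agents of~$C$ (to preserve the covering savings), yet every bundle may hold at most three resources, so no single bundle can simultaneously witness all three incident sets of one element. I would resolve this by equipping each element agent with its own constant-size ``access'' structure, a short chain of bounded-degree gadget agents and auxiliary resources that routes its envy onto the three relevant set-resources, choosing utilities so that the gap is exactly one and can be closed only by deleting one of the element's incident $r_C$'s. The crux of the proof is then the correctness argument: establishing both directions of the equivalence and, in particular, ruling out ``mismatched'' or partial deletions inside the gadgets, so that optimal solutions are forced to delete exactly the set-resources of a cover. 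This case analysis, rather than the bookkeeping of the parameter values, is where the real work lies.

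Finally, the \EFODGN statement is obtained by the same construction with the envy gap doubled: since EF1 forgives the single most-valued good, each element agent must be made to envy its critical bundle by two instead of one, which roughly doubles the number of resources each agent holds and values. Tracking this through the gadget yields $d=7$ and $\wa=7$, while the sharing pattern of the set-resources is unchanged, so that $\wrr=3$ is preserved.
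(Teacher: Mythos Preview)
Your proposal has a genuine gap at exactly the point you flag as the ``main obstacle'': you never give the gadget. Once you distribute the set-resources~$r_C$ into per-set bundles, an element agent~$a_s$ (who values $r_{C_1},r_{C_2},r_{C_3}$ for its three incident sets) will envy three \emph{different} agents, each by one. Deleting a single $r_{C_i}$ then resolves only the envy toward that one bundle; the other two envies remain. So with $\km=z$ you would need one deletion per (element, incident-bundle) pair rather than one per covered element, and the ``one deletion helps three element agents'' covering savings is lost. Your phrase ``a short chain \ldots\ that routes its envy onto the three relevant set-resources'' does not explain how, with 0/1 utilities and no transitivity of envy, a single deletion can simultaneously close three separate envy relations toward three distinct bundles. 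Without that mechanism the reduction collapses.

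The paper does not try to preserve the covering savings in that form. It inverts the ownership: each element agent~$a(s)$ \emph{holds} the three incidence resources $r(s,C)$, and an auxiliary agent~$a(s,\force)$ forces deletion of at least one of them. Each set~$C$ gets a select/unselect gadget: resolving the envy of two auxiliary agents~$a(C,\force,1),a(C,\force,2)$ toward the set agent~$a(C)$ costs either one deletion ($r(C,+)$, ``select'') or two deletions ($r(C,-,1)$ and $r(C,-,2)$, ``unselect''); crucially, selecting~$C$ drops $a(C)$'s own utility to~0, which then forces deletion of all three incidence resources $r(s,C)$ with $s\in C$. A tight budget $\km=8t$ (not $\km=z$) together with a counting argument pins the number of selected sets to exactly~$t$ and forces them to form an exact cover. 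The EF1 version is obtained not by doubling an envy gap but by adding, for each existing envy relation, one extra resource in the envied bundle valued only by the envier, so that the ``up to one good'' slack is absorbed.
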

\begin{proof}

\begin{table*}[t]
    \caption{Valuations used in the proof of \autoref{nkmw}. To this end, let~$C \in \mathcal{C}$, and $s \in C$. 
    All valuations not covered by the table are zero.
    If a resource is contained in an agent's bundle, then the corresponding entry is written in boldface.
    The value of the initially allocated bundles are~$1$ for~$a(C)$, $1$~for~$a(C,\force,i)$ for~$i \in \{1,2\}$, $3$~for~$a(s)$ (owning three resources),
    and $2$~for~$a(s,\force)$.}
    \centering
    \resizebox{0.95\textwidth}{!}{\begin{tabular}{c c c c c c c c c c c c}
                     & $r(C,\select)$ & $r(C,\unselect,1)$ & $r(C,\unselect,2)$ & $r(C,\force,1)$ & $r(C,\force,2)$ & $r(s,C)$ & $r(s,\force,1)$ & $r(s,\force,2)$ \\% & $r(s',C')$ & $r(s',\force)$ \\
     $a(C)$          &      \tbf1     &      \tbf0         &      \tbf0         &      0          &        0        &     1    &       0         & 0   \\% &      0   &   0 \\
     $a(C,\force,1)$ &      1         &      1             &      0             &      \tbf1      &        0        &     0    &       0         & 0   \\% &      0   &   0 \\
     $a(C,\force,2)$ &      1         &      0             &      1             &      0          &    \tbf1        &     0    &       0         & 0   \\% &      0   &   0 \\
     $a(s)$          &      0         &      0             &      0             &      0          &        0        & \tbf1    &       0         & 0   \\% &      0   &   0 \\
     $a(s,\force)$   &      0         &      0             &      0             &      0          &        0        &     1    &   \tbf1         & \tbf1   \\% &      0   &   0 \\
    \end{tabular}}
    \label{fig:nkmw}
\end{table*}

    We reduce from \textsc{Restricted Exact Cover by 3-Sets}.
	Given a collection~$\mathcal{C}$ of size-$3$ subsets of a set~$S$
	where each element appears in exactly three subsets from~$\mathcal{C}$,
	the question is whether	there exists an \emph{exact cover} $\mathcal{C}'\subseteq \mathcal{C}$,
	that is, a subcollection where each element $s\in S$ appears in exactly one set from $\mathcal{C}'$.
	We assume that $S$ contains~$3t$ elements for some~$t\in \mathbb{N}$ since otherwise there clearly cannot be an exact cover.
	Let $S=\{s_1,\dots,s_{3t}\}$ denote the set of elements and $\mathcal{C}=\{C_1,\dots,C_{3t}\}$
	denote the collection.
	Note that $\mathcal{C}'\subseteq \mathcal{C}$ is an exact cover if it covers each element and~$|\mathcal{C}'|=t$.

	We create the following agents:
	\begin{itemize}
	 \item for each subset~$C\in \mathcal{C}$ create three \emph{set agents}~$a(C)$, $a(C,\force,1)$ $a(C,\force,2)$, and
	 \item for each element~$s \in S$ two \emph{element agent}~$a(s)$ and $a(s,\force)$.
	\end{itemize}
    We have the following resources for each subset $C \in \mathcal{C}$:
	\begin{itemize}
	 \item $r(C,\select)$, $r(C,\unselect,1)$ and $r(C,\unselect,2)$, owned by $a(C)$, and
	 \item $r(C,\force,i)$ owned by $a(C,\force,i)$, for $i \in \{1,2\}$.
	\end{itemize}
    Furthermore, for each element~$s \in S$ and $C \in \mathcal{C}$ with $s\in C$ we add resource
	 $r(s,C)$, owned by~$a(s)$.
	 Finally, for each $s \in S$, we add resources $r(s,\force,1)$ and $r(s, \force, 2)$, owned by $a(s,\force)$.
    The valuations are described in \autoref{fig:nkmw}.
    The budget~$\km$ is set to~$8t$.
    This completes the construction.

    \paragraph{Idea and Basic Observation.}
    The element agents for an element $s \in S$ are designed in a way that~$a(s,\force)$
    envies~$a(s)$ enforcing it to delete at least one resource~$r(s,C)$.
    Deleting more than one resource from~$a(s)$ would result in exceeding the budget~$\km$.
    In essence, deleting~$r(s,C)$ will correspond to subset~$C$ being part of the exact cover
    (and element~$s$ being covered by~$C$).
    The set agents for a set~$C \in \mathcal{C}$ are designed such that~$a(C,\force,1)$ and $a(C,\force,2)$
    both envy~$a(C)$ enforcing it to either delete resource~$r(C,\select)$ (which will correspond
    to~$C$ being part of the cover) or to delete both resources~$r(C,\unselect,1)$ and~$r(C,\unselect,2)$.
    Deleting the resource~$r(C,\select)$ requires that all resources~$r(s,C)$ for each~$s \in C$
    are deleted as well since otherwise set agent~$a(C)$ envies element agent~$a(s)$.

    \paragraph{Correctness.}
    We show that there exists an exact cover $\mathcal{C}'\subseteq \mathcal{C}$ for the given instance
    if and only if one can delete $8t$~resources to obtain an envy-free allocation.

    For the ``only if'' direction, assume there is an exact cover $\mathcal{C}'\subseteq \mathcal{C}$.
    Now, for each~$\{s,s',s''\}=C \in \mathcal{C}'$, delete the resource~$r(C,\select)$ from agent~$a(C)$
    and delete $r(s,C)$, $r(s',C)$, and $r(s'',C)$ from the agents~$a(s)$,
    $a(s')$, and $a(s'')$, respectively.
    Moreover, for each~$C \notin \mathcal{C}'$ delete~$r(C,-,1)$
and~$r(C,-,2)$. Note that, overall, we have deleted $8t$ resources.
    It is easy to see that for each~$C \in \mathcal{C}$ and $s \in S$ the agents~$a(C,\force,1 )$, $a(C, \force, 2)$, $a(s,\force)$,
    and $a(s)$ are not envious.
    Consider an agent~$a(C)$.
    If $C \notin \mathcal{C}'$, then~$a(C)$ cannot be envious because it still has resource~$r(C,\select)$ of value~$1$
    and each other agent even initially owned a bundle of value at most~$1$ from the perspective of agent~$a(C)$.
    If $C \in \mathcal{C}'$, then each other agent owns a bundle of value~$0$ from the perspective of agent~$a(C)$:
    the bundles of other set agents have value~$0$ anyway (from $a(C)$'s perspective) and the bundle of
    each element agent has now also value~$0$ since we deleted all resources~$r(s,C)$ with~$s \in C$.

    For the ``if'' direction, assume one can delete $8t$~resources to obtain an envy-free allocation.
    Clearly, due to agents~$a(C,\force)$ and $a(s,\force)$ one must have deleted at least one item from
    each set agent~$a(C)$ and one item from each element agent~$a(s)$.
    Moreover, for each set agent~$a(C)$ with~$C=\{s,s',s''\}$ from which one deletes only one resource, the deleted resource must be~$r(C, +)$ and
    one must delete three resources from element agents, namely from~$a(s)$, $a(s')$, and $a(s'')$.
    We now call a set agent~$a(C)$ from which one deletes only one resource a \emph{selected set agent};
    all other set agents are called \emph{unselected set agents}.
    We claim there are exactly~$t$ selected set agents.
    Assume towards a contradiction there are~$t-z, z\in \mathbb{N}$ such agents.
    Deleting at least~$3t$ resources from element agents, at least $4t+2z$
    resources from unselected set agents, plus the~$t-z$ resources from selected set agents
    together clearly exceeds our budget; a contradiction.
    Next, assume towards a contradiction there are~$t+z, z\in \mathbb{N}$ selected set agents~$a(C)$.
    As discussed, this requires at least~$3t+3z$ deletions from element agents.
    Moreover, one needs $4t-2z$~resources from unselected set agents and
    $t+z$~deletions from selected set agent.
    This implies altogether~$8t+2z>k^-$ deletions in total; a contradiction.
    Finally, it is easy to verify that the $t$~sets corresponding to the~$t$
    selected set agents form an exact cover:
    an element~$s \in S$ not being covered would imply an envious element agent~$a(s,\force)$ and by simple counting arguments
    one can see an element cannot be covered twice.

    \paragraph{Sketch for \EFODGN.}
    To show the result for \EFODGN, we modify the above reduction by creating
    for each $C \in \mathcal{C}$ two resources~$r(C, 1)$ and $r(C, 2)$ in the bundle of~$a (C)$ where for $i\in \{1, 2\}$, resource~$r(C, i)$ is valued as~1 by $a(C, \force, i)$ and as 0 by every other agent.
    Furthermore, for every~$ s \in S$, we add a resource $r(s, 1)$ in the bundle of~$a(s)$ where~$r(s, 1)$ is valued as~1 by $a(s, \force)$ and as 0 by every other agent.
    Finally, for every~$C\in \mathcal{C}$ and every~$s\in C$, we add a resource~$r(s, C, 1)$ in the bundle of~$a(s)$ which is valued as~1 by~$a(C)$ and as 0 by every other agent.
\end{proof}

In contrast to this, we now show that \EFDGW and \EFODGW are fixed-parameter
tractable with respect to $d$ or $\wa$ for arbitrary (even binary) valuations.
This result
is surprising
in two ways. First, we have proven in the previous theorem that
\EFDGN and \EFODGN are NP-hard even for a constant value of $d+\wa+\wrr$ and
0/1-valuations.
Notably, this is our first and only case for unary valuations where one of
\EFDGN and \EFDGW (or \EFODGN and \EFODGW) is harder than the other.
Second, we have already seen
that \EFDGW and \EFODGW are NP-hard for a constant value of the
closely
related parameter $\wrr$ for 0/1-valuations.
\begin{restatable}{theorem}{welfaredwa}\label{thm:welfaredwa}
	\EFDGW/\EFODGW are solvable in $\mathcal{O}(2^{2d} \cdot d \cdot n^2 +
|\mathcal{I}|)$ time and solvable in
$\mathcal{O}(2^{2\wa} \cdot \wa \cdot n^2 + |\mathcal{I}|)$ time.
\end{restatable}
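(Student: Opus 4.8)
The plan is to reduce both parametrizations to a single monotone fixed-point computation. First I would preprocess: since the welfare variants impose no bound on the number of donated resources ($\km=m$), any resource $r\in\pi(a)$ with $u_a(r)=0$ can be deleted for free---this leaves the welfare unchanged and can only decrease $u_b(\pi'(a))$ for every other agent $b$, hence only helps every (EF or EF1) non-envy constraint. After this step each agent's \emph{effective} bundle $E_a\subseteq\pi(a)$ contains only resources it values positively, so $|E_a|\le\wa$; trivially also $|E_a|\le|\pi(a)|\le d$. Writing $D$ for whichever of $d,\wa$ we use, every agent now keeps some subset $B_a\subseteq E_a$, of which there are at most $2^{D}$ per agent. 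The objective is to maximise the welfare $\sum_{a}u_a(B_a)$ subject to constraints that are purely pairwise: for EF, agent $b$ does not envy agent $a$ iff $c_b(B_a)\le u_b(B_b)$ with $c_b(B):=u_b(B)$; for EF1 the same inequality holds with $c_b(B):=u_b(B)-\max_{r\in B}u_b(r)$ (deleting $b$'s favourite resource of the bundle, with the $\max$ over $\emptyset$ read as $0$).

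The key idea is that the welfare-maximal feasible allocation is the \emph{greatest fixed point} of a monotone operator on value vectors. For a threshold vector $w\in\mathbb{N}_0^{A}$ define $\Phi_a(w):=\max\{\,u_a(B)\mid B\subseteq E_a,\ c_b(B)\le w_b\ \text{for all }b\neq a\,\}$ (the empty set is always feasible, so $\Phi_a(w)\ge 0$). Enlarging $w$ only enlarges each feasible set, so $\Phi=(\Phi_a)_a$ is monotone nondecreasing. I would then prove two directions. If $\pi'$ is an EF/EF1 sub-allocation with values $w_a:=u_a(B_a)$, then $b$'s non-envy of $a$ gives $c_b(B_a)\le w_b$ for every $b\ne a$, so $B_a$ is feasible for $w$ and $w_a\le\Phi_a(w)$; hence every EF/EF1 value vector is a post-fixed point $w\le\Phi(w)$. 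Conversely, if $w=\Phi(w)$, then picking for each $a$ a maximiser $B_a$ yields $u_a(B_a)=w_a$ and $c_b(B_a)\le w_b$ for all $b$, which is exactly mutual EF/EF1, so every fixed point realises a genuine EF/EF1 allocation of welfare $\sum_a w_a$. By Knaster--Tarski the greatest fixed point $w^\star$ dominates all post-fixed points componentwise; combined with the two directions this shows that $\sum_a w^\star_a$ is exactly the maximum achievable welfare, so the instance is a yes-instance iff $\sum_a w^\star_a\ge\lp$.

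To compute $w^\star$ I would start from the top $w^{(0)}_a=u_a(E_a)$ and iterate $w^{(t+1)}=\Phi(w^{(t)})$; since $\Phi(w^{(0)})\le w^{(0)}$ and $\Phi$ is monotone, the sequence decreases and stabilises at $w^\star$. Implementationally, I precompute all values $c_b(B)$ for $a\in A$, $B\subseteq E_a$, $b\in A$ in $\mathcal{O}(2^{D}\cdot D\cdot n^2)$ time, keep each agent's subsets sorted by $u_a$, and maintain a worklist: whenever a threshold $w_b$ drops, the only agents whose current subset can become infeasible are re-checked against the single changed coordinate and, if necessary, advanced to their next feasible subset (scanning their sorted list against all thresholds). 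As thresholds only ever decrease, each subset is discarded at most once, which yields the claimed $\mathcal{O}(2^{2D}\cdot D\cdot n^2+|\mathcal{I}|)$ bound for $D\in\{d,\wa\}$ (the $|\mathcal{I}|$ term absorbs reading the input and the zero-value preprocessing).

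The main obstacle is the optimality argument for the fixed point: one must use that a genuine \emph{fixed} point---not merely a post-fixed point---is needed to guarantee that the per-agent maximisers are \emph{simultaneously} mutually non-envious, and that Knaster--Tarski domination forces welfare-maximality rather than just feasibility. The second delicate point is verifying that the EF1 cost $c_b(B)=u_b(B)-\max_{r\in B}u_b(r)$ leaves the whole machinery intact; in particular, one checks that shrinking a bundle never increases $c_b$, so the downward iteration never needs to revisit a discarded subset. Everything else---the two implications and the complexity bookkeeping---is then routine.
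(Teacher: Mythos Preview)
Your approach is correct and, at its core, coincides with the paper's: both first discard owner-zero-valued resources, then iteratively tighten each agent's bundle until no further tightening is forced, and both argue optimality via a monotone domination argument. The paper phrases the iteration operationally (``while some agent~$a$ is envied, recompute the best sub-bundle of~$\pi_0(a)$ that is not envied'') and proves optimality by an induction showing that every intermediate allocation componentwise dominates every envy-free sub-allocation; this is precisely your statement that every EF/EF1 value vector is a post-fixed point of~$\Phi$ and hence dominated by the greatest fixed point. Your Knaster--Tarski framing makes the underlying lattice structure explicit, whereas the paper's induction keeps it implicit; the paper's update is Gauss--Seidel style (one agent at a time), yours is Jacobi style ($w^{(t+1)}=\Phi(w^{(t)})$), but both converge to the same greatest fixed point and yield the same $\mathcal{O}(2^{2D}\cdot D\cdot n^2+|\mathcal{I}|)$ bound.

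One small remark: your observation that ``shrinking a bundle never increases $c_b$'' is true but is not the reason the sorted-list scan never revisits a discarded subset. What you actually need is that the threshold vector~$w$ is monotone nonincreasing along the iteration, so that a subset infeasible at some stage stays infeasible thereafter; monotonicity of~$c_b$ in~$B$ is irrelevant for that step.
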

\begin{proof}
We first describe our approach for \EFDGW.
First, we delete all resources that are valued as zero by the agent holding
them
in the initial allocation and denote the resulting allocation by $\pi_0$.
Next, as long as there is an agent $a$ that is envied by some agent, we
compute a subset $S_a \subseteq \pi_0(a)$ of resources with the highest
utility for $a$ such that $a$ will not be envied after we delete
all resources in $\pi_0(a) \setminus S_a$. Subsequently, we set $a$'s bundle to
be $S_a$ (but do not update $\pi_0(a)$).
Some agent may become envied by other agents multiple times over the execution of the algorithm. Each time, we recompute the
optimal subset $S_a$ from $\pi_0(a)$. This implies, for instance, that an agent
might get back once deleted~resources.

Let $\pi^*$ be the resulting allocation.
It is clear that $\pi^*$ is envy-free.
To show that $\pi^*$ has the highest utilitarian welfare among all envy-free
allocations, we show an even stronger result.
We say an allocation (which is not necessarily envy-free) is \emph{maximal} if
every agent has higher or equal utility in this allocation than in every
envy-free allocation that can result from deleting resources from the initial
allocation $\pi$.
We show that $\pi^*$ is maximal by induction.
First of all, $\pi_0$ is maximal, as every agent has the same utility as in the
initial allocation.
Next, we show that if in some step the current allocation $\pi_1$ is maximal,
then after this step the
resulting allocation $\pi_2$ is also maximal.
Let $a$ be the agent whose bundle has been changed in this step.
Since $\pi_1$ is maximal and $a$ is the only agent with $\pi_1(a) \neq
\pi_2(a)$, it suffices to show that for each envy-free allocation $\pi'$,
we have $u_a(\pi_2(a)) \ge u_a(\pi'(a))$.
Suppose for contradiction that $u_a(\pi_2(a)) < u_a(\pi'(a))$. Then, we
modify $\pi_2$ by giving agent $a$ the bundle $S'_a:=\pi'(a) \cap \pi_0(a)$ and
get a new allocation $\pi'_2$ (where by our above assumption agent $a$
has a higher utility in $\pi'_2$ than in $\pi_2$).
We claim that in $\pi'_2$ agent $a$ is not envied by other agents.
Indeed, since $\pi_1$ is maximal and every agent $a' \in A\setminus\{a\}$ has
the same set of resources in $\pi'_2$, $\pi_2$, and $\pi_1$, we have that agent
$a'$ has higher or equal utility in $\pi'_2$ as in $\pi'$, i.e.,
$u_{a'}(\pi'_2(a'))=u_{a'}(\pi_2(a'))=u_{a'}(\pi_1(a')) \ge u_{a'}(\pi'(a))$.
Moreover, since $\pi'$ is envy-free, which means that $a$ is not envied by other
agents in $\pi'$, and $\pi'_2(a)=S'_a \subseteq \pi'(a)$, we get that $a$ is not
envied by other agents in $\pi'_2$.
However, this contradicts the choice of~$\pi_2(a)$ because $S_a'$ is a set of higher utility than~$\pi_2(a)$ such that no agent envies~$a$.
Thus, $\pi^*$ is maximal.
 For the running time, note that in $\pi_0$ every agent holds at most
 $d':=\min\{\wa,d\}$ resources.
 In each step, we can iterate, in $\mathcal{O}(2^{d'}\cdot
 d' \cdot n)$ time, over all subsets $S$ of the bundle of the selected
 agent $a$ and check whether deleting resources $\pi_0(a)\setminus S$ resolves
 the envy of all agents towards $a$.
 In each step, we update the bundle for some agent because this bundle is envied
 by other agents. Since the utility of every agent is non-increasing during
 the whole process, a bundle cannot reappear in a later step.
 Thus, we can bound the number of executed steps by the sum of the number of
 different
 subsets of~$\pi_0(a)$ for all $a \in A$, which is $2^{d'} \cdot n$.
 Thus, the running time of the algorithm is $\mathcal{O}(2^{2d} \cdot d \cdot
 n^2 + |\mathcal{I}|)$ or
 $\mathcal{O}(2^{2\wa} \cdot \wa \cdot n^2 + |\mathcal{I}|)$.
 
 For \EFODGW, we use the same algorithm with the modification that as long as
 there is an agent $a$ that is envied up to one good by other agents, we compute
 a subset $S_a \subseteq \pi_0(a)$ of resources with the highest value such that
 $a$ will not be envied up to one good after we delete all resources in $\pi_0(a)
 \setminus S_a$.
 In the proof of optimality, we define maximal by comparing with
 all EF1
 allocations.

  For the running time, note that in $\pi_0$ every agent holds at most
 $d':=\min\{\wa,d\}$ resources.
 In each step, we can iterate, in $\mathcal{O}(2^{d'}\cdot
 d' \cdot n)$ time, over all subsets $S$ of the bundle of the selected
 agent $a$ and check whether deleting resources $\pi_0(a)\setminus S$ resolves
 the envy of all agents towards $a$.
 In each step, we update the bundle for some agent because this bundle is envied
 by other agents. Since the utility of every agent is non-increasing during
 the whole process, a bundle cannot reappear in a later step.
 Thus, we can bound the number of executed steps by the sum of the number of
 different
 subsets of~$\pi_0(a)$ for all $a \in A$, which is $2^{d'} \cdot n$.
 Thus, the running time of the algorithm is $\mathcal{O}(2^{2d} \cdot d \cdot
 n^2 + |\mathcal{I}|)$ or
 $\mathcal{O}(2^{2\wa} \cdot \wa \cdot n^2 + |\mathcal{I}|)$.
 
 For \EFODGW, we use the same algorithm with the modification that as long as
 there is an agent $a$ that is envied up to one good by other agents, we compute
 a subset $S_a \subseteq \pi_0(a)$ of resources with the highest value such that
 $a$ will not be envied up to one good after we delete all resources in $\pi_0(a)
 \setminus S_a$.
 In the proof of optimality, we define maximal by comparing with
 all EF1
 allocations.
\end{proof}

\subsection{Number of Agents} \label{sub:n}

Having seen that at least for \EFDGN and \EFODGN sparse valuations and
allocations do not help
in our search for tractable cases, we now turn to the number $n$ of agents.
Before
we identify two tractable cases, we
start by showing that our problems are W[1]-hard with respect to the number
of agents even considered in combination with the number (welfare) of deleted
resources (we will examine the parameters number of welfare of deleted resources in more detail in the next subsection).

\begin{restatable}{theorem}{EFOnkl}\label{thm:w-hard-n+k-OG}
\label{thm:w-hard-n+k}
	For unary valuations, \EFDGNW and \EFODGNW are W[1]-hard
	parameterized by $n+\km$/$n+\lm$.
\end{restatable}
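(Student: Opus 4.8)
The plan is to give a parameterized reduction from \textsc{Multicolored Clique} with color classes $V_1,\dots,V_k$, taking the number~$k$ of colors as the parameter. This source is \woneh\ and, crucially, its hardness does not rely on large numbers---this matters because unary valuations force every utility to be polynomially bounded. I would construct an instance with $\mathcal{O}(k^2)$ agents in which a multicolored clique is certified by deleting only $\mathcal{O}(k^2)$ resources, so that $n=\mathcal{O}(k^2)$ and the deletion budget $\km=\mathcal{O}(k^2)$ are \emph{both} bounded by a function of~$k$. To obtain $n+\lm$ together with $n+\km$, every resource that a solution may delete is given \emph{unit} value to the agent that owns it; then the number of deleted resources and their summed welfare coincide, and the $n+\km$ and $n+\lm$ statements follow from the same bound. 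All arithmetic needed to test the clique property is instead placed into the (polynomially large, hence unary-representable) valuations of dedicated \emph{verifier} agents.

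The core gadget follows the standard edge-representation technique. For each color~$i$ I would create a \emph{vertex agent} owning one unit-valued resource $r_{i,v}$ per vertex $v\in V_i$, so that selecting a vertex of class~$i$ corresponds to deleting exactly one such resource; for each pair $i<j$ I would create an \emph{edge agent} owning one unit-valued resource per edge between~$V_i$ and~$V_j$, of which a solution deletes exactly one. A constant number of verifier agents per pair~$(i,j)$, whose valuations encode vertices through a Sidon / mixed-radix scheme (e.g.\ a vertex $v$ of class~$i$ contributes a value of the form $vM$ with $M$ larger than the class size, so that the relevant sums uniquely determine the underlying vertices), are then used to force through their envy that the endpoints of the edge selected for~$(i,j)$ are exactly the vertices selected in classes~$i$ and~$j$.

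To turn these inequalities into exact agreement I would, exactly as in the two-sided argument of \Cref{ib:ONP}, combine the lower bounds stemming from the verifiers' envy with the upper bound coming from the tight deletion budget: envy forces at least the ``right'' amount to be deleted, while the budget $k+\binom{k}{2}$ (equivalently, the welfare bound $\lm$) forbids deleting more, so any feasible solution must delete precisely one consistent vertex per class and one consistent edge per pair, which is possible if and only if the selected vertices form a multicolored clique. Since every deletable resource has unit owner value, this simultaneously yields the number and welfare variants. For the EF1 statements I would add a fixed dummy good to each verified bundle (or shift the thresholds accordingly) so that the ``up to one good'' relaxation always removes the same harmless good and leaves the binding inequalities identical to the EF case; thus one construction covers \EFDGNW and \EFODGNW.

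The main obstacle is the numeric engineering. Envy supplies only inequalities, whereas certifying a clique requires \emph{exact} equality between a chosen edge and its two endpoints, so the heart of the proof is choosing the verifier valuations and target bundle values so that the combination of envy and the budget collapses to these equalities---while keeping every utility polynomially bounded (for the unary encoding) and the deletion budget at $\mathcal{O}(k^2)$. A secondary difficulty is verifying that no cheaper escape exists: one must check that the budget is genuinely tight, that deleting resources no envious agent values is never helpful, and that the single construction really handles EF and EF1 as well as both the number and welfare variants.
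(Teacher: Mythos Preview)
Your plan has a structural gap in the consistency step. Envy is a \emph{pairwise} constraint: a verifier~$V$'s envy towards an agent~$X$ constrains only what is deleted from~$X$'s bundle, measured against~$V$'s own fixed bundle value. With separate vertex agents and edge agents, each owning unit-valued resources, the inequalities you obtain are of the form ``$V$'s valuation of the single resource deleted from~$X$'s bundle is at least some \emph{fixed} threshold''. No such inequality can couple the vertex~$v^*_i$ chosen in class~$i$ to the $i$-endpoint of the edge~$e^*_{ij}$ chosen for pair~$(i,j)$, because those two resources live in different bundles and the verifier's own value does not depend on either choice. Your tight-budget argument does force exactly one deletion per selection agent, but the budget bounds the \emph{count} (equivalently the summed owner-value, which is exactly why you made owner values unit); it says nothing about any verifier's valuation of the deleted resources and therefore cannot supply the missing cross-bundle upper bound. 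The analogy with \Cref{ib:ONP} breaks here precisely because in that proof the envying agent's valuation and the owner's valuation coincide, so the budget and the envy bound the same quantity.

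The paper avoids this by reducing from \textsc{Multidimensional Relaxed Subset Sum} (W[1]-hard in $k+k'$ with unary numbers). A single agent~$a^*$ owns all deletable resources and values each as~$1$; for each of the $k$ dimensions an agent~$a_j$ envies~$a^*$, and that one envy constraint is a linear inequality over the \emph{entire} deleted set, directly encoding $\sum_{s\in S'} s[j]\ge t[j]$. This needs only $n=k+1$ agents, $\km=\lm=k'$, and no consistency gadgetry---the clique arithmetic is already absorbed into MRSS. If you prefer a direct \textsc{Multicolored Clique} reduction, the same structural move fixes your construction: put all vertex and edge resources in one agent's bundle, so that each verifier's envy towards that agent is a genuine linear constraint over all deletions; then two complementary verifiers per incidence yield the equalities your Sidon encoding needs. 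The EF1 adaptation in the paper likewise exploits the single-bundle structure, adding for each~$a_j$ one high-value resource in~$a^*$'s bundle that always absorbs the ``up to one good'' slack.
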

\begin{proof}
   We start by proving the theorem for EF and later describe how to adapt it
for EF1.
    We reduce from the MRSS problem:
    \decprob{\textsc{Multidimensional Relaxed Subset Sum} (MRSS)}
{An integer $k$, a set $S= \{s_1, \dots, s_{\nu}\} \subseteq \mathbb{N}^k$, a
target
vector~$t \in \mathbb{N}^k$, and an integer $k'$.}
{Is there a $S' \subseteq S$ with $|S'| \le k'$ and $\sum_{s \in S'} s
\ge t$?}
MRSS is W[1]-hard parameterized by $k + k'$ even if all numbers are encoded
in unary~\cite{DBLP:journals/algorithmica/GanianKO21}.
	For $j\in [k]$ and $i\in [\nu]$, we denote the $j$-th entry of vector $s_i$
	by $s_i[j]$.
	Given an instance of MRSS, we construct an instance of
	\EFDGNW as follows.
	We assume that~$k' \le \nu$ (if $k' > \nu$, then reducing $k'$ to $\nu$
clearly results in an equivalent instance).
	The agents are $a_1, \dots, a_k, a^*$.
	Furthermore, there are $\nu+ k$ resources, namely $r_1, \dots, r_k$ and
	$r_1^*, \dots, r_{\nu}^*$.
	In the initial allocation $\pi$, for $i\in [k]$, resource $r_i$ is
	allocated
	to agent~$a_i$
	and resources $r_1^*,
	\dots, r_v^*$ are allocated to $a^*$.
	Agent~$a^*$ values resource $r_i^*$ as 1 for~$i\in [\nu]$, and all
	other resources as
	0.
	For $j\in [k]$ and $i\in [\nu]$, agent~$a_j$ values resource $r_i^*$ as
	$s_i[j]$, resource $r_j$ as
	$\sum_{i=1}^\nu s_i[j] - t[j]$, and, for $j'\neq j$, resource $r_{j'}$ as 0.
	We set $\km:= k'$ or $\lm := k' $.

	($\Rightarrow$)
	Let $S' =\{s_{i_1}, \dots, s_{i_{k'}}\}$ be a solution to MRSS.
	We delete resources $r_{i_1}^*, \dots, r_{i_{k'}}^*$, and claim that this
	results in an envy-free allocation.
	Clearly, we delete at most $\km$ resources and the welfare of the deleted
	resources is at most
	$\lm$ as $a^*$ values each of the at most $\km$ deleted resources as 1.
	Agent $a^*$ clearly does not envy any agent.
	For $j\in [k]$, agent $a_j$ values its own resource as $\sum_{i=1}^\nu
	s_i[j] - t[j]$, and
	the resources of $a^*$ as $\sum_{s \in S\setminus S'} s[j] =
	\sum_{i=1}^\nu
	s_i[j] - \sum_{s\in S'} s[j] \le \sum_{i=1}^n s_i[j] - t[j]$ and thus does
	not envy any other agent.

	($\Leftarrow$)
	Let $R'$ be a set of at most $\km$ resources/resources of welfare at most
	$\lm$ such that their deletion results in an
	envy-free allocation.
	We may assume without loss of generality that $R'$ does not contain a
	resource $r_j$ for $j\in [k]$, as deleting such a resource can never reduce
	the envy as $r_j$ is only valued as non-zero by the agent holding it in
	$\pi$ and thus removing such a resource from $R'$ results
	in a smaller
	solution.
	Furthermore, we may assume that $R'$ contains exactly $k'$ resources
	(otherwise we can add arbitrary resources $r^*_i$ to $R$).
	Thus, we have that $R' = \{r^*_{i_1}, \dots, r^*_{i_{k'}}\}$.
	We claim that $S':= \{s_{i_1} , \dots, s_{i_{k'}}\}$ is a solution to the
	MRSS instance.
	Since for all $j\in [k]$, $a_j$ does not envy $a^*$, we have that
	$\sum_{i=1}^\nu s_i[j] - t[j]
	\ge \sum_{r\in R\setminus R'} u_{a_j} (r) = \sum_{s \in S\setminus S'}
	s[j]$.
	Thus, we have $\sum_{s\in S'} s[j] \ge t[j]$, and therefore, $S'$ is a
	solution to the MRSS instance.

    {\bf Adaption to EF1:}
	We adopt the reduction for EF by adding for every $j
	\in [k]$ a resource~$\hat r_j$, which is allocated to $a^*$ and valued
	$\sum_{s\in S} s[j]$ by $a_j$, valued $ 1$ by~$a^*$, and 0 by all
	other agents.

	The forward direction works identically; to show that agent $a_{j}$ does
	not envy $a^*$ up to one resource, the resource $\hat r_j$ will be the one
	resource from $a^*$ that $a_j$ ignores.

	To see the backward direction, first note that we may assume without loss
	of generality that the solution~$R'$ does not contain a resource $\hat r_j$
	(if it did, then we could find another resource by replacing $\hat r_j$ by
	the resource which is best-valued by $a_j$ but not contained in $R'$; this
	does not change whether $a_j$ envies $a^*$. Moreover, the exchange can
only decrease
	the envy of agent $a_{j'}$ for $j' \neq j$ because $a_{j'}$ values $\hat
	r_j$ as 0, and $a^*$ does not envy any other agent because it evaluates all
	resources not allocated to $a^*$ as 0).
	The rest of the backward direction is analogous to the case of EF.
\end{proof}

On the positive side, combining the number of agents with
the maximum utility value an agent assigns
to a resource, our general problems can be encoded in an ILP where the number
of constraints is quadratic in $n$. Subsequently, we can employ the algorithm by
	Eisenbrand and Weismantel \cite{DBLP:journals/talg/EisenbrandW20}:

\begin{restatable}{proposition}{nuu}\label{nu}
	\EFDG and \EFODG are
	solvable in
	$(n^2u^*)^{\mathcal{O}(n^2)}\cdot m^2$ time.
\end{restatable}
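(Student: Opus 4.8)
The plan is to model the problem as an integer linear program (ILP) with only $O(n^2)$ constraints and with all coefficients bounded by $u^*$, and then to invoke the algorithm of Eisenbrand and Weismantel \cite{DBLP:journals/talg/EisenbrandW20}, which solves an ILP with $p$ constraints and largest coefficient $\Delta$ in time $(p\Delta)^{O(p)}\cdot\|b\|_\infty^2$ (with polynomial overhead in the number of variables). The first step is to reduce the number of ``distinct'' resources: since an agent may only keep resources from its own initial bundle, two resources owned by the same agent that induce the same valuation vector $(u_a(r))_{a\in A}\in\{0,\dots,u^*\}^n$ are fully interchangeable. I would group the resources into \emph{types}, a type being a pair of an owner $o(\tau)\in A$ and a valuation vector, of which there are at most $n\,(u^*+1)^n$; for each type $\tau$ let $m_\tau$ be its multiplicity in $\pi$ and $u_a(\tau)$ the utility $a$ assigns to a resource of type $\tau$. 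I introduce one integer variable $x_\tau\in\{0,\dots,m_\tau\}$ counting how many resources of type $\tau$ are \emph{kept}; the number of variables, at most $n(u^*+1)^n$, is absorbed by the target running time.

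For \EFDG{} the constraints are linear in the $x_\tau$: the welfare requirement $\sum_\tau u_{o(\tau)}(\tau)\,x_\tau\ge\lp$; the deletion budget $\sum_\tau(m_\tau-x_\tau)\le\km$; and, for every ordered pair of distinct agents $a,a'$, the envy-freeness constraint $\sum_{o(\tau)=a}u_a(\tau)\,x_\tau\ge\sum_{o(\tau)=a'}u_a(\tau)\,x_\tau$, which expresses $u_a(\pi'(a))\ge u_a(\pi'(a'))$. This yields $n(n-1)+2=O(n^2)$ constraints, and every coefficient is a utility value, hence at most $u^*$. After introducing one slack variable per inequality (which keeps all coefficients in $\{-u^*,\dots,u^*\}$ and leaves the number of constraints unchanged), the Eisenbrand--Weismantel algorithm with $p=O(n^2)$ and $\Delta\le u^*$ runs in $(n^2u^*)^{O(n^2)}\cdot\|b\|_\infty^2$ time; since $\|b\|_\infty\le\lp\le m\,u^*$, the factor $\|b\|_\infty^2$ contributes the claimed $m^2$ term while the remaining polynomial factors in $n$ and $u^*$ are absorbed into $(n^2u^*)^{O(n^2)}$.

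The main obstacle is \EFODG{}, where the condition that $a$ does not envy $a'$ up to one good, namely $u_a(\pi'(a))\ge u_a(\pi'(a'))-\max_{r\in\pi'(a')}u_a(r)$, contains a non-linear $\max$ term. I would handle this by guessing, for each ordered pair $(a,a')$, the value $M_{a,a'}\in\{0,\dots,u^*\}$ of that maximum in the sought allocation; there are $(u^*+1)^{n(n-1)}$ such guesses, again absorbed by the target bound. For a fixed guess I encode, for each pair, that no kept resource in $a'$'s bundle is worth more than $M_{a,a'}$ to $a$ (forcing $x_\tau=0$ whenever $o(\tau)=a'$ and $u_a(\tau)>M_{a,a'}$), that at least one kept resource attains $M_{a,a'}$ when $M_{a,a'}>0$ (via $\sum_{o(\tau)=a',\,u_a(\tau)=M_{a,a'}}x_\tau\ge1$, a constraint with unit coefficients), and the relaxed envy constraint $\sum_{o(\tau)=a}u_a(\tau)\,x_\tau\ge\sum_{o(\tau)=a'}u_a(\tau)\,x_\tau-M_{a,a'}$. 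These additions keep the number of constraints at $O(n^2)$ and the coefficients bounded by $u^*$, so each guess is solved within the same time bound, and the instance is a yes-instance iff some guess is feasible.

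The one point requiring care in the write-up is the correctness of this guessing scheme. The ``attains the maximum'' constraint together with the upper bound forces the guessed $M_{a,a'}$ to equal the true maximum value $a$ assigns inside $a'$'s kept bundle, so the relaxed envy constraint is then exactly the EF1 condition for that pair; this rules out spuriously accepting an allocation through an over-large guess. The boundary case $M_{a,a'}=0$ is handled without the attainment constraint, since then $a$ values every kept resource of $a'$ as $0$ and hence cannot envy $a'$ at all. For the correct guess (taking each $M_{a,a'}$ equal to the true maximum of an optimal EF1 solution) the ILP is feasible, giving completeness, and I would argue soundness from the two constraints above as just described.
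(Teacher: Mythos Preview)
Your proposal is correct and follows essentially the same approach as the paper: formulate an ILP with $O(n^2)$ constraints and coefficients bounded by $u^*$, handle EF1 by guessing the per-pair maximum value $M_{a,a'}\in\{0,\dots,u^*\}$ and adding an attainment constraint, then apply Eisenbrand--Weismantel. The only noteworthy deviation is that you first aggregate resources into at most $n(u^*{+}1)^n$ types and use integer-valued counting variables, whereas the paper keeps one binary variable $x_r$ per resource; this is a harmless optimization (the variable count is absorbed either way) and does not change the argument. Your explicit treatment of the boundary case $M_{a,a'}=0$ and of the ``no kept resource exceeds $M_{a,a'}$'' restriction is slightly more careful than the paper's write-up, but the underlying soundness reasoning is the same.
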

\begin{proof}
We start by considering envy-freeness.
	For
	each resource $r\in R$, we introduce a binary variable~$x_{r}$ which is $1$
	if we do not delete resource $r$ and $0$ otherwise. The given instance of
	\EFDG
	admits a solution if the following ILP admits a feasible solution:
	\begin{align}
	&\sum_{r\in \pi(a)} u_a(r)\cdot x_r \geq \sum_{r\in \pi(a')} u_{a}(r)\cdot
	x_r, \qquad  \forall (a,a')\in A\times A
	\label{cond:ef}\\
	& \qquad \qquad \sum_{a\in A}\sum_{r\in \pi(a)} u_a(r)\cdot x_r \geq  \lp,
\qquad \quad \quad
	\sum_{r\in R} x_r \geq \kp
	\label{cond:deletion}
	\end{align}
	Condition \ref{cond:ef} enforces that no agent envies another agent.
	The first part of  Condition \ref{cond:deletion} enforces
	that the allocation after the deletion of resources has utilitarian
	welfare at least $\lp$, while the second part enforces that at least $\kp$
resources
	remain. As we have $n^2+2$ constraints and the maximum value in the
	constraint matrix is upper bounded by~$u^*$, applying the algorithm by
	Eisenbrand and Weismantel \cite{DBLP:journals/talg/EisenbrandW20}, the ILP can
	be created and solved in
	$(n^2u^*)^{\mathcal{O}(n^2)}\cdot m^2$ time.

	For envy-freeness up to one good a slightly more involved approach is
	needed. For each agent pair $(a,a')\in A\times A$, we guess an integer
	$y_{a,a'}\in [0,u^*]$ which denotes the maximum value that $a$ assigns to a
	undeleted resource from $a'$'s bundle. Note that we need to guess overall
	$\mathcal{O}((u^*)^{n^2})$ different cases. The ILP is then adjusted as follows:
	\begin{align}
	y_{a,a'}+\sum_{r\in \pi(a)} u_a(r)\cdot x_r \geq \sum_{r\in \pi(a')}
	u_a(r)\cdot
	x_r, \qquad & \forall (a,a')\in A\times A
	\label{cond:ef1}\\
	\sum_{r\in \pi(a'): u_a(r)=y_{a,a'}} x_r \geq 1, \qquad & \forall (a,a')\in
	A\times A
	\label{cond:guess}\\
	\sum_{r\in \pi(a)} u_a(r)\cdot x_r \geq  \lp, \qquad & \forall a\in A
	\label{cond:util1}\\
	\sum_{r\in R} x_r \geq \kp
	\label{cond:deletion1}
	\end{align}
	The ILP differs from the ILP for EF in two places. First of all when
	comparing the utility value of $a$ for $a$'s bundle with the utility of $a$
	for
	$a'$'s bundle, we add $y_{a,a'}$ to the value of $a$ for $a$'s bundle as we
	have assumed that $y_{a,a'}$ is the utility of $a$ for an undeleted
	resource in
	the bundle of $a'$. That at least one resource $r$ in $a'$'s bundle with
	$u_a(r)=y_{a,a'}$ remains is enforced by Condition \ref{cond:guess}.
\end{proof}

The previous result implies that our general problem for both EF and EF1
is in FPT with respect to $u^*+n$, in XP with respect
to
$n$, and FPT with respect to $n$ for 0/1-valuations.

Examining now the combination of the number $n$ of agents with the parameters
introduced in
\Cref{sub:sparsebalanced}, first note that $n$ upper-bounds
the number $\wrr$ of
agents that value a resource as non-zero. However, if we combine the number $n$ of
agents with the maximum number $\wa$ of resources that an agents values as
non-zero or the
maximum number $d$ of resources an agent holds in the initial allocation, then
we can bound the number of ``relevant'' resources and thereby
the size of the whole instance in a function of the combined parameter~$n + \wa$ or $n + d$:

\begin{restatable}{observation}{vn}\label{vn}
	\EFDG and \EFODG are
	solvable in $\mathcal{O}(|\mathcal{I}|+n^2\cdot m \cdot  2^{n\cdot \wa})$
	time
and solvable in
	$\mathcal{O}(|\mathcal{I}|+n^2 \cdot  m \cdot 2^{n\cdot d})$ time.
\end{restatable}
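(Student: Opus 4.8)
The plan is to brute-force over all ways to keep or delete a suitably bounded set of resources, checking each candidate allocation directly against the fairness criterion and the two budget constraints.

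First I would handle the parameter $n+d$. Since the initial allocation~$\pi$ is complete and each agent holds at most~$d$ resources, the total number of resources satisfies $m\le n\cdot d$. Hence I can simply enumerate all $2^m\le 2^{n\cdot d}$ subsets $S\subseteq R$ of resources to keep. For each candidate~$S$, I set $\pi'(a):=\pi(a)\cap S$ for all $a\in A$ and check whether $\pi'$ is envy-free (respectively envy-free up to one good), whether $\sum_{a\in A}|\pi(a)\setminus\pi'(a)|\le\km$, and whether $\sum_{a\in A}u_a(\pi'(a))\ge\lp$; the given instance is a Yes-instance if and only if some candidate passes all three tests. Each such test runs in $\mathcal{O}(n^2\cdot m)$ time, and together with the $\mathcal{O}(|\mathcal{I}|)$ needed to read the input this yields the claimed $\mathcal{O}(|\mathcal{I}|+n^2\cdot m\cdot 2^{n\cdot d})$ bound for both \EFDG and \EFODG.

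For the parameter $n+\wa$ the number of resources is no longer bounded, so I would first split the resources into \emph{relevant} ones, valued as non-zero by at least one agent, and \emph{irrelevant} ones, valued as zero by every agent. Since each agent values at most~$\wa$ resources as non-zero, there are at most $n\cdot\wa$ relevant resources. The key observation is that an irrelevant resource contributes~$0$ to every agent's utility, so keeping or deleting it affects neither the envy relation (for EF and EF1 alike) nor the welfare $\sum_{a\in A}u_a(\pi'(a))$; it only influences the number of deleted resources. Consequently, deleting an irrelevant resource is never beneficial: retaining all irrelevant resources never creates envy, never lowers the welfare, and minimizes the number of deletions. Thus, whenever a solution exists, there is one that keeps all irrelevant resources, and it suffices to enumerate only the $2^{n\cdot\wa}$ subsets of relevant resources to keep. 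For each such subset I additionally keep every irrelevant resource and again verify the fairness criterion, the welfare bound~$\lp$, and the deletion bound~$\km$ in $\mathcal{O}(n^2\cdot m)$ time, giving the bound $\mathcal{O}(|\mathcal{I}|+n^2\cdot m\cdot 2^{n\cdot\wa})$.

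The only point requiring genuine care is the argument that retaining all irrelevant resources is without loss of generality; everything else is routine enumeration and checking. This step, however, follows immediately from the fact that all-zero-valued resources are inert with respect to both the fairness notion and the welfare objective, so I expect no real obstacle, and the same reasoning applies verbatim to the EF1 variant since removing a zero-valued resource from another agent's bundle can never eliminate envy up to one good.
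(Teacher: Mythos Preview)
Your proposal is correct and follows essentially the same approach as the paper: brute-force over the at most $2^{n\cdot d}$ (respectively $2^{n\cdot \wa}$) subsets of resources, with the observation for the $\wa$ case that resources valued as zero by everyone can be kept without loss of generality. If anything, your treatment of the irrelevant-resource step is slightly more careful than the paper's terse ``never makes a difference,'' since you explicitly note that deleting such a resource does affect the $\km$ count and hence should be avoided.
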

\begin{proof}
 	Note that as each of the $n$ agents only values at most $\wa$ resources as
	non-zero, there can exist at most $n\cdot \wa$ resources that are valued as
	non-zero by some agent. As deleting a
	resources that is
	valued as zero by everyone never makes a difference, it suffices to guess
	for
	each of the
	$n\cdot \wa$ resources that are valued as non-zero by some agent whether the
	resource is deleted or not and check for each guess whether the
	resulting allocation is envy-free (up to one good).

	For the parameter combination $n$ and $d$ this is even more obvious, as the
	total number of resources is bounded by $n\cdot d$.
\end{proof}

\subsection{Number/Welfare of Deleted Resources} \label{sub:kmlm}

\begin{table}[t]
\caption{Overview of our results for parameters $\km$ and $\lm$ and all
parameter combinations they are involved in.}
\label{tab:results_2}
\begin{center}
\resizebox{0.9\textwidth}{!}{\begin{tabular}{l|l|l}
\EFDGN\ & \EFDGW\  & \EFDG\ \\
\EFODGN  &
\EFODGW  & \EFODG \\ \midrule \midrule
& XP wrt. $\lm$ & XP wrt. $\km$ \\
 &  & NP-h. for $\wa=5$/$\wa=4$, \\
  &  &  $\wrr=3$, $\lm=0$, and 0/1-val. (Co. \ref{pr:lmgneral}) \\
 W[2]-h. wrt. $\km$  & W[2]-h. wrt. $\lm$  & \\
 for 0/1-val. (Co. \ref{km}) &for 0/1-val.
 (Co. \ref{km}) & \\\midrule
W[1]-h. wrt. $n+\km$ (Th. \ref{thm:w-hard-n+k-OG}) &W[1]-h. wrt. $n+\lm$ (Th.
\ref{thm:w-hard-n+k-OG})  &\\ \midrule
& FPT wrt. $d$ (Th. \ref{thm:welfaredwa})  & FPT wrt. $d+\km$ (Pr. \ref{vkm})
 \\
&   & XP wrt. $d+\lm$ (Pr. \ref{EFDG:lmd})  \\
&   & W[1]-h. wrt. $\lm$  \\
&   & for $d=4$/$d=5$ (Pr. \ref{EFDG:lmd})  \\
& FPT wrt. $\wa$ (Th. \ref{thm:welfaredwa}) & FPT wrt. $\wa+\km$ (Pr.
\ref{vkm})  \\
& FPT wrt. $u^*+\wrr+\lm$ (Pr. \ref{kmw}) & FPT wrt. $u^*+\wrr+\km$ (Pr.
\ref{kmw})

    \end{tabular}}
\end{center}
\end{table}
In this section, we examine the influence of the number/welfare~$\km$/$\lm$ of resources to be
deleted and identify some tractable cases (see \Cref{tab:results_2} for an
overview of our results).
On the hardness side, in \Cref{thm:w-hard-n+k,thm:w-hard-n+k-OG}, we
have already shown that this parameter (even in combination with the number of
agents) is not sufficient to lead fixed-parameter tractability (unless FPT=W[1]).
Moreover, recall
that in \Cref{un:np} we have constructed a (parameterized) reduction from the
W[2]-hard \textsc{Set Cover} problem to our problems which also establishes the
following:

\begin{corollary} \label{km}
	Parameterized by $\km$/$\lm$, \EFDGNW and \EFODGNW are W[2]-hard for
	0/1-valuations.
\end{corollary}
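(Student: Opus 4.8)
The plan is to observe that the corollary follows essentially for free from the reduction already given in the proof of \Cref{un:np}, once we check that this reduction is \emph{parameter-preserving}. Concretely, I would first recall the standard fact that \setCover parameterized by the target cover size $z$ is W[2]-hard (this is one of the canonical W[2]-complete problems). The goal is then to verify that the polynomial-time many-one reduction from \Cref{un:np} upgrades to a parameterized reduction from \setCover (parameterized by $z$) to each of \EFDGNW and \EFODGNW (parameterized by $\km$ respectively $\lm$).

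To carry this out, I would re-examine the construction from \Cref{un:np}. Given a \setCover instance $(S,\mathcal{C},z)$, that reduction builds an element agent $a_s$ for each $s\in S$, a special agent $a^*$ holding one resource $r_C$ per set $C\in\mathcal{C}$, and uses only $0/1$-valuations. Crucially, for the number variants (\EFDGN, \EFODGN) the budget is set to $\km:=z$, and for the welfare variants (\EFDGW, \EFODGW) the analogous welfare bound is set to $\lm:=z$. Thus the parameter of the produced instance is \emph{exactly} the parameter $z$ of the source instance.

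The final step is to confirm this satisfies the definition of a parameterized reduction: the construction runs in time polynomial in $|S|+|\mathcal{C}|$ (hence in FPT time), and the new parameter value equals $z$, so it is trivially bounded by a function of the source parameter. Since W[2]-hardness is preserved under parameterized reductions and \Cref{un:np} already established correctness of the equivalence between the \setCover instance and the constructed allocation instance for all four problems, the W[2]-hardness of \EFDGNW and \EFODGNW with respect to $\km$/$\lm$ for $0/1$-valuations follows immediately.

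I do not expect any genuine obstacle here: the entire content of the argument is the routine verification that the map $z\mapsto\km$ (respectively $z\mapsto\lm$) in the existing reduction is parameter-preserving, which is immediate by construction. The only point worth stating explicitly is that one must invoke the \setCover parameterization by cover size (not by universe or family size) to get the W[2] lower bound, and that the $0/1$-valuation restriction is inherited directly because \Cref{un:np} already uses $0/1$-valuations.
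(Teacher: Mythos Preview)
Your proposal is correct and takes essentially the same approach as the paper: the paper also simply observes that the reduction from \Cref{un:np} is parameter-preserving (with $\km=z$ respectively $\lm=z$) and invokes the W[2]-hardness of \setCover parameterized by the cover size~$z$.
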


On the algorithmic
side, there is an XP algorithm with respect to parameter $\km$ for \EFDG
and \EFODG
running in time $\mathcal{O}(|\mathcal{I}|+m^{\km}\cdot \km\cdot n^2)$ by
simply iterating over all size-$\km$ subsets of resources and checking whether
the allocation that results from deleting these resources is envy-free (up to
one good).
For the parameter $\lm$, there also is an XP algorithm running in
$\mathcal{O}(|\mathcal{I}|+m^{\lm}\cdot \lm\cdot n^2)$ for \EFDGW and \EFODGW
by first deleting
all resources that are valued as zero by the agent holding it (which never
creates any envy) and subsequently iterating over all size-$\lm$
subsets of resources and checking whether additionally deleting these resources
makes the allocation envy-free (up to one good). However, in the first step of
the XP algorithm for $\lm$, an arbitrary number of resources might get deleted
which raises
the
question whether there is an XP algorithm for the parameter $\lm$ for the
general problems.
In fact, using the construction from \Cref{un:np} for \EFDGN/\EFODGN and
reducing from
\textsc{Restricted Exact Cover by 3-Sets} while setting
$\km=z$ and $\lm=0$, rules out this possibility:

\begin{corollary} \label{pr:lmgneral}
 \EFDG/\EFODG is NP-hard for 0/1-valuations even if $\lm=0$, $\wa=5$/$\wa=4$,
and
 $\wrr=3$.
\end{corollary}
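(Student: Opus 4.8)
The plan is to observe that \Cref{pr:lmgneral} follows almost immediately by combining the reduction from \Cref{un:np} with the stronger sparsity guarantees provided by \textsc{Restricted Exact Cover by 3-Sets}. The key realization is that in the construction from \Cref{un:np}, the parameter $\km$ (respectively $\lm$) equals the size bound $z$ of the desired set cover. For the general problem \EFDG/\EFODG, we have \emph{both} a bound $\km$ on the number of deleted resources and a bound $\lm$ on the welfare lost. The crucial trick is to set $\km = z$ (the set-cover size) while independently forcing $\lm = 0$. This is achievable precisely because in the \EFDGN-variant of the reduction, the resources we are forced to delete come from the special agent's bundle, and we can arrange the special agent to value those resources as zero.

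First I would recall the reduction from \Cref{un:np}: we create one element agent $a_s$ per element $s \in S$ and one special agent $a^*$ who initially owns all resources $r_C$ (one per set $C \in \mathcal{C}$). Deleting a set $R'$ of resources from $a^*$'s bundle yields an envy-free (resp.\ EF1) allocation if and only if $R'$ corresponds to a set cover. When reducing from \textsc{Restricted Exact Cover by 3-Sets} instead of general \textsc{Set Cover}, each set has exactly three elements and each element appears in exactly three sets, which immediately gives the sparsity bounds $\wa = 5$/$\wa = 4$ and $\wrr = 3$ already established in \Cref{vw}. The one remaining ingredient is to ensure $\lm = 0$: since in the \EFDGN-construction the special agent values \emph{its own} resources as zero (the resources $r_C$ are valued as one only by the relevant element agents and zero by everyone else, including $a^*$), deleting any number of them causes \emph{zero} welfare loss. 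Hence setting $\km = z$ and $\lm = 0$ is consistent, and the reduction goes through verbatim for the general problem.

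The main obstacle — though a mild one — is verifying that $\lm = 0$ is genuinely attainable without altering the equivalence. Concretely, one must check that in an optimal solution the \emph{only} resources ever deleted are the zero-valued $r_C$ resources owned by $a^*$; deleting any element-agent resource (valued one by its owner) would both be useless for resolving envy and would incur positive welfare loss, violating $\lm = 0$. Since envy is directed from element agents toward $a^*$ and is resolved solely by removing resources from $a^*$'s bundle, no beneficial deletion ever touches a positively-valued resource, so the constraint $\lm = 0$ is satisfied by exactly the solutions corresponding to set covers. The \EFODG case is handled by the same adaptation described in the proof of \Cref{un:np}, where the value gap between each element agent's own bundle and $a^*$'s bundle is increased to two so that the EF1 slack is absorbed correctly; the welfare-loss argument is identical. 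Putting these pieces together establishes the claimed NP-hardness under the stated constant parameter values together with $\lm = 0$.
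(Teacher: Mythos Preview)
Your proposal is correct and follows essentially the same approach as the paper, which justifies the corollary in a single sentence by applying the \EFDGN/\EFODGN construction from \Cref{un:np} to \textsc{Restricted Exact Cover by 3-Sets} and setting $\km = z$, $\lm = 0$. Your additional verification that the only resources one ever needs to delete are the zero-valued $r_C$'s in $a^*$'s bundle (so that $\lm = 0$ is genuinely attainable and the equivalence is preserved) makes explicit what the paper leaves implicit.
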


We now consider $\km/\lm$ in combination with sparse valuations or a sparse
initial allocation and design several fixed-parameter tractable algorithms.
Using a search-tree approach, we show fixed-parameter tractability of
the parameter combination $d+\km/d+\lm$ and $\wa+\km/\wa+\lm$ for arbitrary
(even binary) valuations (notably, we have
already proven in \Cref{thm:welfaredwa} that \EFDGW/\EFODGW is fixed parameter
tractable in $d$ or $\wa$):

\begin{restatable}{proposition}{vkm}\label{vkm}
	\EFDG/\EFODG is solvable in  $\mathcal{O}(|\mathcal{I}|+  n\cdot
	d^{\km})$ time and solvable in $\mathcal{O}(|\mathcal{I}|+  n\cdot
	\wa^{\km})$ time .
	\EFDGW/\EFODGW is solvable in $\mathcal{O}(|\mathcal{I}|+  n\cdot
	d^{\lm})$ time and solvable in $\mathcal{O}(|\mathcal{I}|+  n\cdot
\wa^{\lm})$ time.
\end{restatable}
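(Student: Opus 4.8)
The plan is to solve \EFDG/\EFODG via a bounded-depth search tree that branches on unresolved envy, where the branching factor is controlled by $d$ (or $\wa$) and the depth by $\km$ (or $\lm$). The key observation is that deleting a resource valued as zero by the agent holding it never helps resolve any envy, so after a linear-time preprocessing step removing all such resources we may assume every resource in an agent's bundle is valued positively by its owner. In particular, each agent now holds at most $d':=\min\{d,\wa\}$ resources (here we use that $\wa$ bounds the number of resources an agent values nonzero, which after preprocessing coincides with bundle membership). This gives us a handle on how many ways a single ``deletion decision'' can be made.

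First I would describe the recursive procedure. Check in $\mathcal{O}(n^2 d)$ time whether the current allocation is envy-free (up to one good). If it is, report success; if the budget $\km$ (resp.\ $\lm$) is already exhausted and envy remains, report failure on this branch. Otherwise, pick any agent $a$ that is envied (up to one good) by some agent $a'$. To eliminate this particular instance of envy we must delete at least one resource from $a$'s current bundle, and there are at most $d'$ choices for which resource to delete. I would branch on these at most $d'$ possibilities, in each branch removing the chosen resource, decrementing $\km$ by one (resp.\ decrementing $\lm$ by the utility $u_{a'}(r)$ the envying agent assigns to it, which after preprocessing is the relevant quantity for the owner but for the welfare budget we decrement by the owner's utility $u_a(r)\ge 1$), and recursing. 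Since every deletion reduces the remaining budget by at least one (for $\km$) or by at least one unit of welfare (for $\lm$, as every remaining resource is positively valued by its owner), the recursion depth is at most $\km$ (resp.\ $\lm$), giving at most $(d')^{\km}$ (resp.\ $(d')^{\lm}$) leaves. Multiplying by the $\mathcal{O}(n^2 d)$ envy-check per node and folding the per-node overhead into the stated bounds yields the running times $\mathcal{O}(|\mathcal{I}|+n\cdot d^{\km})$ and $\mathcal{O}(|\mathcal{I}|+n\cdot \wa^{\km})$ for \EFDG/\EFODG, and analogously with $\lm$ for the welfare variants.

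The correctness of the branching rests on the fact that any solution must, for the chosen envied agent $a$, delete some resource from $a$'s bundle (otherwise the envy of $a'$ toward $a$ persists), so one of our branches is consistent with that solution; an easy induction on the budget then shows the search tree finds a solution whenever one exists. \textbf{The main obstacle} I expect is handling the welfare parameter $\lm$ cleanly: for $\lm$ the branching factor must still be bounded by $d'$, but the depth argument relies on each deletion consuming a strictly positive amount of the welfare budget, which is exactly what the preprocessing guarantees since every surviving resource has positive value to its owner. A secondary subtlety is that for \EFODG the envied agent $a'$ ignores one resource of $a$'s bundle, so resolving the envy-up-to-one-good may require a slightly more careful argument about which resource to delete; however, since we branch over \emph{all} resources of $a$'s bundle, every deletion a solution could make is covered, so the exhaustive branching remains correct without modification.
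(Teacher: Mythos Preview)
Your search-tree idea matches the paper's, but the preprocessing step contains a genuine error that breaks the~$\wa^{\km}$ bound for the general problem.

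The claim ``deleting a resource valued as zero by the agent holding it never helps resolve any envy'' is false. If agent~$a$ owns a resource~$r$ with $u_a(r)=0$ but some other agent~$a'$ has $u_{a'}(r)>0$, then deleting~$r$ strictly decreases $u_{a'}(\pi(a))$ and may be essential for eliminating $a'$'s envy toward~$a$. The correct statement (which the paper uses for the~$\lm$ variants) is that such a deletion can only \emph{reduce} envy and costs zero welfare---not that it is useless.

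This distinction matters for \EFDG/\EFODG with the~$\km$ budget. Your preprocessing deletes potentially many resources, and each such deletion must be charged against~$\km$; you cannot simply discard them from the instance either, since by the example above a solution may need to delete exactly such a resource. Concretely, if~$a$ holds $\km+1$ resources each valued~$0$ by~$a$ and~$1$ by an agent~$a'$ who holds nothing, your preprocessing removes all of them and either exceeds~$\km$ or (if the deletions are not charged) outputs something that is not a valid sub-allocation of~$\pi$ within the~$\km$ bound.

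The paper handles the~$\km$ case without any preprocessing: when~$a'$ envies~$a$, it branches only over the resources in~$a$'s bundle that the \emph{envying} agent~$a'$ values as nonzero. There are at most~$\wa$ such resources (by definition of~$\wa$ applied to~$a'$) and at most~$d$ (trivially), and any solution must delete at least one of them, so the branching is complete with factor~$\min(d,\wa)$. Your preprocessing was only introduced to force the owner's bundle size down to~$\wa$, but this is unnecessary once you restrict the branching to the envier's nonzero resources. For the~$\lm$ variants the paper does perform the preprocessing, with the correct justification, and explicitly remarks that this step is precisely why the~$\lm$ algorithm does not extend to the general problem with a~$\km$ constraint.
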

\begin{proof}
	We solve these problems using a simple search-tree approach and start by
	considering the algorithms for $\km$:
	Until
	the allocation $\pi$ is envy-free (up to one good) or the budget is zero,
	we
	check whether there still exists an agent $a\in A$ which envies another
	agent
	$a'\in A$ (up to one good). In this case we branch over deleting one
	resource
	that
	$a$ values as non-zero from the bundle of $a'$ in $\pi$. In case the
	budget
	becomes zero and $\pi$ is not envy-free (up to one good), we reject the
	current
	branch.

	The correctness of the algorithm is immediate so it remains to consider its
	running time. Initially, we can check which agents envy each other (up to
	one
	good) in $\mathcal{O}(|\mathcal{I}|)$. After deleting a resource from $a'$'s
	bundle, we can update this information in $\mathcal{O}(n)$ (as we
	only
	need to consider agent pairs involving $a'$). The depth of the recursion is
	bounded by $\km$. Moreover, in case we can bound the size of each agent's
	bundle by $d$, the branching factor is at most $d$, while in case we can
	bound
	the number of resources an agent values as non-zero by $\wa$, we can bound
	the
	branching factor by $\wa$. In all cases, the running times stated above
	follow.

	For $\lm$, we start by deleting all resources that are valued as zero by
	the
	agent holding the resource in $\pi$. By doing so, we can only reduce envy
	and
	do not change the budget. After this, we can delete at most $\lm$
	additional
	resources, as each remaining resource is valued as non-zero by the agent
	holding it. Thus, we can again employ the search-tree approach described
	above
	to solve the problem.
	Notably, this algorithm for $\lm$ does not solve the general problem as in
	its
	first step an arbitrary number of resources might get deleted thereby
	violating a possible bound $\km$ on the number of resources that got
	deleted.
\end{proof}
As already observed above for the XP algorithm, here again $\lm$ in contrast to
$\km$ is
not enough to establish an algorithmic result for the general problem. Observe
that we have already seen in \Cref{pr:lmgneral} that both general problems are
NP-hard for constant $\wa$ and constant $\lm$.
Moreover,
we show in the following proposition using a slightly involved reduction from
\textsc{Clique} that the FPT algorithm for $d+\lm$
cannot be extended to the general problem (like for $d+\km$). However,
\EFDG/\EFODG
parameterized by $d+\lm$ is in XP for arbitrary (binary or unary) valuations.

\begin{restatable}{proposition}{lmd}\label{EFDG:lmd}
	\EFDG/\EFODG is solvable in $\mathcal{O}(|\mathcal{I}|+m^{\lm}\cdot 2^d
\cdot m
\cdot  n)$ time.
For unary valuations, parameterized by $\lm$, \EFDG/\EFODG is
	W[1]-hard even if $d=4$/$d=5$.
\end{restatable}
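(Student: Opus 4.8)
The statement has two independent parts, and I would treat them separately.

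The plan for the XP algorithm is to exploit that the total welfare of deleted resources is at most $\lm$, so at most $\lm$ of the deleted resources can have positive value for the agent currently holding them. First I would split the resources into those an agent values positively in its own bundle and those it values as zero. In any solution the deleted positively-owned resources form a set $D$ of $u$-welfare at most $\lm$, so $|D|\le \lm$, and I would enumerate all $\mathcal{O}(m^{\lm})$ such candidate sets $D$. Fixing $D$ determines every agent's value for its own surviving bundle, because deleting a resource an agent values as zero never changes that agent's own utility. Hence the remaining decisions---which zero-owned resources to additionally delete---decompose per bundle: deletions from $a'$'s bundle only affect whether another agent envies $a'$ (up to one good), and the comparison thresholds $u_a(\pi(a)\setminus D)$ are already fixed. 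Since $|\pi(a')|\le d$, I would enumerate all $\le 2^d$ subsets of $a'$'s bundle, keep those that contain $D\cap\pi(a')$, delete only zero-owned resources beyond $D$, and leave no agent envying $a'$ (respectively, envying $a'$ up to one good), and record the minimum-cardinality such subset. Summing these minima over all $a'$ and adding $|D|$ gives the fewest deletions compatible with $D$; I accept if this is at most $\km$ for some $D$ (the welfare bound is automatic, as it depends only on $D$). Charging the envy checks to $\sum_{a'}2^{|\pi(a')|}\,|\pi(a')|\cdot n\le 2^d m n$ per choice of $D$ yields the claimed running time.

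For the hardness I would reduce from \textsc{Clique} with solution size $k$, setting $\lm$ to a function of $k$ (roughly $k+\binom{k}{2}$) so that essentially only a clique's worth of positive-welfare deletions is affordable, while leaving $\km$ large so that zero-valued resources may be deleted for free. The key design principle is that every \emph{selectable} resource carries welfare exactly one for its owner, so that $\lm$ literally counts the number of selecting deletions and stays $\mathcal{O}(k^2)$ independently of $|V|$ and $|E|$. I would build a constant-size vertex-selection gadget for each of the $k$ slots and a constant-size edge-selection gadget for each of the $\binom{k}{2}$ pairs of slots, each equipped with a pendant ``pressuring'' agent that envies the gadget's owner and forces at least one positive-welfare deletion there; the welfare budget then forces exactly one deletion per gadget. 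Consistency---that the edge chosen for a pair $\{i,j\}$ is incident to the vertices chosen in slots $i$ and $j$, so the $k$ chosen vertices are pairwise adjacent---would be enforced by checker agents whose valuations reference resources across the vertex- and edge-gadgets, so their envy can be removed within the remaining (zero-welfare) budget exactly when the local choices agree with the adjacency structure of $G$. Capping every bundle at four resources realizes $d=4$ for \EFDG. For \EFODG I would give each pressuring agent one extra dedicated high-value resource that it alone values, so that the ``ignored good'' of the EF1 relaxation is consumed by this dummy rather than by a gadget resource; this neutralizes the up-to-one-good slack and raises the maximum bundle size to five, giving $d=5$.

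The routine parts---bounding the enumeration, the per-bundle decomposition, and the welfare bookkeeping---are straightforward. The main obstacle will be the consistency enforcement: with bundle size capped at a small constant I cannot store a vertex's or an edge's identity inside a single agent, so the checker gadgets must certify incidence purely through cross-gadget valuations together with the tight welfare budget, and I must rule out that some unintended cheap (zero-welfare) deletion spuriously repairs a checker's envy when the chosen edge and vertices are inconsistent. Making these gadgets interact without such ``shortcuts'' while simultaneously respecting $d=4$ (resp.\ $d=5$) and $\lm=\mathcal{O}(k^2)$ is the delicate core of the reduction.
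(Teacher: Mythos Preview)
Your XP algorithm is essentially the paper's: guess the at most $\lm$ deleted resources that their owner values positively, then independently for each agent enumerate the $\le 2^d$ subsets of owner-value-zero resources and pick a smallest one that removes all envy (up to one good) toward that agent. This is correct and matches the paper.

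The hardness part, however, has a genuine gap. You propose a slot-based (Multicolored Clique style) reduction with $k$ vertex-selection gadgets and $\binom{k}{2}$ edge-selection gadgets, and you yourself identify the obstacle: with $d$ bounded by a constant, a gadget agent cannot carry enough resources to encode the identity of a vertex or an edge, so consistency must be enforced ``purely through cross-gadget valuations together with the tight welfare budget''. You do not explain how to do this, and it is not clear that it can be done at all along these lines without blowing up either $d$ or $\lm$. The paper avoids the whole issue by \emph{not} using slots. It reduces directly from \textsc{Clique}: one agent per vertex and one per edge. A vertex agent~$a_v$ holds two resources (own value~$3$; the other valued~$0$ by $a_v$ but~$1$ by all incident edge agents). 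An edge agent~$a_e$ holds four resources: three valued~$0$ by $a_e$ and~$1$ by the two endpoint vertex agents, and one valued~$1$ by $a_e$ and~$3$ by the endpoints. Every vertex agent initially envies every incident edge agent ($3$ vs.\ $6$), and the envy at~$e$ can be repaired either ``cheaply'' (delete the three zero-welfare resources, costing $3$ against $\km$ and $0$ against $\lm$) or ``expensively'' (delete the welfare-$1$ resource, which drops $a_e$'s utility to~$0$ and forces deleting the zero-welfare resource at each endpoint vertex). Setting $\lm=\binom{t}{2}$ caps the number of expensive repairs, and setting $\km=3|E|-2\binom{t}{2}+t$ makes a simple counting argument force \emph{exactly} $\binom{t}{2}$ expensive repairs whose endpoints are covered by exactly $t$ vertex-side deletions---i.e., a clique. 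No identity has to be stored anywhere; incidence lives in the valuations, and the clique structure is extracted from the interplay of the two budgets.

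For EF1 the paper does not add a dummy to the \emph{envying} side (that would not help, since the ignored good must lie in the envied bundle). Instead it duplicates the key resources in the envied bundles: each vertex agent gets a second zero-own-value resource, and each edge agent gets a second welfare-$1$ resource, so the up-to-one-good slack is absorbed while the same counting argument goes through; this is what pushes $d$ from $4$ to $5$.
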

\begin{proof}
	We always consider EF first and afterwards describe how our approach for EF
	can be adapted for EF1.

	\textbf{XP algorithm for ${d+\lm}$:} We start by guessing the set of
	resources to delete that are valued as non-zero by the agent holding it
	(there can exist at most $\lm$ of them) and delete them from the initial
	allocation. Subsequently, we know that we are only allowed to delete
	resources that are valued as zero by the agent holding it (notably we
	cannot create new envy by deleting such resources). Thus, we can solve the
	problem of which further resources to delete for each of the agents
	separately: For each agent $a\in A$, we iterate over all subsets of
	resources the agent holds and values as zero. For each such subset, we
	check whether deleting it resolves the envy of all agents toward $a$. If no
	such subset exists, we reject the current guess; otherwise we delete the
	smallest such subset (breaking ties arbitrarily). After executing this
	procedure for all agents, we check whether we deleted at most $\km$
	resources. If this is the case, then we return yes and otherwise we reject
	the current guess.  For EF1, the same algorithm where we check for each
	subset of
	resources an agent holds and values as zero whether deleting this
	subset of resources resolves the envy up to one good of all other agents
	towards this agent.
	\medskip

	\textbf{W[1]-hardness for $\lm$ for constant $d$:} We show the hardness
	result by a reduction from \textsc{Clique}.
	\textsc{Clique} asks, given a graph~$G = (V, E)$ and an integer~$t$, whether $G$ contains a clique of size~$t$, that is, a set of $t$ pairwise adjacent vertices.
	Given an instance of
	\textsc{Clique} consisting of a graph $G=(V,E)$ and an integer $t$ (we
assume that $t \ge 2$ as otherwise the instance is a trivial yes instance), we
	construct an instance of \EFDG as follows.  We
	insert a \emph{vertex agent} $a_v$ for each vertex $v\in V$ and an
	\emph{edge agent} $a_e$ for each edge $e\in E$. Each vertex agent $a_v$
	holds two resources in the initial allocation $\pi$: One resource~$r_v^1$
	which is valued as $3$ by itself and as $0$ by everyone else and one
	resource~$r_v^2$ which is valued as $0$ by itself and as $1$ by all edge
	agents that correspond to edges incident to~$v$. For each edge
	$e=\{u,v\}\in E$, agent~$a_e$ holds four resources in $\pi$: Three resources
	$r_e^1, r_e^2, r_e^3$ that are valued as $0$ by itself and as $1$ by the
	two vertex agents $a_u$ and $a_v$. Moreover, $a_e$ holds a resource $r_e^4$
	that it values as $1$ and that is valued as $3$ by  $a_u$ and
	$a_v$. We set $\lm={{t}\choose{2}}$ and $\km=3|E|-2{{t}\choose{2}}+t$. In
	summary, all vertex agents value
	their bundle as $3$ and the bundles of all incident edge agents as $6$
	(and all other bundles as $0$).
	Each edge agent values its own bundle as $1$ and the bundles of vertex
	agents corresponding to its endpoints as $1$ (and all other bundles as
	$0$).

	$(\Rightarrow)$ Assume we are given a clique $V'\subseteq V$ of size $t$ in
	$G$. For an edge $e=\{u,v\}\in E$, we write $e\subseteq V'$ if $u\in V'$
	and $v\in V'$. From this, we construct a solution to the constructed \EFDG
	instance by deleting for each vertex $v\in V'$ resource $r_v^2$, for each
	edge $e\subseteq V'$, resource $r_e^4$ and for all edges $e\not\subseteq
	V'$, resources $r_e^1, r_e^2,$ and $r_e^3$. As there exist
	${{t}\choose{2}}$ edges lying in the clique, the total number of resources
	we delete is $t+{{t}\choose{2}}+3(|E|-{{t}\choose{2}})=\km$. Moreover, the
	only resources that were deleted and that are valued as non-zero by the
	agent holding them are the ${{t}\choose{2}}$ resources $\{r_e^4 \mid
	e\subseteq V'\}$. As they are valued as one by $a_e$, the budget~$\lm$ is respected.
	In the
	resulting allocation, there is no envy of vertex agents towards edge agents
	corresponding to incident edges as resources of value three from the
	perspective of the vertex agent got deleted from the bundle of the edge
	agent. Concerning the utilities of the edge agents, all edge agents
	corresponding to edges outside the clique still value their bundle as one
	and are therefore not envious. Concerning edge agents corresponding to
	edges inside the clique, they value their bundle as zero. However, we also
	deleted all resources they value as non-zero from the bundle of other
	agents (namely, one resource from the bundle of each vertex agent
	corresponding to the edge's endpoint). Thus, the resulting allocation is
	envy-free.

	$(\Leftarrow)$ Assume that there exists a solution to the \EFDG instance which deletes a set~$R'\subseteq R$ of resources.
	Because every
	vertex agent envies each edge agent corresponding to an incident edge in
the initial allocation, for each edge $e\in E$, resources $r_e^1, r_e^2,$ and
	$r_e^3$ or resource~$r_e^4$ needs to be deleted. Let $\alpha$ be the number
	of edge agents for which resources $r_e^1, r_e^2,$ and $r_e^3$ are part of
	$R'$ and $\beta$ the number of edge agents for which resource $r_e^4$ is
	part of $R'$. It needs to hold that $\alpha+\beta\geq m$ and because of
	$\lm$ that $\beta\leq {{t}\choose{2}}$. Further, let $\gamma$ denote the
	number of vertex agents from whose bundle a resource got deleted. Notably,
	for each edge $e\in E$ for which $r_e^4$ is deleted, a resource from the
	vertex agents corresponding to the endpoints of $e$ needs to be deleted, as
	otherwise $a_e$ envies the respective vertex agent. Thus, it needs to hold
	that
	${{\gamma}\choose{2}}\geq \beta$. Thus, overall we have that:
	\begin{align*}
	& 3\alpha+\beta+\gamma\leq \km = 3|E|-2{{t}\choose{2}}+t  \\& \land
	\alpha+\beta \geq |E|   \land
	\beta \leq {{t}\choose{2}}   \land
	{{\gamma}\choose{2}}\geq \beta
	\end{align*}
	If $\gamma = t-x$ for $x\in \mathbb{N}$, then $\beta \le
\binom{t-x}{2}\le \binom{t}{2} - x$ (using $0 \le x \le t$ and $t\ge 2$ for the
last inequality).
	Therefore, we have $\alpha \ge |E|- \binom{t}{2} + x$.
	Consequently, we have $3 \alpha + \beta + \gamma \ge 3|E| - 3\binom{t}{2} +
3x + \binom{t}{2} -x + t-x = 3|E| - 2\binom{t}{2} + t + x$. As $3 \alpha + \beta
+ \gamma\leq \km=3|E| - 2\binom{t}{2} + t$,  this implies $x = 0$ and thus,
$\gamma = t $ and $\beta = \binom{t}{2}$.
	Therefore, there are $\binom{t}{2}$ edges (namely those for which $r_e^4$
	is deleted) whose endpoints are $t$ vertices, implying that these $t$
	vertices form a clique.

	\medskip

	For EF1, we slightly modify the construction: For each vertex $v\in V$,
	vertex agent $r_v$ now holds a resource $r_v^1$ that is valued as three by
	itself and as zero by everyone else and two resources $r_v^2$ and $r_v^3$
	that are valued as zero by itself and as two by all edge agents
	corresponding to edges incident to $v$. Turning the the edge agents, for
	each edge
	$e=\{u,v\}\in E$, agent~$a_e$ holds five resources in $\pi$: Three
	resources
	$r_e^1, r_e^2, r_e^3$ that are valued as $0$ by itself and as $1$ by the
	two vertex agents $a_u$ and $a_v$. Moreover, $a_e$ holds two resources
	$r_e^4$ and $r_e^5$
	that it values as $1$ and that is valued as $3$ by  $a_u$ and
	$a_v$.
\end{proof}

Lastly, we turn to the combined parameter $\km$ plus the maximum number of
agents $\wrr$ that value a resource as non-zero. As the number $n$ of agents
upper-bounds $\wrr$, as proven in
\Cref{thm:w-hard-n+k},
this parameter combination is not enough to achieve fixed-parameter
tractability. However, by adding the maximum utility value, tractability can be
regained:
\begin{restatable}{proposition}{kmw}\label{kmw}
	\EFDG and \EFODG are solvable in
$\mathcal{O}(({u^*}+1)^{{\km}^2\cdot
	(\wrr+1)}\cdot \km \cdot |\mathcal{I}|)$ time. \EFDGW and \EFODGW are
solvable in
$\mathcal{O}(({u^*}+1)^{{\lm}^2\cdot
	(\wrr+1)}\cdot \lm \cdot |\mathcal{I}|)$ time.
\end{restatable}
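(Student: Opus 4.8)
The plan is to solve the problem with a bounded search tree that branches over the \emph{value profiles} of the deleted resources, exploiting that deleting few resources can affect only few agents. The crucial observation I would build on is this: deleting a set~$D$ of at most~$\km$ resources leaves the bundle of every agent that owns no resource of~$D$ unchanged, and it lowers the own utility of an agent only if that agent owns a resource of~$D$ that it values non-zero. Hence the only agents that can become \emph{newly} envious are the at most~$\km$ owners of deleted resources, while every agent that is non-envious in~$\pi$ and owns no such resource stays non-envious (its own utility is unchanged and deletions only lower its perceived value of the other bundles). Moreover, every agent that is envious already in~$\pi$ must value some deleted resource, since otherwise its envy cannot be reduced; as each resource is valued non-zero by at most~$\wrr$ agents, there are at most~$\km\cdot\wrr$ such agents, so the set~$X$ of agents whose envy a solution must actively control has size at most~$\km(\wrr+1)$. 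If the set~$E$ of agents envious in~$\pi$ already exceeds~$\km\cdot\wrr$, I reject immediately.

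I would then run the tree as follows. Start from the empty deletion set with $X:=E$. At a node with current allocation~$\pi'$, accept if~$\pi'$ is envy-free, at most~$\km$ resources were deleted, and the welfare is at least~$\lp$; reject if the budget is exhausted. Otherwise pick an envious pair~$(a,a')$ with~$a\in X$. As in \Cref{vkm}, the only way to reduce this envy is to delete from~$\pi'(a')$ a resource that~$a$ values non-zero (deletions can never raise~$u_a(\pi'(a))$), so I branch over the value profile of the resource to delete, namely over a vector in~$\{0,\dots,u^*\}$ indexed by the agents of~$X$ together with the owner~$a'$. For each profile I delete an arbitrary concrete resource of~$\pi'(a')$ realising it (skipping the branch if none exists), add~$a'$ to~$X$ whenever it values the deleted resource, decrement the budget, and recurse. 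The recursion depth is at most~$\km$ and throughout~$|X|\le\km(\wrr+1)$, so each step has at most~$(u^*+1)^{\km(\wrr+1)}$ branches; this yields~$(u^*+1)^{(\km)^2(\wrr+1)}$ leaves, and with~$\mathcal{O}(\km\cdot|\mathcal{I}|)$ work per node to update bundles and test envy-freeness the claimed running time for \EFDG{} follows. For \EFODG{} the algorithm is identical except that the acceptance and branching tests use envy up to one good; evaluating these additionally requires, for each tracked bundle, the most valuable remaining resource from the envier's viewpoint, which I maintain alongside the deletions.

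For the welfare variants \EFDGW{} and \EFODGW{} I would first delete, for free, every resource that its owner values as zero: this leaves the utilitarian welfare unchanged and can only reduce envy. Afterwards every remaining resource is worth at least~$1$ to its owner, so each further deletion lowers the welfare by at least one unit; hence at most~$\lm$ resources are deleted and~$\lm$ takes over the role of~$\km$, giving the second running time. The step needing the most care, and the main obstacle, is the correctness of branching over profiles instead of over concrete resources: I must show that resources of~$\pi'(a')$ with the same profile on~$X\cup\{a'\}$ are interchangeable (such resources induce identical changes to every tracked agent's perceived values, to~$a'$'s own utility, and to the welfare, while every agent outside~$X\cup\{a'\}$ provably remains non-envious), and that~$X$ never outgrows~$\km(\wrr+1)$ along any root-to-leaf path. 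This exchange argument, together with the EF1-specific bookkeeping of the top remaining resource of each tracked bundle, is the technical heart of the proof.
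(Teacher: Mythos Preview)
Your overall strategy---bounding the set of relevant agents by~$\km(\wrr+1)$ and then branching over value profiles rather than concrete resources---matches the paper's. But there is a genuine gap in the exchange argument, and it is precisely the step you flag as the technical heart.

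You claim that two resources in~$\pi'(a')$ with the same profile on~$X\cup\{a'\}$ are interchangeable. This holds \emph{locally}: at the moment of deletion, every agent in~$X\cup\{a'\}$ sees the same change, and every agent outside stays non-envious. But it fails \emph{globally}. Suppose at a later step a new agent~$b$ enters~$X$ (because you delete a resource from~$b$'s bundle). Now~$b$'s envy toward~$a'$ depends on which concrete resource you removed from~$a'$ earlier---and two resources that agreed on the old~$X\cup\{a'\}$ may have different values for~$b$. Concretely: take agents~$a,b,c$ with~$a$ alone envious initially; $c$ holds~$r_1,r_2$ with $u_a(r_1)=u_a(r_2)=1$, $u_c(r_1)=u_c(r_2)=0$, but $u_b(r_1)=50$, $u_b(r_2)=0$. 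With~$X=\{a\}$ and $a'=c$, both~$r_1,r_2$ realise profile~$(1,0)$ on~$\{a,c\}$, so you delete one arbitrarily. If you pick~$r_2$ and later delete~$b$'s only resource (adding~$b$ to~$X$), then~$b$ now envies~$c$ through~$r_1$ and you need a third deletion; picking~$r_1$ would have avoided this. With~$\km=2$ your algorithm rejects a yes-instance. Branching additionally over which envious pair to process would rescue this particular example but does not fix the underlying issue and is not part of your description.

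The paper closes this gap by \emph{not} committing to concrete resources during the search. It guesses profiles abstractly, and whenever a new agent~$a'$ joins the tracked set~$E$, it retroactively branches on~$a'$'s value for every previously guessed profile (extending each stored profile by one coordinate). Only after the while-loop terminates does it look for concrete resources matching the now fully-extended profiles. This deferred realisation is exactly what makes the exchange argument sound: by the end every profile is specified on the entire final~$E$, so any matching resource is genuinely interchangeable. The retroactive extension is the missing idea in your proposal; once you add it, the rest of your outline (including the~$\lm$ reduction for the welfare variants) goes through.
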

\begin{proof}
For a subset of agents $A'\subseteq A$ and a resource $r$, we call
$(u_a(r))_{a\in A'}$ the utility profile of $r$ restricted to $A'$.
We start by considering the parameter combination $\km+u^*+\wrr$ and solve \EFDG
and \EFODG by a similar approach but start with \EFDG. Our
algorithm crucially relies on the simple observation that by deleting $\km$
resources the envy of at most
$\km\cdot \wrr$ agents can be resolved, as for each resource $r\in R$ only
$\wrr$
agents
value $r$ as non-zero. Assuming that we are given a valid solution where we
delete
at most $\km$ resources one after each other, let $A'\subseteq A$ be the set
of agents from which a resource got delete or that envy another agent at some
point. From our above observation it follows that $|A'|\leq
\km \cdot (\wrr+1)$. Now assuming that we would knew $A'$ we could guess for
each
of the $\km$ resources $r\in R$ to be delete the utility profile of $r$
restricted to $A'$ (there exist ${u^*}^{{\km}^2\cdot (\wrr+1)}$ guesses) and
subsequently delete a resource matching the utility profile. However,
unfortunately, we do not
know the set of agents $A'$ upfront which makes it necessary to update this set
of agents on the fly and complete the previous guesses concerning the utility
profiles of deleted resources. We do so in \Cref{alg}, where we bookmark the
set $E$ of agents that either envy another agent at some point during the
course of the algorithm or from whose bundle a resource got deleted (Lines
\ref{line:init-E} and \ref{line:extendE}). Note that only agents from~$E$ can envy
other agents after the deletion of resources, as all agents that are initially
envious in $\pi$ are part of~$E$ and after deleting some resources only the
agents to which at least one of the resources belong can become additionally
envious (and they are also part of $E$).
As long as there exists an agent $a\in E$
envying another agent $a'$, which implies that a resource from $a'$'s bundle
needs to be deleted, we guess the utility which agents from~$E$ have for the
resource to be deleted from $a'$'s bundle (Line \ref{line:guessp}) and adjust
the utility that these agents have for $a'$'s bundle (Line~\ref{line:incorpp}).
Moreover, if $a'$ was not already part of $E$, we add $a'$ to $E$ and guess the
value that $a'$ has for all ``deleted'' resources and adjust $a'$'s valuations
accordingly (Line~\ref{line:extendp}). Finally, in Lines
\ref{line:checkS}-\ref{line:checkE} we check whether there in fact exists a set
of at most $\km$ resources satisfying our guesses.

\begin{algorithm}[t!]
		\begin{algorithmic}[1]
		\State Let $E$ be the set of agents that are envious in the initial
allocation $\pi$. \label{line:init-E}
		\If{$|E|>\km\cdot \wrr$} \label{line:reject-E}
Reject
            \EndIf
            \State $P:=\{\}$
			\While{there is an agent $a\in E$ whose current value for its own
bundle is smaller than its current value for the bundle of another agent $a'\in
A$}
            \If{$a'\notin E$}
            \For{$(\tilde{p},\tilde{a})\in P$}
            \State Extend $\tilde{p}$ by guessing the value $\tilde{p}_{a'}\in
[0,u^*]$
that $a'$ has for the respective resource and
decrease the value that $a'$ has for $\tilde{a}$'s bundle by $\tilde{p}_{a'}$.
\label{line:extendp}
            \EndFor
            \State Set $E:=E\cup \{a'\}$ \label{line:extendE}
            \EndIf
			\State Guess the utility profile $p=(p_a)_{a\in E}\in [0,u^*]^{|E|}$
of
the resource to be deleted
from $a'$'s bundle restricted to $E$. \label{line:guessp}
\For{$a\in E$} Decrease the value that $a$ has for $a'$'s bundle by $p_a$.
\label{line:incorpp}
\EndFor
            \State Add $(p,a')$ to $P$.
            \State $\km=\km-1$.
            \If{$\km<0$} Reject \EndIf
			\EndWhile

			\For{$(\tilde{p},\tilde{a})\in P$} \label{line:checkS}
			\If{$\tilde{a}$ does not hold a resource that matches utility
profile $\tilde{p}$} Reject \label{line:rejectp}
\EndIf
        \State Delete a resource matching utility
profile $\tilde{p}$ from $\tilde{a}$'s
bundle. \label{line:checkE}
			\EndFor
			\If{the resulting allocation has utilitarian welfare at
least $\lp$} Accept. \Else { } Reject \EndIf

			\end{algorithmic}
		\caption{\label{alg}  Algorithm for \EFDG parameterized by
$\km+\wrr+u^*$}
	\end{algorithm}

Concerning the running time of the algorithm, note that $E$ has size at most
$\km \cdot (\wrr+1)$ and that we execute the while-loop at most $\km$ times.
Thus, in the end we guess at most $\km$ utility profiles of length at most $\km
\cdot (\wrr+1)$ containing entries from $[0,u^*]$. Overall, there exist
$({u^*}+1)^{{\km}^2\cdot
	(\wrr+1)}$ such
guesses.
For each such guess, we need to update the utility value of all agents from $E$
with $|E|\leq \km \cdot (\wrr+1)$
and check whether they envy another agent and in the end whether there exists
a subset of resources matching the guesses. All this can be done in
$\mathcal{O}(\km \cdot |\mathcal{I}|)$.

Let us now turn to proving the correctness of the algorithm. We start by
assuming that the algorithm returns Accept. In this case, we know that there
exists a subset $R'$ of resources  whose deletion ensures that no agent from $E$
is envious. As already observed above $E$ contains all agents that are
initially envious in $\pi$ and all agents that hold a resource from $R'$.
It follows that agents from $A\setminus E$ are not envious in $\pi$ and that
no resource from their bundle gets deleted. Thus, they are also not envious
in the resulting allocation.

Now assume that there exists a subset $R'\subseteq R$ of resources with
$|R'|\leq \km$ whose deletion makes $\pi$ envy-free. Then, deleting the
resources $R'$ can only resolve the envy of $\km \cdot \wrr$ agents. Thus, the
algorithm does not return Reject in \Cref{line:reject-E}. From $R'$ it is easy
to construct an accepting run of \Cref{alg}: With $E$ being the set of agents
that are envious in $\pi$, we start by selecting an agent
$a\in E$ that envies another agent $a'$.
There needs to exist a resource $r\in \pi(a')$ with $r\in R'$ as otherwise
after deleting $R'$, $a$ still envies $a'$. We now set the guess of
$(p_a)_{a\in E}$ to $(u_a(r))_{a\in E}$. If the utility profile $(p_a)_{a\in
E}$
gets extended at some later point because an agent $\tilde{a}$ is added to $E$,
we again set $p_{\tilde{a}}=u_{\tilde{a}}(r)$.
By repeating this procedure and also extending utility profiles in this way, as
$R'$ is a valid solution, the while-loop terminates after at most $\km$
iterations. Moreover, as $R'$ satisfies the current guesses, the algorithm also
does not reject in Line \ref{line:rejectp} and thus returns Accept.

By replacing ``envious'' by ``envious up to one good'' the described algorithm
with minor adjustments
and the proof also works for \EFODG.

Turning to the parameter combination $\lm+\wrr+u^*$, we start by deleting all
resources that are valued as zero by the agent holding it. Afterwards, as each
remaining resources is valued as non-zero by the agent holding it, at most
$\lm$ additional resources can be deleted. Thus, we can employ the algorithm
from above to solve \EFDGW and \EFODGW, where we now instead of decreasing
$\km$ by one for each resource we delete, we need to decrease $\lm$ by the
value assigned to the resource by the agent holding it in $\pi$.
\end{proof}
Notably, \Cref{pr:lmgneral} rules out the possibility that the FPT-algorithm
from above for \EFDGW and \EFODGW for parameters $\lm+u^*+\wrr$ can be extended
to
the respective general problem.

\subsection{Number/Welfare of Remaining Resources} \label{sub:kplp}
\begin{table}[t]
\caption{Overview of our results for parameters $\kp$ and $\lp$ and
parameter combinations involving them. }
\label{tab:results_3}
\begin{center}
\resizebox{0.9\textwidth}{!}{\begin{tabular}{l|l|l}
\EFDGNW & \EFODGN & \EFODGW\\  \midrule \midrule
NP-h. for $\kp=1/\lp=1$    & W[1]-h. wrt.
$\kp$ for $\wa=2$ & W[1]-h. wrt $\lp$ \\
 and $\wrr=4$ (Th. \ref{kp}) &  and 0/1 val. (Pr. \ref{di:okp}) & for
0/1 val. (Pr. \ref{di:okp}) \\ \midrule
FPT wrt. $\kp/\lp$ for 0/1-val. (Pr. \ref{di:kpp}) & \\ \midrule
& FPT wrt. $\kp+d$ (Pr. \ref{EF1:kpd}) &  FPT wrt. $\lp+d$ (Pr. \ref{EF1:kpd})

    \end{tabular}}
\end{center}
\end{table}

In this section, we examine the influence of the number/welfare $\kp$/$\lp$ of
resources that
remain after the deletion of resources (see \Cref{tab:results_3} for an
overview of our results). We
show that, generally speaking, these parameters are less powerful than the
respective dual parameters but we nevertheless identify two tractable cases. We
also show an interesting contrast between EF and EF1.

We start by proving a very strong intractability result for EF,
namely, that deciding
whether there is a solution that does not delete all resources is already
NP-hard:

\begin{restatable}{theorem}{akp}\label{kp}
	For unary valuations, \EFDGNW is NP-hard even
	if $\kp=1$/$\lp=1$ and
	$\wrr=4$.
\end{restatable}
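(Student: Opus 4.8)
The plan is to reduce from \textsc{Restricted Exact Cover by 3-Sets}, which is NP-hard and has only bounded-degree structure (each set has exactly three elements and each element lies in exactly three sets), making it well suited to the target bound $\wrr=4$. Given such an instance with element set $S$ and collection $\mathcal{C}$, I would build a $0/1$-valued instance (hence in particular unary, with $u^*=1$) in which the decision ``keep at least one resource and be envy-free'' is equivalent to ``an exact cover exists''. For every incidence of an element $s$ in a set $C$ I introduce an \emph{incidence resource} $r_{s,C}$, initially held by an \emph{element agent} $e_s$ who values each of its three incidences as $1$; keeping exactly one incidence of $e_s$ will encode that $s$ is covered, and keeping all three incidences of a fixed set will encode that this set is selected. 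Since every resource will be worth $1$ to its holder, keeping at least one resource coincides with having welfare at least $1$, so a single construction settles both $\kp=1$ (the number variant) and $\lp=1$ (the welfare variant).

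The heart of the construction is a small collection of local gadgets, all realizable with $0/1$-valuations. The basic building block is an \emph{implication gadget} enforcing ``keep $X$ $\Rightarrow$ keep $Y$'' in every envy-free sub-allocation: I let the holder of $Y$ value $X$ as $1$, so that deleting $Y$ while keeping $X$ leaves $Y$'s holder at value $0$ and envious of $X$'s holder. Using this I enforce (i) \emph{set-consistency} --- for each set $C$ a directed $3$-cycle of implications among its incidences, so keeping one incidence of $C$ forces keeping all three; (ii) an \emph{at-most-one} gadget --- a checker agent $g_s$ holding one resource it values as $1$ and valuing each incidence of $e_s$ as $1$, so that keeping two incidences of $e_s$ makes $g_s$ envious (and, conversely, any kept incidence forces $g_s$ to keep its own resource so as not to envy $e_s$); and (iii) \emph{coverage} --- a family of \emph{anchor} resources $\rho_1,\dots,\rho_L$, each on its own agent, chained by implications into one directed cycle (so keeping any anchor forces keeping all anchors), where each anchor is valued as $1$ by up to two element agents that then envy its holder unless they keep at least one incidence. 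Finally, I add one implication ``keep $z \Rightarrow$ keep some anchor'' for \emph{every} resource $z$, so that any non-empty kept set triggers the entire anchor cycle and hence the coverage demand on all elements.

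For correctness, the forward direction is routine: from an exact cover I keep all incidences of the selected sets together with all anchor and checker resources, and verify directly that no agent envies another (each $e_s$ then has value exactly $1$, matching what it sees in any anchor or neighbouring bundle it values). The backward direction is where the gadgets interlock: keeping any resource forces (via the per-resource implication) some anchor; the anchor cycle forces all anchors; each anchor forces its demanded element agents to attain value $\ge 1$, i.e.\ to cover; the at-most-one checkers cap each element at a single kept incidence; and set-consistency makes the kept incidences a disjoint union of fully selected sets. Together these say the kept incidences form an exact cover. Because every resource is worth $1$ to its holder, ``at least one resource remains'' and ``welfare at least $1$'' coincide throughout, so the same analysis proves NP-hardness for $\kp=1$ and for $\lp=1$.

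I expect the main obstacle to be the simultaneous bookkeeping needed to keep $\wrr=4$. Each incidence $r_{s,C}$ is valued by exactly four agents --- its holder $e_s$, its checker $g_s$, one further element agent coming from the set-consistency cycle, and the anchor holder it implies --- which is precisely why set-consistency must be realized as a $3$-cycle (so an incidence is valued only once inside its set) rather than by pairwise implications, and why each anchor may place its coverage demand on at most two element agents (so that $\mathcal{C}$ is handled by $\Theta(|S|)$ anchors, keeping the reduction polynomial). The second delicate point is ruling out spurious non-empty envy-free sub-allocations --- for instance keeping a single set, or only checker/anchor resources; this is exactly what the ``every resource implies an anchor'' entanglement prevents, by routing every non-empty solution through the full coverage cascade and thereby back to an exact cover.
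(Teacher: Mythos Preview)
Your approach cannot work, for a structural reason that lies upstream of the gadget details: you build a $0/1$-valued instance, but the paper itself shows (Proposition~\ref{di:kpp}) that \EFDGNW is fixed-parameter tractable with respect to~$\kp$ (respectively~$\lp$) for $0/1$-valuations, and hence polynomial-time solvable when $\kp=1$ or $\lp=1$. So no polynomial-time reduction from an NP-hard problem into the $0/1$ case with $\kp=1$/$\lp=1$ can be correct unless $\text{P}=\text{NP}$. The paper's proof avoids this by reducing from \textsc{Independent Set} on cubic graphs and deliberately using a \emph{large} unary value: the special agent values each edge resource as~$t$ (the target independent-set size), which is what forces it to retain at least~$t$ vertex resources. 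That asymmetry is essential and has no analogue in a $0/1$ construction.

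The concrete place where your argument breaks is the set-consistency gadget. Your implication primitive ``keep $X\Rightarrow$ keep $Y$'' relies on the premise that deleting $Y$ drops its holder to value~$0$; this holds when the holder owns a single positively-valued resource (as for your anchor and checker agents), but it fails for element agents, each of whom holds three incidences valued~$1$. Once the at-most-one and coverage gadgets pin every element agent to exactly one kept incidence, every such agent has value exactly~$1$ and sees value at most~$1$ in every other bundle (since every other agent also keeps at most one resource it values). Hence no envy arises regardless of \emph{which} incidence each element keeps; any function $f:S\to\mathcal{C}$ with $s\in f(s)$, together with all anchors and checkers, yields an envy-free sub-allocation. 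Such an $f$ always exists (each element lies in three sets), so your constructed instance is always a Yes-instance, independent of whether an exact cover exists. The backward direction therefore fails, consistently with the FPT result above.
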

\begin{proof}
	We reduce from \textsc{Independent Set} on cubic graphs, i.e., graphs
	where each vertex has exactly three neighbors. Given a cubic graph
	$G=(V,E=\{e_1,\dots, e_{q}\})$ and an integer $t$, the task is to decide whether $G$ contains an independent set of size~$t$. We assume without loss
	of generality that there are no isolated vertices in $G$ (as in this case
	we can add all isolated vertices to the independent set and decrease $t$
	accordingly) and we construct an instance
	of \EFDGNW as follows. We start by adding a \emph{special agent} $a^*$ and
	for each $i\in [q]$ an \emph{edge agent} $a_{e_i}$. Moreover, for each
	$v\in V$, we add a \emph{vertex resource} $r_v$, which is valued as~$1$ by
	the special agent and by all edge agents corresponding to edges incident to
	$v$ (note that as $G$ is cubic only three edges are incident to
	$v$). We allocate all vertex resources to the special agent. Furthermore,
	for $i\in [q]$, we add a resource $r_{e_i}$ which we allocate to $a_{e_i}$
	and which is valued as~$1$ by $a_{e_i}$ and by $a_{e_{i-1\bmod q}}$ and as
	$t$
	by the special agent. Lastly, we set $\kp=1$/$\lp=1$.

	($\Rightarrow$)
	Given an independent set $V'\subseteq V$ in $G$, we construct a solution to
	the constructed \EFDGNW instance by deleting the vertex resources
	corresponding to
	vertices which are not part of $V'$, i.e., $\{r_v\mid v\in V\setminus
	V'\}$. In the resulting allocation, the special agent has value $t$ for its
	bundle and the bundle of all edge agents. Each edge agent has value~$1$ for
	its bundle and at most value~$1$ for the bundle of any other edge agent.
Moreover, no edge agent  $a_{e_i}$ for some $i\in [q]$ can envy the special
agent, as this would imply that  $a_{e_i}$ has value at least two for the
special
	agent's bundle. From this it follows that both vertex resources
	corresponding to $e_i$'s endpoints are still part of the special agent's
	bundle, contradicting that $V'$ is an independent set.

	($\Leftarrow$) Assume we are given a subset $R'\subseteq R$ of resources
whose deletion results in the allocation
$\pi'$ which is a solution to  the
	constructed \EFDGNW instance. We claim that in $\pi'$ each edge agent
still needs to
	hold its resource and that the special agent needs to hold at least
	$t$ resources:
	If $\pi' $ assigns at least one resource~$r_v$ to the special agent, then
$\pi'$ needs to assign every edge agent~$a_e$ with~$ e$ incident to~$v$ the
resource~$r_e$, as otherwise $a_e$ envies $a^*$.
	Since $\pi '$ is not the empty allocation, it follows that $\pi'$ assigns
resource~$r_{e_i}$ to~$a_{e_i}$ for some~$i \in [q]$.
	Then $\pi'$ also needs to assign
	$a_{e_{i-1\bmod q}}$ resource~$r_{e_{i-1 \bmod q}}$, as
	otherwise $a_{e_{i-1\bmod q}}$ envies~$a_{e_i}$.
	By iterating this argument, it follows that all
edge agents need to
	hold their resource in $\pi'$. Moreover, as the special agents
values the
	resource which $a_{e_i}$ holds as $t$, the special agent needs to hold at
least
	$t$ vertex resources in $\pi'$.
    Thus, we have proven the claim.

	Let $V'=\{v\in V\mid r_v\in \pi'(a^*) \}$  be the set of vertices
	corresponding to all vertex resources that the special agent holds in
	$\pi'$. By our claim from above, it holds that $|V'|\geq t$. Then,
	$V'$ needs to form an independent set in $G$:
	if there exists
	an edge~$e_i\in E$ with both endpoints in $V'$, then the corresponding edge
agent
	$a_{e_i}$ values the special agent's bundle as two in $\pi'$ and is thus
	envious, a contradiction.
\end{proof}

Note that the above reduction crucially relies on the fact that the special
agent values some resources as $t$. In fact, if we have 0/1-valuations, we
can establish fixed-parameter tractability for the number and welfare (but
 not the general) problem by first deleting all agents with zero
welfare in the initial allocation and all resources they value as non-zero and
then making a case distinction based on whether $\kp/\lp$ is larger than
the remaining number of agents:

\begin{restatable}{proposition}{kpp}\label{di:kpp}
	\EFDGNW is solvable in
$|\mathcal{I}|+({\kp}^2)^{\mathcal{O}({\kp}^2)}\cdot
m^2$/$|\mathcal{I}|+({\lp}^2)^{\mathcal{O}({\lp}^2)}\cdot m^2$ time for
0/1-valuations.
\end{restatable}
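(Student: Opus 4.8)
The plan is to reduce, in both the number and the welfare variant, to an equivalent instance whose number~$n'$ of agents is bounded by $\kp$ (respectively $\lp$), and then to invoke the integer program of \Cref{nu} with $u^*=1$: on $n'\le \kp$ agents its running time is $((n')^2)^{\mathcal{O}((n')^2)}\cdot m^2 \le (\kp^2)^{\mathcal{O}(\kp^2)}\cdot m^2$, which matches the claimed bound (and analogously for $\lp$). So everything hinges on a preprocessing that discards irrelevant agents together with a case distinction on how the parameter compares to~$n'$.

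First I would carry out the preprocessing hinted at before the statement. Call an agent \emph{poor} if its bundle currently has value zero. In any envy-free solution a poor agent keeps value zero, so by envy-freeness it must value \emph{every} surviving resource as zero; hence all resources it values as one are deleted in every solution. I delete all such resources, recompute the poor agents, and iterate to a fixpoint. After this, every surviving agent has positive value, while every poor agent values all remaining resources as zero and therefore never envies anyone. For \EFDGW I additionally delete every resource that is valued zero by the agent holding it: this is without loss of generality, since removing such a resource leaves the holder's utility (and thus the total welfare) unchanged while only decreasing other agents' envy. Consequently the poor agents end up with empty bundles and can be discarded entirely, leaving an instance on the $n'$ surviving agents in which every resource is valued one by its owner.

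Next comes the case distinction on whether the parameter falls below~$n'$. If $\lp<n'$ (respectively $\kp<n'$), I claim the instance is a yes-instance: give each surviving agent a single self-valued resource and delete everything else. Because valuations are $0/1$ and every bundle then has size one, each agent values its own bundle as one and any other bundle as at most one, so the allocation is envy-free; its welfare (and its number of kept resources) equals $n'>\lp$ (respectively $n'>\kp$), meeting the bound. If instead $\lp\ge n'$ (respectively $\kp\ge n'$), then $n'$ is bounded by the parameter and I run the program of \Cref{nu} on the surviving agents, which yields the stated running time.

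The step I expect to be the main obstacle is the kept-count in the number variant. There I cannot delete the resources valued zero by their owner, so the poor agents retain bundles consisting of resources that are valued one \emph{only} by surviving agents, and these resources may well have to be kept in order to reach the target $\kp$; hence the poor agents cannot simply be discarded as in the welfare variant. The key point to make rigorous is that, since poor agents never envy, the only constraints they contribute are that no surviving agent envies them; for fixed utilities of the (at most~$\kp$) surviving agents these constraints decouple over the individual poor bundles and can be optimized separately, so that the poor bundles' contribution to the count can be folded into the program of \Cref{nu} without pushing the number of constraints beyond $\mathcal{O}(\kp^2)$. Establishing this decoupling cleanly, and verifying that the two preprocessing steps preserve the answer, is where the real work lies.
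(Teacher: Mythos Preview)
Your overall plan coincides with the paper's: preprocess by repeatedly removing every agent whose bundle has value zero together with all resources that agent values as one, split on whether $\kp$ (resp.\ $\lp$) is below the number~$n'$ of surviving agents, output yes via a one-resource-per-survivor allocation in the first case, and invoke the ILP of \Cref{nu} on the reduced instance in the second. For the welfare variant your extra step of first deleting owner-zero resources, so that poor agents end up with empty bundles and can be discarded outright, is sound and is exactly what the paper does (implicitly).

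The subtlety you single out for the number variant is real, and it is fair to say the paper's proof is terse on precisely this point: it just runs \Cref{nu} ``in the altered instance'' with $n'\le\kp$ agents, without spelling out what happens to the resources sitting in poor bundles, even though in \EFDGN those resources may have to be kept to meet~$\kp$ while simultaneously being valued~$1$ by surviving agents. However, your proposed resolution is not yet an argument. The observation that ``for fixed surviving utilities the poor bundles decouple'' is correct but does not by itself yield an ILP with $\mathcal{O}(\kp^2)$ rows: the straightforward formulation carries one envy constraint per (surviving agent, poor agent) pair, i.e.\ $n'\cdot p$ rows with $p$ the number of poor agents, and $p$ is not bounded in terms of~$\kp$. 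Since the Eisenbrand--Weismantel bound behind \Cref{nu} scales with the number of rows, this is fatal. The decoupling instead suggests guessing the surviving-utility profile $(v_a)_{a\in A'}$ and then optimising each poor bundle independently, but that is $m^{n'}$ guesses---only XP. To close the gap you need either an argument that the relevant surviving utilities lie in a range of size $f(\kp)$, or an aggregation of the surviving-versus-poor constraints down to $\mathcal{O}(\kp)$ rows; as written, your proposal (like the paper's proof) leaves this step open.
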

\begin{proof}
        We solve the problem by applying the following algorithm and start by
considering the problem for $\kp$:
		As long as there exists an agent which does not hold a resource it
values as~$1$, we
		delete this agent together with all resources the agent values as~$1$
(as
it is clear that this agent will have utility zero in the final allocation and
thus that all resources the agent values as~$1$ need to be deleted as well).
		Thus, we can assume that every agent holds at least one resource in
the initial allocation in the altered instance.  Let $A'$ be the agents
remaining in the altered instance and $n'=|A'|$ the number of remaining
agents. Now, we make a case distinction based on the relationship between $n'$
and $\kp$.
		If $\kp < n'$,
		then we can easily construct a solution
with $n'>\kp$ remaining resources by deleting all resources already
deleted above and for all agents $a\in A'$ all but one resource that $a$
values as~$1$ from $a$'s bundle. In the resulting allocation $\pi'$ all agents
from
$A\setminus A'$ value their bundle and the bundle of everyone else as~$0$ and
all
agents from $A'$ value their bundle as~$1$ and the bundle of everyone, which
consists of at most one resource, as at most~$1$. Otherwise, if $\kp\geq n'$, we
employ the FPT-algorithm for the number of agents from \Cref{nu} in the altered
instance to solve the problem in $({\kp}^2)^{\mathcal{O}({\kp}^2)}\cdot m^2$
time.

For $\lp$, the same approach where we replace $\kp$ by $\lp$ works, as for the
second part of the algorithm in case that $n'<\lp$ the constructed allocation
has also utilitarian social welfare at $\lp$ as each agent holds at least one
resource it values as~$1$.
\end{proof}

In contrast to this result for EF, \EFODGNW parameterized by
$\kp/\lp$
is W[1]-hard for
0/1-valuations. We use
a reduction similar to the one used in \Cref{kp} but reduce from
\textsc{Independent Set} and delete all resources held by edge agents and set
$\kp=t/\lp=t$.

\begin{restatable}{proposition}{okp}\label{di:okp}
	Parameterized by $\kp$, \EFODGN is
	W[1]-hard for 0/1-valuations, even if $\wa=2$ and the
	initial allocation has zero utilitarian welfare.
	Parameterized by $\lp$, \EFODGW is
	W[1]-hard for 0/1-valuations.
\end{restatable}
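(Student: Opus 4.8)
The plan is to reduce from \textsc{Independent Set}, which is W[1]-hard parameterized by the size~$t$ of the sought independent set, reusing the gadget structure of \Cref{kp} but stripping the resources from the edge agents. Given a graph $G=(V,E)$ and an integer~$t$, I introduce one \emph{special agent}~$a^*$ together with one \emph{edge agent}~$a_e$ for every $e\in E$. The only resources are \emph{vertex resources}: for each $v\in V$ a resource~$r_v$, all of which I place in $a^*$'s bundle, while every edge agent starts with an empty bundle (the initial allocation is still complete since every resource is assigned to~$a^*$). Each edge agent~$a_e$ with $e=\{u,v\}$ values exactly $r_u$ and $r_v$ as~$1$ and everything else as~$0$, so $\wa=2$ holds. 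I set $\kp:=t$ for \EFODGN and $\lp:=t$ for \EFODGW.

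The crux of both reductions is one EF1 analysis that I would carry out first. Since an edge agent~$a_e$ owns nothing, after any deletion it envies~$a^*$ up to one good exactly when removing a single resource from~$a^*$'s retained bundle still leaves it a resource of value~$1$; because $a_e$ only values $r_u$ and $r_v$, this happens precisely when both $r_u$ and $r_v$ are kept. Hence the deletion yields an EF1 allocation iff the set of retained vertex resources contains no edge of~$G$, i.e.\ iff $S:=\{v : r_v\text{ kept}\}$ is an independent set. I would then check that the remaining agent pairs cause no envy: $a^*$ values every other (empty) bundle as~$0$ and so does not envy up to one good, and edge agents value one another's empty bundles as~$0$. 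Thus the family of EF1-achievable retained sets is exactly the family of independent sets of~$G$, which drives the correctness argument in both directions.

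For \EFODGN I let $a^*$ value every~$r_v$ as~$0$; then the initial allocation has zero utilitarian welfare and the welfare bound is vacuous, so a solution just keeps $\ge\kp=t$ vertex resources while staying EF1. By the analysis above this is possible iff $G$ has an independent set of size~$\ge t$, equivalently of size exactly~$t$; since $\kp=t$ and $\wa=2$ and the zero-welfare property hold by construction, this establishes the first claim. For \EFODGW I instead let $a^*$ value every~$r_v$ as~$1$, so that the utilitarian welfare of any retained EF1 allocation equals the number of kept vertex resources; now $\km=m$ is unconstrained and a feasible solution must retain an independent set of total value $\ge\lp=t$, again equivalent to an independent set of size~$t$. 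Both directions are then immediate: an independent set $V'$ of size~$t$ yields the deletion keeping precisely $\{r_v : v\in V'\}$, and conversely any feasible deletion retains an independent set of the required size, from which a size-$t$ independent set is obtained by taking any $t$-element subset.

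The step requiring the most care is the up-to-one-good bookkeeping: verifying that keeping a single endpoint of an edge already dispels that edge agent's envy (removing that one resource drops its value for~$a^*$ to~$0$) whereas keeping both endpoints does not, together with confirming that the empty-bundle agents are never envied. A minor point is the handling of isolated vertices: a resource $r_v$ for an isolated~$v$ is constrained by no edge agent, so retaining it is always EF1, but as isolated vertices lie in every maximum independent set, the identity ``maximum EF1-retainable set $=$ maximum independent set'' is unaffected and no preprocessing of~$G$ is needed.
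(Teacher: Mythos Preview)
Your proposal is correct and follows essentially the same approach as the paper: the paper also reduces from \textsc{Independent Set}, introduces a special agent holding all vertex resources plus one edge agent per edge valuing its two endpoints, sets $\kp=t$ (respectively $\lp=t$ with $a^*$ valuing each resource as~$1$), and argues that the retained resources form an EF1 allocation iff they correspond to an independent set. Your write-up is slightly more explicit about the up-to-one-good bookkeeping and isolated vertices, but the construction and correctness argument are identical.
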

\begin{proof}
	We start by proving the statement for \EFODGN by a reduction from
	\textsc{Independent Set} which is W[1]-hard parameterized by the size of
	the
	solution~\cite{DBLP:journals/tcs/DowneyF95}. Given a graph $G=(V,E)$ and an integer $t$, we construct an
	instance of \EFODGN as follows. For each edge $e\in E$, we add an
	\emph{edge
		agent} $a_e$. Moreover, we introduce a \emph{special agent} $a^*$.
		Lastly, we
	add for each vertex $v\in V$ a \emph{vertex resource}~$r_v$ that is valued
	as
	one by all edge agents corresponding to edges that are incident to $v$. In
	the
	initial allocation $\pi$, we allocate all vertex resources to the special
	agent. We set $\kp=t$. As the special agent values all resources as~$0$ and
	each edge agent values only the two vertex resources corresponding to its
	endpoints as~$1$, each agent values at most two resources as~$1$ and the initial
	allocation has zero utilitarian welfare.

	($\Rightarrow$)
	Given an independent set $V'\subseteq V$ in $G$, we construct a solution to
	the constructed \EFDGN instance by deleting the vertex resources
	corresponding to
	vertices which are not part of $V'$, i.e., $\{r_v\mid v\in V\setminus
	V'\}$. In the resulting allocation, the special agent holds $t$ resources.
	As each edge agent only values resources corresponding to one of
	its endpoints as~$1$ and $V'$ is an independent set, all edge agents only
value
	one
	resource from the bundle of the special agent in the resulting allocation as~$1$.
	Thus, the resulting allocation is envy-free up to one good.

	($\Leftarrow$) Let $R'\subseteq R$ be the set of at least $t$ resources
	allocated to the special agent in an envy-free allocation $\pi'$. We claim
	that
	the set of vertices $V'$ corresponding to the vertex resources from $R'$
	form
	an independent set in $G$. If there exist $v,v'\in V'$ with $e=\{v,v'\}\in
	E$,
	then agent $a_e$ values two resources from $R'$ as~$1$ and thus envies the
	special agent up to one good in $\pi'$, a contradiction.

	For \EFODGW, the reduction from above needs to be only slightly adapted by
	making the special agent value each vertex resources as~$1$ and setting
$\lp=t$.

\end{proof}

It is an intriguing open question whether \EFODGNW
is in XP with respect to $\kp/\lp$ or para-NP-hard. Note that an XP algorithm
for
unary valuations
would be
quite surprising, as this would imply that for the parameter $\kp/\lp$, for
unary valuations, EF1 is easier than EF, while, for 0/1-valuations,
EF is easier than EF1.

Lastly, by making a case distinction
based on whether more or less than $\kp/\lp$ agents hold a resource in the
initial
allocation, we show that \EFODGNW
is fixed-parameter tractable with respect to $\kp+d/\lp+d$ for arbitrary (unary
or binary) valuations:

\begin{restatable}{proposition}{EFOkpd}\label{EF1:kpd}
\EFODGNW is solvable in $\mathcal{O}(|\mathcal{I}|+2^{\kp\cdot d}
\cdot m \cdot  n^2)/ \mathcal{O}(|\mathcal{I}|+2^{\lp\cdot d}
\cdot m \cdot n^2)$ time.
\end{restatable}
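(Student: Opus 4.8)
The plan is to make a case distinction on the number of agents that hold a resource in the initial allocation, exploiting one crucial observation: \emph{any allocation in which every agent holds at most one resource is automatically EF1}. Indeed, if $|\pi'(a')|\le 1$ for every agent $a'$, then for each agent $a$ and each $r\in\pi'(a')$ we have $u_a(\pi'(a')\setminus\{r\})=u_a(\emptyset)=0\le u_a(\pi'(a))$, so $a$ never envies $a'$ up to one good (and agents with empty bundles are envied by nobody). This single fact turns the ``many agents'' case into a trivial Yes-instance.

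I would first handle \EFODGN with parameter $\kp+d$. Let $n_R$ be the number of agents holding at least one resource in $\pi$, computable in $\mathcal{O}(|\mathcal{I}|)$ time. If $n_R\ge\kp$, I return Yes: keeping one arbitrary resource for $\kp$ of these agents and deleting everything else leaves exactly $\kp$ resources (hence deletes exactly $\km=m-\kp$, within budget), and by the observation above the result is EF1, so it is a valid solution. If instead $n_R<\kp$, then since each agent holds at most $d$ resources and $\pi$ is complete, $m\le n_R\cdot d<\kp\cdot d$. Thus I can brute-force: iterate over all $2^m<2^{\kp\cdot d}$ subsets $S\subseteq R$ of resources to keep and, for each, check in $\mathcal{O}(m\cdot n^2)$ time whether $|S|\ge\kp$ and whether $\pi$ restricted to $S$ is EF1, accepting iff some subset works.

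For \EFODGW with parameter $\lp+d$ I would proceed analogously after one preprocessing step: delete every resource that its holder values as $0$, obtaining $\pi_0$. This neither changes the welfare nor hurts EF1 (it can only reduce envy), and it is legitimate precisely because the welfare variant imposes no bound on the number of deletions ($\km=m$). Now every remaining resource is valued at least $1$ by its holder. Let $n'_R$ be the number of agents still holding a resource. If $n'_R\ge\lp$, keeping one resource for $\lp$ of them yields welfare at least $\lp$ (each kept resource contributes at least $1$) and is EF1, so I return Yes. Otherwise $m_0\le n'_R\cdot d<\lp\cdot d$, and I brute-force over all $2^{m_0}<2^{\lp\cdot d}$ subsets, checking EF1 and welfare $\ge\lp$ for each. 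In both variants the running time is dominated by the enumeration, giving $\mathcal{O}(|\mathcal{I}|+2^{\kp\cdot d}\cdot m\cdot n^2)$ and $\mathcal{O}(|\mathcal{I}|+2^{\lp\cdot d}\cdot m\cdot n^2)$, respectively.

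The conceptual heart of the argument—and the only place where EF1 (rather than EF) is essential—is the observation that one-resource-per-agent allocations are always EF1; this is what makes the ``many agents'' case a trivial Yes-instance and is exactly why the analogous statement fails for EF (cf.\ the $\kp=1$ hardness of \Cref{kp}). The remaining parts—verifying that the constructed allocation respects the deletion budget and welfare bound, and bounding $m$ by $\kp\cdot d$ (resp.\ $m_0$ by $\lp\cdot d$) in the complementary case—are routine and I do not expect them to pose difficulties. The one subtlety to flag is that the preprocessing deletion of zero-valued resources is only valid for the welfare variant because $\km$ is unbounded there, which is why this clean reduction does not directly transfer to the general problem \EFODG.
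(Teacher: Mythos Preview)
Your proposal is correct and follows essentially the same approach as the paper: the same case distinction on the number of agents holding a resource, the same one-resource-per-agent construction for the ``many agents'' case, the same preprocessing step (deleting resources valued zero by their holder) for the welfare variant, and the same brute-force enumeration in the complementary case. The only cosmetic difference is that you keep one resource for exactly $\kp$ (resp.\ $\lp$) agents while the paper keeps one resource for \emph{every} agent that owns one, but both constructions work equally well.
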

\begin{proof}
We start by considering the parameter combination $\kp+d$.
 Let $n^*$ be the number of agents which hold at least one resource in the
 initial
allocation $\pi$.
 If $\kp \le n^*$, then deleting all but one resource of every agent results
clearly in an
envy-free up to one good allocation with $n^*\ge \kp$ allocated resources.
 Otherwise,
if $\kp > n^*$, then $\kp\cdot d > n^* \cdot d>m$ is larger than the
total number of resources. In this case, we iterate over all subsets
of at least~$\kp$ resources and check whether they form an envy-free up to one
good allocation. As there exist at most $2^m<2^{\kp\cdot d}$ subsets of
resources, the stated running time follows.

We now turn to the parameter combination $\lp+d$. Here we start by deleting all
resources that are valued as~$0$ by the agent holding it. Subsequently, let
$n^*$ be the number of agents that hold at least one resource (which they need
to
value as non-zero). Replacing $\kp$ by $\lp$, the rest of the algorithm works
as above, where the allocation constructed in the case $\kp \le n^*$ has
utilitarian welfare at least $\kp$, as each agent only holds resources it
values as non-zero.
\end{proof}

While our picture for the other parameters is nearly complete, there exist
various open questions for parameters $\kp$ and $\lp$. For instance, the
complexity of our problems parameterized by $n+\kp$, $n+\lp$, $d+\kp$, $d+\lp$, 
$\wa+\kp$, and $\wa+\lp$ is open.

\section{Conclusion}
We studied the complexity of making an initial allocation
envy-free by donating a subset of resources satisfying certain constraints.
While we have shown that this problem is NP-hard even under quite severe
restrictions on the input, resorting to parameterized complexity theory, we
identified numerous tractable cases. Moreover, we
discovered several interesting contrasts between seemingly closely related
parameters and problem variants.
For future work, it would be possible to change
the considered fairness criterion and to consider, for example,
proportionality or maximin share.
Furthermore, instead of or in addition to allowing that some resources are donated,
one could also analyze reallocating some resources,
which seems to be a harder setting
as many of our intractability results (such as Theorem \ref{un:np})
can be converted to this setting by adding some dummy agents to ``receive'' donated resources.
Lastly, from a more practical point
of view, it would be interesting to experimentally measure how many resources would need to be
donated to make allocations envy-free that are computed using an algorithm or
heuristic which tries to compute an allocation with few envy.

\section*{Acknowledgments}
 NB was supported by the DFG project MaMu (NI 369/19) and by the DFG project ComSoc-MPMS (NI 369/22).
 KH was supported by the DFG Research Training Group 2434 ``Facets of Complexity'' and by the DFG project FPTinP (NI 369/16).
 DK acknowledges the support of the OP VVV MEYS funded project CZ.02.1.01/0.0/0.0/16\_019/0000765 ``Research Center for Informatics''.
 JL was supported by the DFG project “AFFA” (BR~5207/1; NI~369/15) and the Ministry of Education, Singapore, under its Academic Research Fund Tier 2 (MOE2019-T2-1-045).

\end{document}